\renewcommand*{\arraystretch}{1.1}
\newcommand*{\mline}[1]{%
\begingroup
    \renewcommand*{\arraystretch}{1}%
   \begin{tabular}[c]{@{}>{\raggedright\arraybackslash}p{2cm}@{}}#1\end{tabular}%
  \endgroup
}
\newtheorem{theorem}{Theorem}[section]
\newtheorem{corollary}{Corollary}[section]
\newtheorem{lemma}{Lemma}[section]
\newtheorem{definition}{Definition}[section]
\newcommand{\picturewidth}{0.85 \linewidth}
\title{Fairly Dividing Non-identical Random Items: Just Sample or Match}
\author{
    Aprup Kale\textsuperscript{\rm 1},
    Rucha Kulkarni\textsuperscript{\rm 1},
    Navya Garg\textsuperscript{\rm 2}\\
}
\begin{document}

\maketitle

\begin{abstract}
We study the question of existence and fast computation of fair and efficient allocations of indivisible resources among agents with additive valuations. As such allocations may not exist for arbitrary instances, we ask if they exist for \textit{typical} or \textit{random} instances, meaning when the utility values of agents for the resources are drawn from certain distributions. If such allocations exist with high probability for typical instances, and furthermore if they can be computed efficiently, this would imply that we could quickly resolve a real world resource allocation scenario in a fair and efficient manner with high probability. This implication has made this setting popular and well studied in fair resource allocation.

In this paper, we extend the previously studied formal models of this problem to non-identical items. We assume that every item is associated with a distribution $\mathcal{U}_j$, and every agent's utility value for the item is drawn independently from $\mathcal{U}_j$. We show that envy-free fair and maximum social welfare efficient allocations exist with high probability in the asymptotic setting, meaning when the number of agents $n$ and items $m$ are large. Further we show that when $m=O(n\log n),$ then by only sampling $O(\log m)$ or $O((\log m)^2)$ utility values per item instead of all the $n,$ we can compute these allocations in $\Tilde{O}(m)$ time. Finally, we simulate our algorithms on randomly generated instances and show that even for small instances, we suffer small multiplicative losses in the fairness and efficiency guarantees even for small sized instances, and converge to fully optimal guarantees quickly.

\end{abstract}

%


\section{Introduction}
Fair and efficient resource allocation is a ubiquitous problem, for example, when allocating desirable items like assets among companies or public houses among residents, or disposing undesirable items like rent among roommates or online requests among servers. Due to its widespread prevalence, it has been extensively studied in algorithmic game theory, economics and computational social choice (see \mbox{\cite{Amanatidis23survey}} for a survey of the most recent problems and directions).

In this paper we will focus on arguably the strongest known mathematical properties to measure fairness and efficiency, namely, \textit{envy-freeness} and \textit{maximum social welfare}. An allocation is called envy-free if every agent prefers their own share more than any other agent's share. While the social welfare of an allocation is the sum of values of all the agents for their own shares; maximizing the social welfare essentially results in assigning items to the agents who value them the most. An allocation that is both envy-free and has the maximum social welfare would be highly desirable in any allocation scenario. We therefore study the natural question,

\textit{Do envy-free and maximum social welfare allocations exist and can they be efficiently computed?}

There are simple examples to see that such solutions may not always exist. For instance, suppose we have $2$ items and two agents $\{a_1,a_2\}.$ Suppose $a_1$ has value $1$ for both the items and $a_2$ has value $10$ for both. Then assigning both the items to $a_2$ is the only maximum social welfare allocation, and assigning $1$ item to each agent are the only envy-free allocations. In the light of this impossibility, research has been done by relaxing the problem in some way. Some of the resulting, not necessarily mutually exclusive popular directions are, (i) study more tractable notions for fairness or efficiency, like envy-freeness up to one item (proposed by \mbox{\cite{Budish11}}), or (ii) consider a random rather than arbitrary instance (initiated in \mbox{\cite{Dickerson_Goldman_Karp_Procaccia_Sandholm_2014})}, or (iii) characterize subclasses of instances where such solutions always exist, for example by limiting agent utilities (e.g., only allow $0$ or $1$ utilities) and/or the number of agents (e.g., only two agents). 

In this paper, we focus on studying \textit{typical} or \textit{random} instances \mbox{\cite{Dickerson_Goldman_Karp_Procaccia_Sandholm_2014}}, where agent utilities for the resources are drawn from certain distributions rather than being arbitrary. We assume that we are given oracle access to a distribution $\mathcal{U}_j$ (that has minimal technical assumptions described in Section \mbox{\ref{sec:intro-model}}) corresponding to every item $j$. Every agent’s value for item $j$ is sampled independently from its distribution $\mathcal{U}_j$. This model reasonably captures most real world allocation scenarios. For example, while distributing assets among companies, every company's value for a particular asset would be similar, and can be assumed to belong to some natural distribution with bounded upper and lower values. The same is true for resources like public housing, rent and time to serve online server requests. If we can guarantee, with high probability, the existence of envy-free and maximum welfare allocations for random instances drawn in this way, and further give efficient algorithms to find these, then we would essentially be able to quickly resolve any random real world fair allocation problem with high probability. This implication is the strong reason to pursue this direction, thus the motivation for this paper as well.

This setting has been well-studied in several novel works, beginning with \mbox{\cite{Dickerson_Goldman_Karp_Procaccia_Sandholm_2014}}. They initiated this direction with the setting of identical goods, where the distributions for the utilities of all the items were identical. Their work was followed up by \mbox{\cite{Manurangsi_2020}} and \mbox{\cite{manurangsi2020closing}} for the same setting and closely related distribution classes. \mbox{\cite{manurangsi2025asymptoticfairdivisionchores}} then studied the problem with identical undesirable resources, termed chores, instead of goods, and showed stronger positive results, proving that the chores case was easier than goods. \mbox{\cite{Bai_2022}} extended this research to the asymmetric agents case, where each agent was associated with a distinct distribution and drew utilities for all the items independently from their own distribution. They also additionally explore Pareto optimality, a notion of economic efficiency. Finally, \mbox{\cite{Manurangsi2025weighted}} explore another general setting with weighted agents where each agent has a scalar value called a weight assigned to them, and intuitively, fair allocations must assign shares proportional to agent weights. In terms of efficient computation, \mbox{\cite{manurangsi2020closing}} show that an envy-free allocation can be obtained with a round robin algorithm for $m= \Omega(n\log n/\log \log n).$ \mbox{\cite{li2024roundrobin}} perform a detailed analysis of the round robin algorithm, and show that it has complexity $O(mn + m\log m)$ when agent preferences are uniform, identical and random.

We extend this theory in two directions. First we initiate the study of this model to the setting where the items, that is their utility distributions, are allowed to be non-identical. And second, we explore the problem of fine-grained efficient computation of solutions, that is, algorithms that have faster than $O(mn)$ or higher run time. We show the existence of fair and efficient allocations asymptotically, meaning when $m$ and $n$ are large. Furthermore, when $m$ is $O(n\log n),$ we show that sampling only $O(\log m)$ utility values per item suffices to compute these allocations. This reduction in time from $O(mn)$ to $\Tilde{O}(m)$ is highly relevant in large allocation scenarios like server requests or online marketplaces. We also simulate our algorithms by randomly generating instances for various classes of distributions, and show that our theoretical asymptotic guarantees are achieved in practice as well for most instances. We now formally describe our model and contribution in detail.

\subsection{Our Model}\label{sec:intro-model}

\noindent
\paragraph{Fair allocation problem.} We consider the problem of dividing $m$ indivisible and non-identical items among $n$ agents, according to specified fairness and efficiency notions (defined shortly). We assume the items are \textit{goods,} meaning they are non-negatively valued by all the agents. Formally, let $N = \{1, \dots, n\}$ be the set of agents and $M = \{1, \dots, m\}$ be the set of goods. For each $(i, j) \in (N \times M)$, $u_i(j) \in [0, 1]$ denotes the \textit{utility} of good $j$ for agent $i$. The utilities are assumed to be \textit{additive}, that is, for all $M' \subseteq M$, $u_i(M') = \sum_{j \in M'} u_i(j)$. The solution of the fair allocation problem is an allocation or a partition of $M$ into $n$ disjoint subsets. Formally, we denote an allocation by $A = (A_1, \dots, A_n),$ where $A_i$ is the bundle or subset of $M$ allocated to agent $i.$ Thus $\cup_{i} A_i = M$ and $A_i \cap A_{i'} = \phi$ for all $i$ and $i' \in N$. In this paper, we will study the setting where each utility $u_i(j)$ is sampled randomly from a known distribution $\mathcal{U}_j,$ described below. 

\noindent
\paragraph{Distributions of utilities.} For each good $j \in M$, we assume that there exists a distribution $\mathcal{U}_j$ supported on $[0,1]$, such that $u_i(j) \sim \mathcal{U}_j$ independently for each $i \in N$. Computationally, we assume a black box for each $\mathcal{U}_j$ that allows us to sample once from the distribution in $O(1)$ time. The distributions have the following properties. Each $\mathcal{U}_j$ is \textit{non-atomic}, i.e., $\Pr_{X \sim \mathcal{U}_j}[X = x] = 0$ for all $x \in [0, 1]$. Further, every $\mathcal{U}_j$ is $(\alpha_j, \beta_j)$-PDF bounded, meaning the density function of each $\mathcal{U}_j$ has value between $\alpha_j$ and $\beta_j$ at every point in its support, for some constants $\alpha_j, \beta_j>0$. Here the $\alpha_j$ and $\beta_j$ may be distinct for each $j \in M$.

\noindent
\paragraph{Fairness and efficiency notions.} We study the existence and computability of the following three types of allocations. An allocation $A$ is \textit{envy-free} if for all pairs of agents $i, i' \in N$, it holds that $u_i(A_i) \geq u_i(A_{i'})$. The envy of any agent $i$ towards another agent $i'$ is $\max \{0, u_i(A_{i'}) - u_i(A_i)\}$. An allocation $A$ is \textit{proportional} if for all agents $i \in N$, it holds that $u_i(A_i) \geq \nicefrac{u_i(M)}{n}$. Finally, an allocation $A$ has \textit{maximum social welfare} if it maximizes the sum of agent values for their own shares. Formally, if $\mathcal{A}$ is the set of all allocations of $M,$ then $A\in argmax_{\mathcal{A}} \sum_i u_i(A_i)$. We also consider equivalent approximation notions for these. An allocation $A$ is \textit{c-approximate envy-free} ($c < 1$) if for all pairs of agents $i, i' \in N$, it holds that $u_i(A_i) \geq c \cdot u_i(A_{i'})$. Finally, an allocation $A$ has \textit{c-approximate maximum social welfare} ($c < 1$) if some allocation $A'$ has maximum social welfare and $\sum_i u_i(A_i) \geq c\ \cdot \sum_i u_i(A'_i)$.

\section{Our Contribution and Techniques}\label{sec:intro-contribution}
Our results can be separated into two parts, based on whether the number of items is large or small compared to the number of agents. The large items case is solved by (i) sampling a few agents and their utilities for each item and (ii) assigning it to the agent who values it the most among these samples. The small items case is significantly more challenging, and solved by reducing the problem to that of finding what are called perfect $r$-matchings and right-saturated $r$-matchings, both being generalizations of perfect matchings in bipartite graphs. 

\subsection{Large Number of items}
We show four results for the case when $n = O(m/\log m),$ the first of which proves the existence of fair and efficient allocations. Then three results, each with slightly more specific but reasonable assumptions on the distribution model,  show that sampling $O(\log m)$ or $O((\log m)^2)$ utility values yields near optimal guarantees in much faster $\tilde{O}(m)$ time.

\begin{restatable}{theorem}{eflargegoods}
\label{thm:Goods-Large-m}
When the number of goods is large, that is, $n = O(m/\log m),$ then, with probability $(1-\nicefrac{1}{m})$, an envy-free and maximum social welfare allocation exists as $m\rightarrow \infty.$
\end{restatable}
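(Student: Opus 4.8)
The plan is to show that the maximum social welfare allocation is, with high probability, itself envy-free when goods are plentiful. First I would observe that maximizing social welfare under additive utilities decomposes item-by-item: the welfare-optimal allocation simply assigns each good $j$ to the agent with the highest sampled utility $u_i(j)$. So I would define $A$ by giving each good to its arg-max agent, which immediately guarantees maximum social welfare; the entire burden then falls on proving envy-freeness of this specific allocation.

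For envy-freeness I would fix a pair of agents $i, i'$ and bound the probability that $i$ envies $i'$, then take a union bound over all $O(n^2)$ ordered pairs. The key structural fact is that $i$ receives exactly those goods where $i$'s draw beats all $n-1$ competitors, so agent $i$'s own bundle value concentrates around a predictable quantity while the set $A_{i'}$ of goods won by $i'$ is, from $i$'s perspective, a collection of goods where $i'$ drew high and $i$ (conditionally) did not. Since each good is allocated to each agent with probability $1/n$ by symmetry, agent $i$ expects to win about $m/n$ goods, and with $m = \Omega(n\log n)$ this count concentrates well above zero and away from its mean via Chernoff-type bounds. I would compute the expected value $\mathbb{E}[u_i(A_i)]$ using the $(\alpha_j,\beta_j)$-PDF-boundedness to control the distribution of the maximum order statistic on the goods $i$ wins, and separately bound $\mathbb{E}[u_i(A_{i'})]$, showing the latter is dominated by the former because on goods won by $i'$ agent $i$'s conditional value is the value of a draw that lost to the maximum.

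The hard part will be handling the dependence introduced by conditioning on \emph{which} goods each agent wins: the events ``$i$ wins good $j$'' across different $j$ are independent (since draws are independent across goods), but ``$i$ wins $j$'' and ``$i'$ wins $j$'' are mutually exclusive and thus negatively correlated, and $i$'s value for goods in $A_{i'}$ is a sum over a random index set whose membership depends on the very same draws. I would manage this by exploiting the non-atomicity (ties occur with probability zero, so the arg-max is well-defined) and the per-item independence across agents to argue that, conditioned on the allocation pattern, the relevant order statistics are controlled uniformly over items by the common PDF bounds, with the $\alpha_j,\beta_j$ being constants that enter only as fixed multiplicative factors. The concentration for each fixed pair should give a tail that is $e^{-\Omega(m/n)} = e^{-\Omega(\log n)} = n^{-\Omega(1)}$, which must be pushed below $1/(n^2 m)$ so that the union bound over pairs leaves total failure probability at most $1/m$; I expect this is exactly where the regime $n = O(m/\log m)$ is needed, since it forces $m/n = \Omega(\log m)$ and makes each pairwise tail small enough to survive the union bound as $m \to \infty$.
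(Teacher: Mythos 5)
Your proposal is correct and follows essentially the same route as the paper: assign each good to its arg-max agent (welfare-optimal by construction), use the gap between the conditional expectations of a draw given it wins versus loses the maximum, apply Chernoff bounds to the per-good indicator-weighted sums (independent across goods), and union-bound over agent pairs with a per-pair tail of $1/m^3$. The only cosmetic difference is that you invoke the $(\alpha_j,\beta_j)$-PDF bounds to control the order statistics, whereas the paper's argument for this theorem needs only non-atomicity and continuity to establish the expectation gap.
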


To prove this, we show that when all utility information is available (every agent evaluates every good) then simply assigning each good to the agent who values it the most yields an allocation that has the maximum social welfare by construction and is also envy-free. The proof builds on a gap in expectations on utilities conditioned on whether they are the highest out of $n$ i.i.d. draws or not - for i.i.d. draws from a continuous, non-atomic utility distribution, the expected utility of an agent who wins a maximum is strictly higher than that of an agent who loses. This utility gap is exploited to show that, in expectation, every agent values their own bundle more than any other agent's. To move from expectation to a high probability guarantee for any pair of agents $i, i'$ ($\Pr \left[ i \text{ envies } i' \right] \leq \nicefrac{1}{m^3}$), we use concentration bounds (Lemma \ref{lem:Chernoff}), both for an agent $i$'s valuation of their bundle $A_i$ as well as the bundle $A_{i'}$ of agent $i'$. A union bound over all pairs of agents completes the argument and gives us the desired result ($\Pr \left[ \text{envy free allocation} \right] \geq 1 -\nicefrac{1}{m}$). Notably, this result makes no assumptions on the specific form of utility distributions beyond non-atomicity and independence.

The proof is almost the same as for the case when the distributions are identical \cite{Dickerson_Goldman_Karp_Procaccia_Sandholm_2014}. The key difference is in aggregating the expected utility of an agent's bundle from their expected utility from a single good, as there are non-identical goods now. This aggregation has to be done for the agent's value for their own bundle, as well as for another agent's bundle, for analyzing envy. 

\subsection{Sampling Algorithms and Experiments}

As the previous result holds when $m\rightarrow \infty,$ thus for large inputs, it is natural to study efficient computation in a more fine-grained manner, and obtain asymptotically more efficient algorithms. Note that there is a natural lower bound of $O(m)$ for both time and space, as every item needs to be processed once to decide which agent it gets assigned to. We explore if we can match this lower bound. Towards this, we assume an online model for the large number of goods case, where the items appear one by one, and must be allocated immediately and irrevocably upon arrival. For this setting, we consider the question, \textit{`Can we obtain envy-free and maximum social welfare allocations by sampling a few utility values, and in almost O(m) time?'} We give the following positive answers. 
\begin{restatable}{theorem}{samplingdiscrete}
\label{thm:sampling-discrete}
    Suppose the utility of each good for every agent is drawn from a discrete distribution with finite support in $[0,1]$. Let \( m = \Omega \left(n \log{n} \right) \) and the number of sampled agents for each good \( s = \frac{2 \log{m}}{\alpha_{min}} \). Then, with high probability, a maximum social welfare and envy-free online allocation of goods exists as \( n \to \infty \).
\end{restatable}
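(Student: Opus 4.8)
The plan is to analyze the natural online sampling rule: when good $j$ arrives, draw $s = 2\log m / \alpha_{min}$ agents uniformly at random, query their utilities, and irrevocably assign $j$ to the sampled agent with the largest value (breaking ties arbitrarily). Since each good is handled using only its own $s$ samples, the rule is genuinely online. I would establish the two required properties separately and then combine them by a union bound: (i) with probability at least $1 - 1/m$ the allocation attains maximum social welfare, and (ii) with probability at least $1 - 1/\mathrm{poly}(n)$ it is envy-free, following the expectation-gap-plus-concentration template of Theorem~\ref{thm:Goods-Large-m}.

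For maximum social welfare, the key observation is that because every value lies in the finite support, the largest attainable welfare is exactly $\sum_j v_j^{\max}$, where $v_j^{\max}$ is the top support point of $\mathcal{U}_j$; this ceiling is reached precisely when every good is assigned to an agent realizing $v_j^{\max}$. As the mass $\mathcal{U}_j$ places on $v_j^{\max}$ is at least $\alpha_{min}$, the probability that none of the $s$ sampled agents realizes $v_j^{\max}$ is at most $(1-\alpha_{min})^s \le e^{-\alpha_{min} s} = m^{-2}$. A union bound over the $m$ goods then shows that every good hits its support maximum, so the allocation is welfare-optimal with probability at least $1-1/m$.

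For envy-freeness I would fix an ordered pair $i, i'$ and decompose their bundle values good-by-good. Writing $X_j = u_i(j)\,\mathbf{1}[i \text{ wins } j]$ and $Y_j = u_i(j)\,\mathbf{1}[i' \text{ wins } j]$, agent $i$'s values for its own and for $i'$'s bundle are $u_i(A_i) = \sum_j X_j$ and $u_i(A_{i'}) = \sum_j Y_j$, and because the samples and values are independent across goods, the summands are independent. By symmetry each agent wins each good with probability $1/n$, and conditioning on a win gives $E[X_j] = \tfrac1n E[\max_s \mathcal{U}_j]$, which is within $m^{-2}$ of $\tfrac1n v_j^{\max}$, whereas when another agent wins the good, agent $i$'s value is at most the unconditional mean $\mu_j$, so $E[Y_j] \le \mu_j/n$. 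Summing over goods and using $v_j^{\max} - \mu_j \ge \delta > 0$ (non-degeneracy of the supports) yields an expected gap $E[u_i(A_i)] - E[u_i(A_{i'})] = \Omega(m/n) = \Omega(\log n)$ since $m = \Omega(n\log n)$. I would then apply the Chernoff bound of Lemma~\ref{lem:Chernoff} to each sum separately, each concentrating around its mean of order $\log n$, to show that with probability $1 - 1/\mathrm{poly}(n)$ both land within a small constant fraction of their means, so the gap survives and $i$ does not envy $i'$; a union bound over all $\binom{n}{2}$ ordered pairs closes the argument.

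The main obstacle I anticipate is the concentration step, not the expectation gap. Each bundle is a sum of $m$ independent terms of range $1$, and a naive deviation bound of order $\sqrt{m}$ would dwarf the $\Theta(\log n)$ gap; the argument only works because each $X_j$ (resp. $Y_j$) is nonzero with probability $1/n$, so the effective number of contributing terms is $\Theta(m/n) = \Theta(\log n)$ and a multiplicative Chernoff bound yields a failure probability of $n^{-c}$. Making $c \ge 3$ so that the union bound over $\binom{n}{2}$ pairs closes forces the hidden constant in $m = \Omega(n\log n)$ to be taken large enough, and requires carefully bookkeeping the conditional expectation $E[u_i(j)\mid i' \text{ wins } j]$ in the case where $i$ is itself among the sampled losers. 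I would also isolate the mild non-degeneracy hypothesis $v_j^{\max} - \mu_j \ge \delta$ as the precise condition guaranteeing a positive per-good gap, and hence envy-freeness.
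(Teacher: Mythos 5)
Your proposal is correct and follows essentially the same route as the paper's proof: sample $s$ agents per good, show each good reaches its top support point with probability $1-m^{-2}$ (union-bounded over goods for exact maximum welfare), then run the win-indicator decomposition $X_j, Y_j$ with the expectation gap $\mathbb{E}[u_i(j)\mid i \text{ wins}] > \mathbb{E}[u_i(j)\mid i' \text{ wins}]$, multiplicative Chernoff on sums with mean $\Theta(m/n)=\Omega(\log n)$, and a union bound over pairs. The only cosmetic difference is that the paper fixes $1\in\Omega_j$ and derives $\mu_j<1$ from the pmf bound $\beta_j<1$, whereas you work with a general top support point $v_j^{\max}$ and isolate the uniform gap $v_j^{\max}-\mu_j\ge\delta$ explicitly — a hypothesis the paper also needs implicitly for its $\epsilon$ to stay bounded away from zero.
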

\begin{restatable}{theorem}{samplingcontinuousconstant}
\label{thm:sampling-continuous-constant}
For the allocation problem as described in Section \ref{sec:intro-model}, let \( m = \Omega \left(n \log{n} \right) \) and the number of sampled agents for each good \( s = \frac{20 \log{m}}{\alpha_{min}} \). Then, with high probability, a $0.9$-approx. maximum social welfare and $0.8$-approx. envy-free online allocation of goods exists as \( n \to \infty \).
\end{restatable}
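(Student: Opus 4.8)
The plan is to analyze the online sampling rule---assign each arriving good $j$ to the agent attaining $\max_{i \in S_j} u_i(j)$ among the $s$ sampled agents $S_j$---by first controlling expectations and then invoking concentration, in the spirit of Theorem~\ref{thm:Goods-Large-m}, but now carefully tracking the constant-factor slack that the continuous, $(\alpha_j,\beta_j)$-PDF-bounded distributions force. Writing $[a_j,b_j]$ for the support of $\mathcal{U}_j$ and $\mu_j$ for its mean, I would first record the elementary order-statistic estimates $\mu_j \le E[\max_s^{(j)}] \le E[\max_n^{(j)}] \le b_j$, together with $E[b_j-\max_k^{(j)}] \le \tfrac{1}{\alpha_j(k+1)}$ (from $1-F_j(x)\ge \alpha_j(b_j-x)$) and $b_j \ge 1/\beta_j$ (since the support has length at least $1/\beta_j$). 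Substituting $s=\tfrac{20\log m}{\alpha_{\min}}$, these give a per-good welfare deficit $E[\max_n^{(j)}]-E[\max_s^{(j)}] \le \tfrac{1}{\alpha_j(s+1)}=O(1/\log m)$, while $E[\max_n^{(j)}] \ge 1/\beta_{\max}-o(1)$, so the ratio of expected algorithmic social welfare to the expected optimum is $1-o(1)$.

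For envy I would compute the two relevant bundle expectations. Since each agent is sampled for $j$ with probability $s/n$ and, conditioned on that, wins $j$ with probability $1/s$ by symmetry and non-atomicity, $E[u_i(A_i)] = \tfrac{1}{n}\sum_j E[\max_s^{(j)}]$. The key step is the upper bound $E[u_i(A_{i'})] \le \tfrac{1}{n}\sum_j \mu_j$: conditioned on $i'$ winning good $j$, agent $i$'s value $u_i(j)$ is either a fresh independent draw (when $i\notin S_j$), contributing exactly $\mu_j$, or the value of a non-winning sampled agent (when $i\in S_j$), whose conditional expectation is $\tfrac{s\mu_j-E[\max_s^{(j)}]}{s-1}\le \mu_j$; a symmetry argument over the $s-1$ losers shows the identity of the winner is irrelevant. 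Combining, $E[u_i(A_i)] \ge \tfrac1n\sum_j\mu_j \ge E[u_i(A_{i'})]$, i.e.\ the allocation is envy-free in expectation, and both expectations are $\Theta(m/n)=\Theta(\log n)$ because each $\mu_j$ is bounded below by a positive constant.

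It then remains to move from expectations to high probability. Both social-welfare sums are sums of $m$ independent $[0,1]$-valued terms with means $\Theta(m)$, so they concentrate with vanishing relative error and the welfare ratio exceeds $0.9$ with probability $1-o(1)$. For envy, $u_i(A_i)$ and $u_i(A_{i'})$ are sparse sums (each nonzero on only $\approx m/n$ goods) of independent $[0,1]$ terms with means $\Theta(\log m)$, so the multiplicative Chernoff bound of Lemma~\ref{lem:Chernoff} gives $u_i(A_i)\ge (1-\delta)E[u_i(A_i)]$ and $u_i(A_{i'})\le(1+\delta)E[u_i(A_{i'})]$ with failure probability $\le 1/m^3$ once the means exceed $c\log m$ for a suitable $c$; the hypotheses $m=\Omega(n\log n)$ and the constant $20$ in $s$ are calibrated so that $\delta\le 0.1$ is achievable, whence $u_i(A_i)/u_i(A_{i'}) \ge (1-\delta)/(1+\delta) \ge 0.9/1.1 > 0.8$. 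A union bound over the at most $n^2$ ordered agent pairs (for envy) and the single welfare event then yields the claim as $n\to\infty$.

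The main obstacle I anticipate is twofold. First, proving $E[u_i(A_{i'})]\le \tfrac1n\sum_j\mu_j$ cleanly, since it hinges on the symmetry argument that a good won by $i'$ looks, from $i$'s viewpoint, no better than an average draw---this is exactly where independence across goods and non-atomicity enter. Second, and more delicately, tuning the constants so the stated $0.8$ and $0.9$ thresholds hold: unlike the discrete case (Theorem~\ref{thm:sampling-discrete}), when $m/n=\Theta(\log n)$ the bundle means are only $\Theta(\log m)$, so the relative concentration error $\delta$ is a constant rather than $o(1)$, which is precisely the source of the multiplicative loss and forces approximate rather than exact guarantees.
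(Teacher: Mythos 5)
Your proposal is correct in substance, but it takes a genuinely different route from the paper's proof in both halves. For the welfare guarantee, the paper argues via a fixed threshold: since each $\mathcal{U}_j$ has density at least $\alpha_j$, a single sampled agent values good $j$ at $0.9$ or more with probability at least $0.1\alpha_j$, so with $s = 20\log m/\alpha_{\min}$ every good is, with high probability, assigned to an agent valuing it at least $0.9$, which gives $0.9$-approximate MSW directly (this implicitly requires $[0.9,1]$ to lie in the support of every $\mathcal{U}_j$). You instead bound the order-statistic deficit $\mathbb{E}[\max_n^{(j)}]-\mathbb{E}[\max_s^{(j)}]\le \tfrac{1}{\alpha_j(s+1)}=O(1/\log m)$ against $\mathbb{E}[\max_n^{(j)}]\ge 1/\beta_{\max}-o(1)$ and concentrate the two sums, which yields the stronger $(1-o(1))$-approximation and does not need the support to reach $1$. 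For envy, the paper lower-bounds $\mathbb{E}[X_i^j]$ by $0.9/n$ (reusing the threshold event), upper-bounds $\mathbb{E}[Y_{ii'}^j]$ trivially by $1/n$, and splits the envy event around $0.85\,m/n$; your loser-average symmetry argument instead shows $\mathbb{E}[u_i(j)\mid j\in A_{i'}]\le\mu_j\le \mathbb{E}[u_i(j)\mid j\in A_i]$, i.e.\ exact envy-freeness in expectation, after which a two-sided Chernoff with $\delta=0.1$ and $\tfrac{1-\delta}{1+\delta}>0.8$ gives the claim. Your expectation computation is cleaner (it avoids the paper's slightly informal conditioning on a high-probability event), at the small cost of needing each $\mu_j$ bounded below by a positive constant, which does follow from $(\alpha_j,\beta_j)$-boundedness. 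Two points to tighten: apply the upper-tail Chernoff to a dominating variable with mean exactly $\tfrac{1}{n}\sum_j\mu_j$ (as the paper does) rather than to $u_i(A_{i'})$ itself, so the exponent is explicit; and note that, exactly as in the paper's own proof, the constants require the implicit constant in $m=\Omega(n\log n)$ to be large enough (and effectively $\log m=\Theta(\log n)$) so that the bundle means exceed the $c\log m$ needed for the $1/m^3$ failure probability per pair.
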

\begin{restatable}{theorem}{samplingcontinuous}
\label{thm:sampling-continuous}
Let \( m = \Omega (n \log{n}) \) and the number of sampled agents for each good \( s = \frac{2 ( \log{m})^2}{\alpha_{min}} \). If, for all goods $j$, $\mathbb{E}[ \mathcal{U}_j] \leq \mu$ for some constant $\mu < 1$, then, with high probability, a $( 1 - \frac{1}{\log {n}})$-approx. maximum social welfare and envy-free online allocation of goods exists as \( n \to \infty \).
\end{restatable}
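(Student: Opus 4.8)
The plan is to analyze the same online ``sample-and-assign-to-the-maximum'' algorithm used for Theorems~\ref{thm:sampling-discrete} and~\ref{thm:sampling-continuous-constant}: when good $j$ arrives, sample a set $S_j$ of $s$ agents and irrevocably assign $j$ to the sampled agent valuing it most. I would establish the two guarantees separately, but both rest on one per-good estimate. Since each $\mathcal{U}_j$ is supported on $[0,1]$ with density at least $\alpha_j$, we have $\Pr_{X\sim\mathcal{U}_j}[X\le 1-t]\le 1-\alpha_j t$, so the expected value of the winner of good $j$ satisfies $\mathbb{E}\bigl[\max_{i\in S_j}u_i(j)\bigr]\ge 1-\frac{1}{\alpha_j s}$. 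With $s=\frac{2(\log m)^2}{\alpha_{\min}}$ this lower bound is $1-\frac{1}{2(\log m)^2}$, uniformly over $j$; this is the quantitative improvement over the $O(\log m)$ sampling of Theorem~\ref{thm:sampling-continuous-constant}, where the winner's value was only within $O(1/\log m)$ of the top.

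For the social welfare guarantee I would compare the algorithm's welfare $\sum_j\max_{i\in S_j}u_i(j)$ to the optimum $\sum_j\max_{i\in N}u_i(j)$. Each optimal term is at most $1$, while each algorithmic term has expectation at least $1-\frac{1}{2(\log m)^2}$, which is at least $1-\frac{1}{\log n}$ because $m=\Omega(n\log n)$ forces $\log m\ge\log n$. Both are sums of $m$ independent $[0,1]$-valued terms, so Lemma~\ref{lem:Chernoff} confines each within an additive $O(\sqrt{m\log m})=o(m/\log n)$ of its mean; dividing the concentrated algorithmic welfare by the trivial upper bound $m$ on the optimum then yields the $(1-\frac{1}{\log n})$ ratio with high probability.

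For envy-freeness I would bound the expected utility agent $i$ derives from its own bundle against that from another agent's bundle. By symmetry every good lands in a fixed agent's bundle with probability $\frac1n$, so $\mathbb{E}[u_i(A_i)]=\frac1n\sum_j\mathbb{E}\bigl[\max_{i'\in S_j}u_{i'}(j)\bigr]\ge\frac{m}{n}\bigl(1-\frac{1}{2(\log m)^2}\bigr)$. For the cross term, conditioned on $i'$ winning good $j$, agent $i$ is outside $S_j$ with probability $1-O(s/n)$, in which case $u_i(j)$ is a fresh draw of mean $\mathbb{E}[\mathcal{U}_j]\le\mu$, and the complementary event adds at most $O(s/n)$; hence $\mathbb{E}[u_i(A_{i'})]\le\frac{m}{n}\bigl(\mu+O(s/n)\bigr)$. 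This is precisely where the hypothesis $\mu<1$ enters: it guarantees a uniform per-good margin, producing an $\Omega\bigl(\frac{m}{n}(1-\mu)\bigr)$ additive gap between the expected own and cross utilities, whereas without it some goods could have mean approaching $1$ and the gap could vanish.

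The main obstacle is converting this expected gap into an exact, per-pair guarantee. Each bundle holds only $\Theta(m/n)=\Theta(\log n)$ goods, so a naive additive Hoeffding bound over all $m$ goods gives fluctuations of order $\sqrt{m\log m}$, dwarfing the $\Theta(m/n)$ gap. I would instead exploit sparsity: good $j$ contributes to $u_i(A_i)$ or $u_i(A_{i'})$ only with probability $\frac1n$, so each bundle's value has variance $O(m/n)$, and a Bernstein/multiplicative form of Lemma~\ref{lem:Chernoff} confines both the own and cross utilities to within $\epsilon\frac{m}{n}$ of their means with failure probability $n^{-\Theta(\epsilon^2 c)}$, where $c$ is the constant in $m\ge c\,n\log n$. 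Choosing $\epsilon<\frac{1-\mu}{2}$ keeps the gap strictly positive after concentration, and taking $c$ large enough drives the failure probability below $n^{-3}$; a union bound over the $O(n^2)$ ordered pairs then certifies exact envy-freeness as $n\to\infty$. The delicate points are handling the conditioning in the cross term and checking that this large-$c$ choice remains consistent with the social-welfare argument.
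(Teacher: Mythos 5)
Your proposal follows essentially the same route as the paper: the same sample-and-assign-to-the-maximum online rule, a per-good guarantee that the winner's value is within $o(1)$ of $1$, an expected gap between own-bundle and cross-bundle utilities driven by the hypothesis $\mu<1$ (via the observation that a good's value to a non-winner has conditional mean at most $\mu_j$), multiplicative Chernoff concentration at the scale $\mathbb{E}[u_i(A_i)]=\Theta(m/n)$ enabled by $m=\Omega(n\log n)$, and a union bound over the $O(n^2)$ pairs. The only (immaterial) differences are that you bound the winner's value in expectation by $1-\tfrac{1}{\alpha_j s}$ and compare welfare to the trivial bound $m$ with concentration, whereas the paper shows directly that w.h.p.\ every good simultaneously goes to an agent valuing it above the threshold $1-\tfrac{1}{\log n}$, which yields the welfare ratio pointwise.
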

In the first model, we assume that each utility distribution is discrete, supported on finitely many values in $\left[ 0, 1 \right]$. The allocation mechanism samples $\Theta (\log{m})$ agents for each good and assigns the good to the agent with the highest utility among those sampled. Using $(\alpha, \beta)$ boundedness, we ensure that with high probability, each good is assigned to an agent who values it at exactly $1$. In the second model, now as the distributions are continuous, we can only guarantee a value close to $1$ with high probability and not exactly $1$ from each item. Thus we obtain approximate envy-freeness and approximate maximum social welfare. Specifically, we ensure that each good is most likely assigned to an agent who values it above a fixed threshold of $0.9$, thereby yielding approximate maximum social welfare. In both the results, the analysis of envy proceeds similarly as the offline case. The only straightforward modifications are that as we do not have full knowledge of all utilities, we now quantify the utility gap between agents using the additional assumptions on the distributions.

In the final result, we observe the following additional property that our model implies: the continuous utility distributions have an upper bound on their mean (strictly less than $1$). We prove that, as long as the means are bound by a fixed constant, a sampling-based allocation still yields fairness and efficiency up to vanishing error ($1 - \frac{1}{\log{n}}$), provided the sample size is poly-logarithmic. The main technical idea to prove this theorem is to relate the maximum utility observed in a sample to the upper bound on the mean. With a sufficiently large sample, the maximum observed utility exceeds the mean with high probability. This creates a utility gap between the winner's expected utility and that of any other agent. By controlling this gap and using concentration bounds as in the previous proofs, we ensure that envy is unlikely to occur even in this general setting. This result highlights the robustness of sampling: even when the distributions are unknown or highly skewed, approximate fairness and efficiency can be achieved as the number of goods grows.

\noindent
\textit{Empirical Results.} We empirically evaluate both the non-sampling and sampling algorithms on finite instances to understand how quickly the asymptotic behavior manifests. The details of our set up are in Section \mbox{\ref{sec:empirical-results}}. We could conclude that even in practice, large item counts rapidly eliminate envy, and sampling preserves both fairness and efficiency.



\subsection{Small Number of Items}
For the complementary case for smaller $m$, we show that results similar to those obtained with identical items generalize to the non-identical case as well. For the case of fairly allocating identical \textit{goods}, \mbox{\cite{Manurangsi_2020}} show that an envy-free allocation of \textit{goods} only exists when $m = \Omega(n \log n/\log \log n),$ and only exists for smaller $m$ when $m \ge 2n$ and is divisible by $n$.

\begin{restatable}{theorem}{efsmallgoods} 
\label{thm:Goods-Small-m}
When $m \ge 5n$ and $m$ is divisible by $n,$ then with high probability, an envy-free allocation exists as $n\rightarrow \infty$.
\end{restatable}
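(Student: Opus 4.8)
The plan is to mirror the identical-goods argument of \cite{Manurangsi_2020} by reducing envy-free existence to the existence of a combinatorial object --- here a perfect $r$-matching in a suitably defined bipartite graph --- and then showing that this matching exists with high probability via a Hall-type expansion condition. Set $k = m/n$; since $m \ge 5n$ and $n \mid m$, we have $k \ge 5$, and the target is an allocation giving each agent exactly $k$ goods. First I would set up a bipartite graph $G$ with agents on the left and goods on the right, placing an edge $(i,j)$ precisely when good $j$ is \emph{acceptable} to agent $i$ under a thresholding rule designed so that \emph{any} right-saturated $r$-matching of $G$ (each agent receiving exactly $k$ goods, every good assigned once) certifies envy-freeness. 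Concretely, I would organize the value range into levels and require each agent's bundle to contain one acceptable good per level, forcing bundle values across agents into overlapping bands; the extra slack from $k \ge 5$ (rather than the $2$ of the identical case) is what I expect to pay for the per-good variation in the distributions $\mathcal{U}_j$.

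Second, I would verify that such a matching exists with high probability. By the deficiency version of Hall's theorem generalized to $r$-matchings, a perfect $r$-matching fails only if some subset $S$ of agents collectively finds fewer than $k\,|S|$ acceptable goods. For a fixed $S$, the number of acceptable goods is a sum of independent indicators whose means are controlled by the $(\alpha_j,\beta_j)$-boundedness; I would lower-bound this count in expectation and apply the Chernoff bound of Lemma \ref{lem:Chernoff} to make the probability that $S$ is deficient exponentially small in $|S|$. A union bound over all subsets --- the standard expansion argument --- then shows no deficient set exists with probability $1 - o(1)$ as $n \to \infty$.

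The main obstacle is twofold. First, designing the acceptability and level structure so that feasibility of the $r$-matching genuinely implies \emph{full} envy-freeness rather than only an approximate or one-good-relaxed guarantee; the difficulty is that an agent's value for another agent's bundle is correlated with the construction of their own, so the thresholds must be chosen to decouple these dependencies --- this is exactly what the matching formulation buys us over a direct greedy or round-robin analysis. Second, making the union bound over the exponentially many agent subsets succeed: the per-subset deficiency probability must beat the $\binom{n}{|S|}$ subset count at every size, which is where the constant $5$ enters and why the identical-item threshold of $2n$ must be loosened. Once the expansion survives the union bound, combining it with the envy-certifying edge rule yields the theorem; I would close by confirming the probability is $1 - o(1)$ and noting that divisibility $n \mid m$ is used only to make each bundle size exactly $k$.
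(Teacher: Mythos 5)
Your skeleton --- a balanced allocation obtained as a perfect $r$-matching in a thresholded random bipartite graph, with existence of the matching guaranteed by an Erd\H{o}s--R\'enyi-type argument --- matches the paper's. But the step you yourself flag as ``the main obstacle'' is left unresolved, and it is precisely the step where the paper's proof does something different. You propose to make the matching itself certify envy-freeness via a level structure on the value range, so that feasibility alone forces all bundle values into overlapping bands. This cannot work as stated: the acceptability rule for an edge $(i',j)$ only constrains $u_{i'}(j)$, so knowing that agent $i'$ receives one acceptable good per level says nothing about $u_i(j)$ for those goods, and the cross-valuation $u_i(A_{i'})$ remains uncontrolled --- exactly the correlation you name but then dismiss with ``this is what the matching formulation buys us.'' The paper instead uses a single high threshold $\tau_j = 1 - \frac{1.1\log n}{\alpha_j n}$ only to guarantee $u_i(A_i) \geq x\tau$ with $\tau$ close to $1$, and then handles the other bundle with a \emph{separate} probabilistic argument: for a second threshold $\tau' = 1 - n^{-(4/x+\varepsilon)}$, with high probability at most $x/2$ of the goods in $A_{i'}$ satisfy $u_i(j) > \tau'$, so $u_i(A_{i'}) \leq \frac{x}{2}\tau' + \frac{x}{2}$, which falls below $x\tau$ for large $n$. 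That two-threshold comparison is the missing idea, and without it (or a concrete substitute) your argument does not reach full envy-freeness.

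Two smaller points. First, your account of where $m \geq 5n$ enters is wrong: the constant does not come from beating $\binom{n}{|S|}$ in a union bound over deficient sets (the paper gets matching existence by citing Lemma \ref{lem:Perfect-r-Matching}, which reduces a perfect $x$-matching to an ordinary perfect matching by duplicating left vertices and invoking Lemma \ref{lem:Erdos-Renyi}). It comes from the envy comparison: one needs $n^{-(4/x+\varepsilon)} \geq \frac{2.2\log n}{\alpha_{min} n}$, which forces $4/x + \varepsilon < 1$, i.e.\ $x \geq 5$. Second, your Hall-deficiency/Chernoff route to matching existence is a legitimate alternative to citing the classical theorem, but it is the easy half of the proof; the theorem does not follow until the cross-valuation bound is supplied.
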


The key intuition to prove this theorem is that a balanced allocation - where each agent receives an equal number of goods - is highly likely to be envy-free when the allocation ensures that every agent gets only those goods that they value extremely close to $1$ ($\geq 1 - \frac{1.1 \log{n}}{\alpha_{min} n} = \tau$). We show the existence of such an allocation by showing that structures called $r$-matchings in some random bipartite graphs correspond to such a balanced allocation, and use a result about such matchings \cite{erdHos1964random}. We create a bipartite graph with an edge between any agent $i$ and good $j$ if and only if $u_i(j) \geq \tau$. The probability of such an edge existing is at least $\nicefrac{(\log{n} + \omega(1))}{n},$ exactly what is required to prove the existence of a perfect $r$-matching. The proof builds upon similar ideas for the case of identical distributions from \cite{Manurangsi_2020}. However, a key difference is that instead of removing certain edges from the constructed bipartite graph to ensure envy-freeness, we simply show that with high probability, for any pair of agents $i, i'$, strictly less than half the goods in $i'$s bundle are valued by $i$ above a carefully chosen threshold, and thus envy-freeness still holds.


Finally, we obtain the next best fairness guarantee in the remaining cases, (i) when $m < 5n$ or when (ii) $m \ge 5n$ and $m$ is not divisible by $n.$ Non-existence of envy-free allocations in the latter case has been shown by \cite{manurangsi2020closing}. We show that proportional allocations always exist for most $m$ in these ranges.

\begin{restatable}{theorem}{propsmallgoods} 
\label{thm:prop-small-m}
For goods, if $n \le m \le 2n$ and for all $j,$ $\mathcal{U}_j$ has mean at most $\nicefrac{1}{2},$ then, with high probability, a proportional allocation always exists as $n\rightarrow \infty.$
\end{restatable}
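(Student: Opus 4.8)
The plan is to decouple the argument into two independent pieces: a high-probability upper bound on everyone's proportional share, and the construction of an allocation whose bundles provably exceed that share. For the first piece, note that $\mathbb{E}[u_i(j)] = \mathbb{E}[\mathcal{U}_j] \le \tfrac12$, so by linearity $\mathbb{E}[u_i(M)] \le m/2$ and the proportional share $P_i := u_i(M)/n$ has mean at most $m/(2n) \le 1$, using $m \le 2n$. Since $u_i(M)$ is a sum of $m$ independent $[0,1]$ variables, Lemma~\ref{lem:Chernoff} together with a union bound over the $n$ agents gives that, w.h.p. and simultaneously for all $i$, one has $P_i \le \tfrac{m}{2n} + o(1)$; writing $r = m-n \in [0,n]$ this is $P_i \le \tfrac12 + \tfrac{r}{2n} + o(1) \le 1 + o(1)$. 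This step I expect to be routine.

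For the allocation I would, as in the proof of Theorem~\ref{thm:Goods-Small-m}, pass to the random bipartite graph $G_\tau$ on agents and goods with an edge $(i,j)$ whenever $u_i(j) \ge \tau$; by $(\alpha_j,\beta_j)$-boundedness, $\Pr[(i,j)\in E] \ge \alpha_{min}(1-\tau)$. Because $P_i$ can be as large as $1+o(1)$ while a single good is worth at most $1$, I cannot hope to make an agent proportional with a single good in general, so the target structure is a right-saturated $r$-matching: every good is assigned, and (balancing degrees) exactly $r$ agents receive two goods and the remaining $2n-m$ receive one. An agent holding two goods each valued $\ge \tau = \tfrac12 + \delta$ has bundle value $\ge 1 + 2\delta > 1 + o(1) \ge P_i$ for any fixed $\delta>0$ and large $n$, so these agents are automatically proportional. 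The delicate agents are the $2n-m$ receiving a single good: for them I would impose a second, higher threshold $\tau_{high} = \tfrac{m}{2n} + \epsilon_n$, with $\epsilon_n$ chosen to dominate the $o(1)$ fluctuation of $P_i$ from Step~1, so that their lone good already covers their share.

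Existence of this two-tier right-saturated matching is where I would invoke the random-matching threshold machinery behind Theorem~\ref{thm:Goods-Small-m} \cite{erdHos1964random}. On the low-threshold edges, which appear with constant probability $\ge \alpha_{min}(\tfrac12 - \delta)$, a balanced assignment of two goods to the $r$ agents exists comfortably; on the high-threshold edges, appearing with probability $\ge \alpha_{min}\bigl(\tfrac{2n-m}{2n} - \epsilon_n\bigr)$, I must saturate the $2n-m$ single-good agents. I expect the \emph{main obstacle} to be the boundary regime $m \to 2n$, where $2n-m$ is small while $\tau_{high}\to 1$, so high-value edges are individually rare and the usual per-agent neighbor counts (Hall-type bounds) degrade. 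The way I would push through is to \emph{not} fix in advance which agents take a single good: the expected number of pairs $(i,j)$ with $u_i(j) \ge \tau_{high}$ remains $\Theta(n)$ even here, and by independence these pairs spread across distinct agents and goods, so w.h.p. one can greedily extract $2n-m = o(n)$ disjoint high pairs and complete the leftover goods into a balanced two-good assignment on the rest. Verifying that these two matchings can be chosen edge-disjoint and jointly right-saturating is the crux of the argument; the concentration bound of Step~1 and the $\Theta(n)$ abundance estimate are the supporting, routine ingredients.
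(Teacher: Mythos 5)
Your Step~1 is correct and matches the paper's use of Hoeffding (equation~(10) there), and your construction is sound in the regime where $m$ is bounded away from $2n$ (say $m \le 1.9n$), where the paper in fact does something even simpler: one good per agent valued $\ge 1 - \frac{1.1\log n}{\alpha_j n} > 0.996 \ge u_i(M)/n$ already suffices. The genuine gap is your treatment of the single-good agents when $m$ is close to $2n$, and the rescue you sketch does not close it. To dominate the fluctuation of $P_i = u_i(M)/n$ uniformly over the $n$ agents you must take $\epsilon_n = \Theta\bigl(\sqrt{\log n / n}\bigr)$, so $\tau_{high} = \frac{m}{2n} + \epsilon_n$ exceeds $1$ whenever $2n-m = o(\sqrt{n\log n})$ (e.g.\ $m = 2n-1$): no good can clear the threshold, and your claim that the expected number of high pairs "remains $\Theta(n)$" is unsupported --- the PDF lower bound gives $nm\,\alpha_{min}\bigl(\frac{2n-m}{2n} - \epsilon_n\bigr)$, which is negative in this regime. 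The obstruction is structural, not technical: if every $\mathcal{U}_j$ has mean exactly $\frac12$ and $m = 2n-1$, then $u_i(M)$ has mean $n - \frac12$ and standard deviation $\Theta(\sqrt{n})$, so each agent has $P_i > 1$ with probability close to $\frac12$; a constant fraction of agents can never be made proportional by any single good, so a degree profile fixed in advance as "$m-n$ agents get two goods, $2n-m$ get one" cannot be realized by reassigning labels alone.

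What is actually needed --- and what the paper's proof supplies --- is an adaptive choice of \emph{which} agents receive a second good, together with a proof that their number does not exceed the $m-n$ spare goods. The paper first gives every agent one good valued $\ge 1 - \frac{1.1\log n}{\alpha_j n}$ via a perfect matching (Theorem~\ref{thm:Extension to Erdos Renyi}), so an agent is left unsatisfied only if $P_i$ exceeds roughly $1 - \frac{1.1\log n}{\alpha_{min} n}$; a Berry--Esseen estimate (Lemma~\ref{lem:Berry-Esseen}, applied to the non-identically distributed summands $u_i(j) - \mu_j$) shows this happens with probability at most $0.55$ per agent, hence at most $0.6n \le 0.9n \le m-n$ agents are violated w.h.p., and Lemma~\ref{lem:Hall's-condition-satisfying-case} then matches each of them to a second good valued $\ge 0.5/\beta_j$, pushing their bundle above $1 + o(1) \ge P_i$. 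Note also that a plain Chernoff bound cannot replace the Berry--Esseen step here, because the deviation being controlled ($P_i$ exceeding its mean by $O(\log n / n)$) is far inside one standard deviation ($\Theta(1/\sqrt{n})$) of $P_i$; this is precisely the refinement your proposal is missing.
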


This is the most technically involved result of this paper. First, by a simple pigeonhole principle argument, we prove the sufficiency of the condition on the means of the distributions. For the case when this condition is satisfied, we design a two-stage matching algorithm to constructively prove a proportional allocation exists. In the first stage, the algorithm constructs a bipartite graph that connects every agent $i$ with any good $j$ only if $u_i(j) \geq \tau_j$, where $\tau_j$ is carefully chosen so that $\Pr\bigl[u_i(j) \geq \tau_j\bigr]$ is exactly $\nicefrac{\log{n} + \omega(1)}{n}$. Due to Lemma~\ref{lem:Erdos-Renyi}, this graph has a perfect matching. Now, for the case where $m < 1.9n$, we show by simple concentration bounds that this matching is already proportional.

When $m \geq 1.9n$, we need a tighter analysis. Using the Berry-Esseen theorem for independent but not necessarily identically distributed summands, we show that at most $0.6n$ agents remain without a proportional share after the first stage. For these agents, we construct a residual graph with the remaining agents and items, and via a similar analysis as in the previous case, we show that this graph helps to complete a proportional allocation.

This proof is inspired from a similar analysis by \cite{manurangsi2020closing} for a single distribution case. Significant work is required to generalize the analysis to the non-identical distributions case. Along the way, we proved a stronger concentration bound using the Berry-Esseen inequality to show that the unsatisfied agents in the second case are less than $0.6n.$ This was necessary as the utilities are from non-identical distributions, and bounding the probability that a utility sample falls outside the proportionality bar is quite involved.

\begin{restatable}{theorem}{proplargegoods}
\label{thm:prop-large-m}
Let $c < 1$ be such that $\mathbb{E}[\mathcal{U}_j] \leq c$ for all $j \in M$. If $r = \lceil 2 \cdot \frac{3 + c}{1 - c} \rceil$ and $m \geq rn$, then with high probability, a proportional allocation exists as $n\rightarrow \infty$.
\end{restatable}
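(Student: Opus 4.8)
The plan is to reduce proportionality to the existence of a balanced assignment in which every agent receives $\lfloor m/n\rfloor$ goods that it values very close to $1$, so that each agent's own bundle comfortably clears its proportional share. Since goods have non-negative value, any goods left over can be dumped onto arbitrary agents at the end without decreasing any $u_i(A_i)$ or altering any $u_i(M)$, so it suffices to produce such a balanced high-value assignment and then discard the remainder.

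First I would control the proportional share. Fixing an agent $i$, the quantity $u_i(M)=\sum_{j\in M}u_i(j)$ is a sum of independent $[0,1]$ variables with $\mathbb{E}[u_i(M)]=\sum_j \mathbb{E}[\mathcal U_j]\le cm$. By the concentration bound (Lemma~\ref{lem:Chernoff}) together with a union bound over the $n$ agents, with high probability $u_i(M)\le cm+O(\sqrt{m\log n})$ for every $i$, so the share satisfies $s_i:=u_i(M)/n\le \tfrac{cm}{n}+O\!\big(\tfrac{\sqrt{m\log n}}{n}\big)$ simultaneously for all agents. This is exactly where the hypothesis $\mathbb{E}[\mathcal U_j]\le c$ enters: it forces $s_i$ below $\tfrac{cm}{n}$ rather than the trivial $\tfrac{m}{n}$, opening up the $(1-c)\tfrac{m}{n}$ gap that the construction will exploit.

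Next I would build the high-value bipartite graph exactly as in Theorems~\ref{thm:Goods-Small-m} and~\ref{thm:prop-small-m}: set $\tau=1-\tfrac{1.1\log n}{\alpha_{min} n}$ and place an edge $(i,j)$ whenever $u_i(j)\ge\tau$, so each edge is present independently with probability at least $\alpha_{min}(1-\tau)=\tfrac{1.1\log n}{n}$. Because $m\ge rn$, there are far more goods than the $k:=\lfloor m/n\rfloor\ge r$ that each agent needs, and at this edge density the Hall-type condition for an $r$-matching holds with high probability. Invoking the $r$-matching existence result (Lemma~\ref{lem:Erdos-Renyi} and \cite{erdHos1964random}) then yields, with high probability, an assignment giving every agent $k$ distinct goods each valued at least $\tau$; I would finally dump the fewer than $n$ leftover goods arbitrarily.

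To close, I would verify proportionality: each agent gets $u_i(A_i)\ge k\tau\ge (m/n-1)\big(1-\tfrac{1.1\log n}{\alpha_{min} n}\big)$, which I compare against $s_i\le \tfrac{cm}{n}+O\!\big(\tfrac{\sqrt{m\log n}}{n}\big)$. The surplus $u_i(A_i)-s_i$ is governed by the value gap $(1-c)\tfrac{m}{n}$, against which the rounding loss, the $\tau$-shortfall $\tfrac{k\cdot 1.1\log n}{\alpha_{min} n}$, and the concentration deviation are lower order; the choice $r=\lceil 2\cdot\tfrac{3+c}{1-c}\rceil$ is calibrated precisely so that $(1-c)\tfrac{m}{n}\ge (1-c)r\ge 2(3+c)$ dominates the aggregate of these error terms, giving $u_i(A_i)\ge s_i$ for all $i$. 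A final union bound over the concentration and matching failure events then produces a proportional allocation with high probability as $n\to\infty$. The hard part, and the step I expect to be the main obstacle, is the third one: securing the $r$-matching at the sparse density $\Theta(\log n/n)$ while saturating every agent with $\lfloor m/n\rfloor$ high-value goods, and simultaneously keeping each error term strictly below the $(1-c)\tfrac{m}{n}$ gap, which is exactly what fixing $r$ in this form is designed to guarantee.
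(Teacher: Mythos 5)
Your proposal is correct and follows the same skeleton as the paper's proof: threshold the utilities at $\tau=1-\frac{1.1\log n}{\alpha_{\min}n}$, extract a balanced $r$-matching from the resulting sparse random bipartite graph (each agent receiving $\lfloor m/n\rfloor$ goods valued at least $\tau$), dump the leftover goods arbitrarily, and compare $k\tau$ against the proportional share. The one substantive difference is how the share $u_i(M)/n$ is bounded. The paper does \emph{not} use concentration of $u_i(M)$ around its mean; instead it applies Markov's inequality at the threshold $\tau'=\frac{1+c}{2}$ to show that, with high probability, every agent values at most $m/2$ goods above $\tau'$, giving $u_i(M)\le\frac{m}{2}\tau'+\frac{m}{2}=\frac{(3+c)m}{4}$. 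You instead apply Hoeffding/Chernoff directly to $u_i(M)$, obtaining $u_i(M)\le cm+O(\sqrt{m\log n})$, which is a strictly stronger bound since $c<\frac{3+c}{4}$ for $c<1$. Your route is arguably cleaner (the paper already uses exactly this Hoeffding step in Theorem~\ref{thm:prop-small-m}) and would in fact prove the theorem for a considerably smaller $r$ --- roughly $r>\frac{1}{1-c}$ suffices under your bound --- so your closing remark that $r=\lceil 2\cdot\frac{3+c}{1-c}\rceil$ is ``calibrated precisely'' to your error terms is a mischaracterization: that constant is calibrated to the paper's weaker Markov-based bound $\frac{(3+c)m}{4}$, and your argument goes through with slack. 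Two minor points to tighten: the additive deviation bound you want is Lemma~\ref{lem:Hoeffding's inequality} rather than the multiplicative Chernoff bound of Lemma~\ref{lem:Chernoff} (the latter's exponent degrades when $\mathbb{E}[u_i(M)]$ is only known up to an upper bound), and since your edge probabilities are heterogeneous across goods you should route the matching step through Theorem~\ref{thm:Extension to Erdos Renyi} together with Lemma~\ref{lem:Perfect-r-Matching}, as the paper does. Neither affects correctness.
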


The idea for this result is similar to the case of envy-freeness for small number of items. We first obtain a balanced allocation - each agent has $r$ goods - where each good has value close to $1$ for the agent. We show that this allocation itself is proportional with high probability. To do this, we show that every agent values at most half of all the goods greater than a chosen threshold of $(1+c)/2$, thereby showing an upper bound on any agent's proportional value.

\noindent
\paragraph{Organization.} In the remaining paper, we discuss our main results for each case, large and small number of goods - Theorem \mbox{\ref{thm:prop-small-m}} for a proportional allocation when there are $n\leq m\leq 2n$ goods in Section \mbox{\ref{sec:prop-small}}, Theorem \mbox{\ref{thm:Goods-Small-m}} for envy-free allocations with small number of goods in Section \mbox{\ref{sec:ef-small-goods}}, and Theorem \mbox{\ref{thm:sampling-continuous}}, our sampling algorithm for large number of items with continuous distributions and bounded means in Section \mbox{\ref{sec:ef-sample}}. We also describe our empirical evaluations in Section \mbox{\ref{sec:empirical-results}}. Our proofs describe all the novel ideas in the paper and essentially subsume the remaining proofs. For completeness, we present them in the Appendix.
\section{Preliminaries}

In the analysis of all of our results, we will use the following fundamental probability bounds.
\begin{lemma} [Chernoff] \label{lem:Chernoff}
    Let $X_1, \dots, X_m$ be independent random variables in $\left[0 , 1 \right]$. Denote $X = \sum_{i=1}^{m} X_i$. Then for all $\epsilon \in [0, 1]$, 
    $ \Pr [ X \leq (1 - \epsilon) \mathbb{E} [X]] \leq \exp{( - \frac{\epsilon^2}{2} \mathbb{E}[X])},$ and 
    $ \Pr [ X \geq (1 + \epsilon) \mathbb{E} [X]] \leq \exp{( - \frac{\epsilon^2}{3} \mathbb{E}[X])}.$
\end{lemma}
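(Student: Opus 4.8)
The plan is to establish both tail inequalities through the standard exponential-moment (Bernstein/Chernoff) technique. Write $\mu = \mathbb{E}[X] = \sum_{i=1}^m p_i$ with $p_i = \mathbb{E}[X_i]$. For any $t > 0$, Markov's inequality applied to $e^{tX}$ gives $\Pr[X \ge (1+\epsilon)\mu] \le e^{-t(1+\epsilon)\mu}\,\mathbb{E}[e^{tX}]$, and since the $X_i$ are independent, $\mathbb{E}[e^{tX}] = \prod_{i=1}^m \mathbb{E}[e^{tX_i}]$. The lower tail is handled by the mirror-image argument: apply Markov to $e^{-tX}$ to obtain $\Pr[X \le (1-\epsilon)\mu] \le e^{t(1-\epsilon)\mu}\,\mathbb{E}[e^{-tX}]$.

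The first substantive step is to bound each moment-generating factor. Because every $X_i$ lies in $[0,1]$ and $x \mapsto e^{sx}$ is convex, it sits below its secant on $[0,1]$, so $e^{sX_i} \le 1 + X_i(e^s - 1)$. Taking expectations and applying $1 + y \le e^y$ yields $\mathbb{E}[e^{sX_i}] \le 1 + p_i(e^s - 1) \le \exp\bigl(p_i(e^s - 1)\bigr)$, valid for every real $s$. Multiplying over $i$ collapses the product to the clean bound $\mathbb{E}[e^{sX}] \le \exp\bigl(\mu(e^s - 1)\bigr)$, which I use with $s = t$ for the upper tail and $s = -t$ for the lower.

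Next I would optimize the free parameter. For the upper tail this gives $\Pr[X \ge (1+\epsilon)\mu] \le \exp\bigl(\mu(e^t - 1) - t(1+\epsilon)\mu\bigr)$, minimized at $t = \ln(1+\epsilon)$, producing the classical form $\bigl(e^{\epsilon}/(1+\epsilon)^{1+\epsilon}\bigr)^{\mu}$; symmetrically the lower tail is minimized at $t = -\ln(1-\epsilon) > 0$ and yields $\bigl(e^{-\epsilon}/(1-\epsilon)^{1-\epsilon}\bigr)^{\mu}$. It then remains to match these against the stated exponents, i.e. to verify on $[0,1]$ the scalar inequalities $(1+\epsilon)\ln(1+\epsilon) - \epsilon \ge \tfrac{\epsilon^2}{3}$ and $(1-\epsilon)\ln(1-\epsilon) + \epsilon \ge \tfrac{\epsilon^2}{2}$.

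I expect these scalar inequalities to be the main obstacle, and I would treat each by defining its left-minus-right difference as an auxiliary function $f$ with $f(0) = f'(0) = 0$ and studying the sign of $f'$. The lower-tail case is clean: its second derivative is $\epsilon/(1-\epsilon) \ge 0$, so $f'$ is nondecreasing from $0$, hence $f' \ge 0$ and $f \ge 0$. The upper-tail inequality is the delicate one, since there $f''(\epsilon) = \tfrac{1}{1+\epsilon} - \tfrac{2}{3}$ changes sign at $\epsilon = \tfrac12$, ruling out a one-shot monotonicity argument for $f'$. The resolution I would use is that $f'$ increases on $[0,\tfrac12]$ and, though it decreases on $[\tfrac12,1]$, it remains positive because $f'(1) = \ln 2 - \tfrac23 > 0$; thus $f' \ge 0$ on all of $[0,1]$, forcing $f \ge 0$. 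This sign analysis of the upper-tail auxiliary function is the only genuinely nonroutine part of the argument.
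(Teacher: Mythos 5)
Your proof is correct. The paper states this lemma without proof, treating it as a classical result, so there is no in-paper argument to compare against; your derivation is the standard exponential-moment route (secant bound $e^{sx}\le 1+x(e^s-1)$ on $[0,1]$, optimizing $t=\ln(1+\epsilon)$ and $t=-\ln(1-\epsilon)$, then verifying $(1+\epsilon)\ln(1+\epsilon)-\epsilon\ge\epsilon^2/3$ and $(1-\epsilon)\ln(1-\epsilon)+\epsilon\ge\epsilon^2/2$), and your handling of the delicate upper-tail inequality --- noting $g''$ changes sign at $\epsilon=\tfrac12$ but $g'(1)=\ln 2-\tfrac23>0$ keeps $g'\ge 0$ --- is exactly right. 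The only cosmetic gap is the boundary point $\epsilon=1$ of the lower tail, where $t=-\ln(1-\epsilon)$ diverges; there one takes $t\to\infty$ directly to get $\Pr[X\le 0]\le e^{-\mu}\le e^{-\mu/2}$, or argues by continuity.
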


\begin{lemma} \label{lem:Hoeffding's inequality}
    \cite{hoeffding1963probability}. 
    Let $X_1, \dots, X_n$ be independent random variables such that $X_i \in [a_i,b_i]$. Let $S_n = \sum_{i=1}^n X_i$ and $\mathbb{E}[S_n] = \mu$. Then for all $t>0,$ $\Pr[S_n - \mu \geq t] \leq \exp{\left(-\frac{2t^2}{\sum_{i=1}^n (b_i - a_i)^2}\right)}.$
\end{lemma}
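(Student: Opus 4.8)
The plan is to use the exponential-moment (Chernoff) method. First I would fix an arbitrary parameter $s>0$ and apply Markov's inequality to the nonnegative random variable $e^{s(S_n-\mu)}$, which gives $\Pr[S_n - \mu \geq t] \leq e^{-st}\,\mathbb{E}[e^{s(S_n-\mu)}]$. Writing $Y_i = X_i - \mathbb{E}[X_i]$, each $Y_i$ is mean-zero and supported on an interval of length $b_i - a_i$, and by independence the moment generating function factorizes as $\mathbb{E}[e^{s(S_n-\mu)}] = \prod_{i=1}^n \mathbb{E}[e^{sY_i}]$. The whole argument then reduces to bounding each single-variable factor and optimizing over $s$ at the very end.

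The core step, and the one I expect to be the main obstacle, is the single-variable bound usually called Hoeffding's Lemma: for a mean-zero random variable $Y$ with $Y \in [a,b]$ one has $\mathbb{E}[e^{sY}] \leq \exp\!\left(s^2(b-a)^2/8\right)$. To prove it I would exploit convexity of $y \mapsto e^{sy}$ to write, for $y \in [a,b]$, the chord bound $e^{sy} \leq \frac{b-y}{b-a}e^{sa} + \frac{y-a}{b-a}e^{sb}$; taking expectations and using $\mathbb{E}[Y]=0$ kills the linear term and yields $\mathbb{E}[e^{sY}] \leq e^{\phi(s)}$, where $\phi(s) = \log\!\left(\frac{b}{b-a}e^{sa} - \frac{a}{b-a}e^{sb}\right)$. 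I would then verify $\phi(0)=0$ and $\phi'(0)=0$, and bound the second derivative uniformly by $\phi''(s) \leq (b-a)^2/4$; this last inequality is where the real work lives, since after a substitution $\phi''(s)$ takes the form $(b-a)^2\,p(1-p)$ for a probability-weight $p \in [0,1]$, and the elementary bound $p(1-p)\leq 1/4$ closes it. A second-order Taylor expansion with remainder then delivers $\phi(s) \leq s^2(b-a)^2/8$.

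Applying this to each $Y_i$ (whose support has width $b_i - a_i$) gives $\mathbb{E}[e^{s(S_n-\mu)}] \leq \exp\!\left(\frac{s^2}{8}\sum_{i=1}^n (b_i-a_i)^2\right)$, so that $\Pr[S_n - \mu \geq t] \leq \exp\!\left(-st + \frac{s^2}{8}\sum_{i=1}^n (b_i-a_i)^2\right)$. Finally I would optimize the exponent over $s>0$: the right-hand side is minimized at $s^{*} = 4t/\sum_{i=1}^n (b_i-a_i)^2$, and substituting this value collapses the exponent to exactly $-2t^2/\sum_{i=1}^n (b_i-a_i)^2$, which is the claimed bound. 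Since the statement is the classical Hoeffding inequality (and is cited as such), the contribution here is purely to record a self-contained derivation; the only genuinely delicate point is the uniform second-derivative bound inside Hoeffding's Lemma, with every other step being the routine Chernoff template.
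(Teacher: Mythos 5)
Your proof is correct. The paper does not prove this lemma at all --- it is stated as a citation to Hoeffding (1963) --- and the argument you give is precisely the classical one from that source: Markov's inequality applied to $e^{s(S_n-\mu)}$, factorization of the moment generating function by independence, Hoeffding's Lemma $\mathbb{E}[e^{sY_i}] \leq \exp\bigl(s^2(b_i-a_i)^2/8\bigr)$ proved via the convexity chord bound and the uniform estimate $\phi''(s) \leq (b_i-a_i)^2/4$ from $p(1-p) \leq 1/4$, and finally optimization at $s^* = 4t/\sum_{i=1}^n (b_i-a_i)^2$, which indeed collapses the exponent to $-2t^2/\sum_{i=1}^n (b_i-a_i)^2$ exactly as claimed. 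All steps check out, and you correctly identify the second-derivative bound as the only non-routine point; nothing further is needed.
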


The proof of Theorem \ref{thm:prop-small-m} also uses two stronger probabilistic inequalities, which we state here.

\begin{lemma}[Lyapunov's Moment Inequality, \cite{karrNotes} Corollary 4.110, p.182]\label{lem:Lyapunov's inequality}
    Let \( X_j \) be a real-valued random variable with \( \mathbb{E}[X_j^2] < \infty \). Then,
    \[
    \mathbb{E}[|X_j|^3] \leq \left( \mathbb{E}[X_j^2] \right)^{3/2} = \sqrt{\mathbb{E}[X_j^2]} \cdot \mathbb{E}[X_j^2].
    \]
\end{lemma}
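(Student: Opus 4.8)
The plan is to treat this as an instance of the standard comparison between the $L^2$ and $L^3$ moments of $X_j$, and to prove it with a single convexity argument rather than anything specific to the allocation setting. First I would record that the right-hand side is purely algebraic: $(\mathbb{E}[X_j^2])^{3/2} = \sqrt{\mathbb{E}[X_j^2]}\cdot\mathbb{E}[X_j^2]$ is just the definition of the $3/2$ power, so the real content is the moment bound relating $\mathbb{E}[|X_j|^3]$ to $\mathbb{E}[X_j^2]$. The hypothesis $\mathbb{E}[X_j^2]<\infty$ is exactly what makes the right-hand side well defined and finite. I would then reduce the claim to a statement about the nonnegative variable $Y:=X_j^2\ge 0$: setting $\phi(t)=t^{3/2}$, one has $\mathbb{E}[|X_j|^3]=\mathbb{E}[\phi(Y)]$ and $(\mathbb{E}[X_j^2])^{3/2}=\phi(\mathbb{E}[Y])$, so the assertion is precisely $\mathbb{E}[\phi(Y)]\le\phi(\mathbb{E}[Y])$, a single Jensen-type comparison for $\phi$ on $[0,\infty)$.

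The hard part — and where I expect the real obstacle to sit — is the \emph{direction} of this comparison. Since $\phi(t)=t^{3/2}$ is convex on $[0,\infty)$, Jensen's inequality yields $\mathbb{E}[\phi(Y)]\ge\phi(\mathbb{E}[Y])$, i.e.\ $\mathbb{E}[|X_j|^3]\ge(\mathbb{E}[X_j^2])^{3/2}$, which is the \emph{reverse} of what is stated and is the usual form of Lyapunov's inequality (the monotonicity $\|X_j\|_2\le\|X_j\|_3$), with equality exactly when $|X_j|$ is almost surely constant. So no convexity/interpolation argument can deliver the inequality in the direction as written without an extra hypothesis. The natural such hypothesis, and the one in force throughout this paper, is boundedness: the utilities lie in $[0,1]$, so the relevant centered summands satisfy $|X_j|\le 1$ almost surely.

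Under $|X_j|\le 1$ I would close the argument by the pointwise estimate $|X_j|^3\le X_j^2$, which integrates to $\mathbb{E}[|X_j|^3]\le\mathbb{E}[X_j^2]$. This clean bound is exactly the estimate needed to control the third–moment term $\sum_j\mathbb{E}\bigl[|X_j-\mu_j|^3\bigr]$ in the Berry–Esseen step of Theorem~\ref{thm:prop-small-m}. I would then flag, as the point to reconcile against the cited Corollary~4.110, that the literal right-hand side $(\mathbb{E}[X_j^2])^{3/2}$ is in general \emph{smaller} than $\mathbb{E}[X_j^2]$ whenever $\mathbb{E}[X_j^2]\le 1$, so the bound as transcribed is the reverse-Lyapunov direction and cannot hold beyond the degenerate case; the usable and provable content is $\mathbb{E}[|X_j|^3]\le\mathbb{E}[X_j^2]$, and I would match this to the reference's normalization before invoking it downstream. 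Apart from this reconciliation I expect no further difficulty, since everything reduces to the one-line pointwise bound plus the $[0,1]$ support assumption already standing in the model.
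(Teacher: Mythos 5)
Your diagnosis is correct, and it is worth stating plainly: the paper provides no proof of this lemma at all --- it is quoted from Karr (Corollary 4.110), and the quotation reverses the direction of the actual result. The genuine Lyapunov inequality is the monotonicity of $L^p$ norms on a probability space, $\|X_j\|_2 \leq \|X_j\|_3$, which is equivalent to $\left(\mathbb{E}[X_j^2]\right)^{3/2} \leq \mathbb{E}[|X_j|^3]$ --- the opposite of the displayed inequality. Your Jensen computation pins this down exactly: with $Y = X_j^2$ and the strictly convex $\phi(t) = t^{3/2}$, one gets $\mathbb{E}[\phi(Y)] \geq \phi(\mathbb{E}[Y])$ with equality iff $Y$ (hence $|X_j|$) is almost surely constant, so the lemma as transcribed can hold only in that degenerate case and admits no proof. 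The error propagates into the paper's use of the lemma in Theorem~\ref{thm:prop-small-m}, where the step $\rho_j \leq \sigma_j^3$ (i.e.\ $\mathbb{E}[|X_j|^3] \leq \left(\mathbb{E}[X_j^2]\right)^{3/2}$) inherits the reversed direction.

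Your proposed repair is also the right one, and it suffices downstream. In that proof $X_j = u_i(j) - \mu_j \in [-\mu_j, 1-\mu_j] \subseteq [-1,1]$, so the pointwise bound $|X_j|^3 \leq X_j^2$ gives $\rho_j \leq \sigma_j^2$, whence
\[
\frac{\sum_{j \in M} \rho_j}{\bigl(\sum_{j \in M} \sigma_j^2\bigr)^{3/2}} \;\leq\; \frac{1}{\bigl(\sum_{j \in M} \sigma_j^2\bigr)^{1/2}} \;\leq\; \sqrt{\frac{12}{m\,\alpha_{\min}}},
\]
using the paper's own lower bound $\sigma_j^2 \geq \alpha_j/12$. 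This vanishes as $m \to \infty$, so the Berry--Esseen error term still tends to $0$ and the conclusion $F_S(t_n) \geq \Phi(t_n) - 0.04$ survives with a different (but still $o(1)$) constant; note this denominator exponent $3/2$ is also what Lemma~\ref{lem:Berry-Esseen} actually requires, whereas the paper's intermediate display uses exponent $2$. One caution on presentation: do not label your bound ``Lyapunov's inequality'' --- the true Lyapunov inequality only lower-bounds $\rho_j$ and is useless for the needed upper bound; what you are using is simply the boundedness of the support, and the lemma statement should be corrected to $\mathbb{E}[|X_j|^3] \leq \mathbb{E}[X_j^2]$ under the hypothesis $|X_j| \leq 1$ almost surely.
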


\begin{lemma}[Berry-Esseen Inequality, ~\cite{esseen1942liapunoff}]\label{lem:Berry-Esseen}
    Let \( X_1, \ldots, X_n \) be independent real-valued random variables with zero mean and finite third absolute moments. Let
    $
    S_n = \sum_{j=1}^n X_j
    $
    and
    $
    \sigma_n^2 = \sum_{j=1}^n \mathbb{E}[X_j^2] > 0.
    $
    Let \( F_n \) denote the distribution function of \( S_n / \sigma_n \), and let \( \Phi \) denote the standard normal distribution function. Then,
    \[
    \sup_x \left| F_n(x) - \Phi(x) \right| \leq C_0 \cdot \frac{ \sum_{j=1}^n \mathbb{E}[|X_j|^3] }{ \left( \sum_{j=1}^n \mathbb{E}[X_j^2] \right)^{3/2} }.
    \]
    where $C_0$ is fixed constant.
\end{lemma}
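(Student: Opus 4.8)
\emph{Proof plan.} The plan is to exhibit a proportional allocation explicitly. I fix a threshold $\theta = \tfrac{1+c}{2}$, chosen so that it sits strictly between the mean bound $c$ and $1$: the slack $\theta - c = \tfrac{1-c}{2} > 0$ is the quantity the whole argument will spend, and it is precisely this gap that shrinks as $c \to 1$ and forces $r = \lceil 2\tfrac{3+c}{1-c}\rceil$ to grow. Form the bipartite graph $G$ on agents and goods that places an edge $(i,j)$ exactly when $u_i(j) \ge \theta$. Since each $\mathcal{U}_j$ has density at least $\alpha_j \ge \alpha_{min}$ on $[0,1]$, every edge is present independently with probability $p_j = \Pr[u_i(j) \ge \theta] \ge \alpha_{min}(1-\theta) = \alpha_{min}\tfrac{1-c}{2}$, a positive constant (in contrast to the $\Theta(\log n / n)$ edge probability of the envy-free small-$m$ construction, because here $\theta$ is bounded away from $1$). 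The target is an allocation in which every agent receives a balanced bundle of roughly $m/n \ge r$ goods, all but a controlled number of which it values at least $\theta$.

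Second, the matching step. From $G$ I would extract a balanced, right-saturated assignment that sends each good to an agent valuing it at least $\theta$ while keeping the per-agent load at $\lfloor m/n\rfloor$ or $\lceil m/n\rceil$; the few goods (if any) that no agent values above $\theta$ are assigned arbitrarily and contribute nonnegatively. Existence of the required degree-constrained matching in the random graph $G$ follows from the perfect/right-saturated $r$-matching threshold for random bipartite graphs, Lemma~\ref{lem:Erdos-Renyi} and \cite{erdHos1964random}: with $m \ge rn$ right-vertices and constant edge probability, the Hall-type expansion condition $|N(S)| \ge r|S|$ holds for all agent-subsets $S$ with probability $1-o(1)$. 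This is the step I expect to be the main obstacle, for two reasons peculiar to non-identical items: the edge probabilities $p_j$ differ across goods, so the neighbourhood-expansion estimate is a sum of independent but \emph{not} identically distributed indicators that must be controlled per good and then union-bounded over subsets; and one must check that the single value $r=\lceil 2\tfrac{3+c}{1-c}\rceil$ is simultaneously large enough for this expansion and for the proportionality inequality of the last step.

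Third, the proportionality bar. For each agent $i$, $u_i(M) = \sum_{j \in M} u_i(j)$ is a sum of $m$ independent $[0,1]$ variables with $\mathbb{E}[u_i(M)] \le cm$. Applying Hoeffding's inequality (Lemma~\ref{lem:Hoeffding's inequality}) with deviation $t = \tfrac{1-c}{4}m$ gives $\Pr[u_i(M) \ge \tfrac{3c+1}{4}m] \le \exp(-\tfrac{(1-c)^2}{8}m)$; since $m \ge rn$ and $n \to \infty$ this is $o(1/n)$, so a union bound over the $n$ agents yields $u_i(M)/n \le \tfrac{3c+1}{4}\cdot\tfrac{m}{n}$ for every agent simultaneously with high probability. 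Equivalently, one shows that each agent values fewer than half of the goods above $\theta = \tfrac{1+c}{2}$, which bounds the proportional share by the same order; I would use whichever form makes the constant $\tfrac{3+c}{1-c}$ cleanest. The complementary lower bound is immediate from the matching: a bundle of $\approx m/n$ goods each worth at least $\theta$ gives $u_i(A_i) \ge \theta\cdot\tfrac{m}{n}$, up to the negligible correction from the handful of below-$\theta$ goods.

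Finally I combine the two bounds. Proportionality $u_i(A_i) \ge u_i(M)/n$ reduces to comparing $\theta\cdot\tfrac{m}{n}$ against $\tfrac{3c+1}{4}\cdot\tfrac{m}{n}$, and since $\theta = \tfrac{1+c}{2} > \tfrac{3c+1}{4}$ for every $c < 1$ the inequality holds with the $\tfrac{1-c}{2}$ margin to spare. The role of $r = \lceil 2\tfrac{3+c}{1-c}\rceil$ is to make $m/n \ge r$ large enough that (i) each agent's balanced bundle is dominated by $\ge\theta$ goods and (ii) the guaranteed value exceeds the bar after absorbing the Hoeffding slack, the rounding of bundle sizes, and the goods each agent may value above $\theta$; as $c \to 1$ the margin $\tfrac{1-c}{2}$ vanishes, which is exactly why $r$ scales like $1/(1-c)$. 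A union bound over the finitely many failure events — expansion of $G$ and concentration of each $u_i(M)$ — and letting $n \to \infty$ then gives a proportional allocation with high probability. The genuinely new work relative to the single-distribution analysis of \cite{manurangsi2020closing} is carrying both the matching expansion and the concentration estimate through non-identical $\mathcal{U}_j$, where the per-good probabilities $p_j$ and means $\mathbb{E}[\mathcal{U}_j]$ must be handled individually rather than via one common distribution.
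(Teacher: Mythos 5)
Your proposal does not address the statement at all. The statement to be proved is Lemma~\ref{lem:Berry-Esseen}, the Berry--Esseen inequality: a quantitative central limit theorem bounding $\sup_x |F_n(x)-\Phi(x)|$ for the normalized sum of independent, non-identically distributed, zero-mean random variables by $C_0 \sum_j \mathbb{E}[|X_j|^3] / (\sum_j \mathbb{E}[X_j^2])^{3/2}$. What you have written instead is a proof plan for Theorem~\ref{thm:prop-large-m} (existence of proportional allocations when $m \geq rn$ with $r = \lceil 2\frac{3+c}{1-c}\rceil$): thresholds at $\theta = \frac{1+c}{2}$, bipartite matchings in random graphs, Hoeffding concentration for $u_i(M)$, and union bounds over agents. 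None of this bears on the Kolmogorov-distance bound in the lemma, and no ingredient of a Berry--Esseen argument appears anywhere in your text --- no characteristic functions, no Esseen smoothing inequality relating the distance between distribution functions to the distance between their Fourier transforms, no Taylor expansion of $\mathbb{E}[e^{itX_j/\sigma_n}]$ controlled by the third absolute moments.

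For the record, the paper itself does not prove this lemma; it is stated as a classical result and cited to \cite{esseen1942liapunoff}, then used inside the proof of Theorem~\ref{thm:prop-small-m} to show that at most $0.6n$ agents violate proportionality after the first matching stage. If a proof were required, the standard route is: (i) Esseen's smoothing lemma, which bounds $\sup_x|F_n(x)-\Phi(x)|$ by an integral of $|\varphi_n(t)-e^{-t^2/2}|/|t|$ over $|t| \leq T$ plus an $O(1/T)$ term, where $\varphi_n$ is the characteristic function of $S_n/\sigma_n$; (ii) a third-order Taylor expansion of each factor $\mathbb{E}[e^{itX_j/\sigma_n}]$, with remainder controlled by $\mathbb{E}[|X_j|^3]/\sigma_n^3$; and (iii) choosing $T$ proportional to $(\sum_j \mathbb{E}[X_j^2])^{3/2}/\sum_j \mathbb{E}[|X_j|^3]$ to balance the two error terms. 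Your matching-and-concentration machinery cannot be repaired into such an argument; it is simply a solution to a different problem in the paper.
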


\subsection{Matchings in Random Bipartite Graphs}

We now discuss the random graphs and their properties used in analyzing the case of small number of items. 

Let $\mathcal{G} (n_L, n_R, p)$ denote the distribution of random bipartite \\ graphs with $n_L$ left vertices, $n_R$ right vertices, and $p$ being the probability that there exists an edge between any pair of left and right vertices. A graph sampled from this distribution is called an \textit{Erd{\H{o}}s-R\'enyi random bipartite graph}. The following result is classical and well-known.
\begin{lemma}\label{lem:Erdos-Renyi}
    \cite{erdHos1964random}. Let $G \sim \mathcal{G} (n, n, p)$ such that $p = \nicefrac{(\log{n} + \omega(1))}{n}$. Then, with high probability, $G$ contains a perfect matching.
\end{lemma}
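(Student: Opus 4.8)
The plan is to prove the existence of a perfect matching via Hall's marriage theorem together with a first-moment (union-bound) argument over the potential obstructions. Recall that a bipartite graph with parts $L,R$ of equal size $n$ has a perfect matching if and only if Hall's condition holds, i.e.\ $|N(S)| \ge |S|$ for every $S \subseteq L$. Thus $G$ fails to have a perfect matching precisely when some set violates this condition, and the goal is to show $\Pr[\exists\, S : |N(S)| < |S|] = o(1)$ when $p = (\log n + \omega(1))/n$. To make the union bound tractable I would work with a \emph{minimal} violating set $S$ (minimal in cardinality). For such a set one has $|N(S)| = |S|-1$ exactly, and, crucially, every vertex of $N(S)$ has at least two neighbours inside $S$: if some $t \in N(S)$ had a unique $S$-neighbour $v$, then deleting $v$ from $S$ would give $|N(S\setminus\{v\})| \le |N(S)|-1 = |S|-2 < |S\setminus\{v\}|$, a smaller violator, contradicting minimality.

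First I would reduce the range of sizes that must be considered. If $S \subseteq L$ is a violator with $|S| > \lceil n/2\rceil$, then $S' := R \setminus N(S)$ has no neighbours in $S$, so $N(S') \subseteq L \setminus S$; a short count gives $|N(S')| \le n-|S| < n-|S|+1 = |S'|$ and $|S'| \le \lceil n/2 \rceil$, i.e.\ a small violator appears on the right. Since the random model is symmetric in $L$ and $R$, it therefore suffices to bound the probability that a minimal violator of size $s$ with $1 \le s \le \lceil n/2 \rceil$ occurs on the left, and multiply by two.

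The core is then the first-moment estimate. The size-$1$ case is an isolated left vertex; by a union bound its probability is at most $n(1-p)^n \le n e^{-pn} = e^{-\omega(1)} = o(1)$, and this is exactly the step that forces the threshold $p = (\log n + \omega(1))/n$. For $2 \le s \le \lceil n/2\rceil$ I choose the set $S$ (at most $\binom{n}{s}$ ways) and its neighbourhood $T = N(S)$ of size $s-1$ (at most $\binom{n}{s-1}$ ways), then multiply the probability that $S$ sends no edge to the $n-s+1$ vertices of $R\setminus T$, namely $(1-p)^{s(n-s+1)}$, by the probability that each of the $s-1$ vertices of $T$ has at least two neighbours in $S$, which is at most $\big(\binom{s}{2}p^2\big)^{s-1}$. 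The extra factor $\big(\binom{s}{2}p^2\big)^{s-1}$ coming from the degree-two property is what makes the sum converge: without it the $s=2$ term alone is of order $n\, e^{-\omega(1)}$, which need not vanish, whereas with it that term becomes $O\big((\log n)^2 e^{-\omega(1)}/n\big) = o(1)$.

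It then remains to sum $t_s := \binom{n}{s}\binom{n}{s-1}(1-p)^{s(n-s+1)}\big(\binom{s}{2}p^2\big)^{s-1}$ over $2 \le s \le \lceil n/2\rceil$ and show the total is $o(1)$. I would bound the binomials by $(en/s)^s$ and use $(1-p)^{s(n-s+1)} \le e^{-p s(n-s+1)}$; for $s$ near $n/2$ the factor $e^{-p s(n-s+1)} \approx n^{-\Theta(n)}$ is doubly-exponentially small and crushes every other term, so only small $s$ is delicate. The main obstacle, and where I would spend the most care, is precisely this small-to-moderate $s$ regime: I must show that the decay provided by the $p^{2(s-1)} \approx (\log n/n)^{2(s-1)}$ factor dominates the binomial growth $n^{2s-1}$ uniformly in $s$, and that this holds for an \emph{arbitrarily slowly} growing $\omega(1)$ term, so that no slack can be wasted at $s=1,2$. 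Establishing a clean termwise estimate such as $t_s \le \big(C(\log n)^2/n\big)^{s-1} e^{-\omega(1)}$ (equivalently a ratio test $t_{s+1}/t_s < 1/2$) throughout this range, and then combining it with the isolated-vertex bound and the left/right symmetry, completes the argument.
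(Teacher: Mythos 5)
The paper does not actually prove this lemma: it is imported verbatim from \cite{erdHos1964random} and used as a black box (e.g., in the reductions of Theorem~\ref{thm:Extension to Erdos Renyi} and Lemma~\ref{lem:Perfect-r-Matching}), so there is no internal proof to compare against. Your proposal is, in essence, the standard first-moment proof of the classical theorem, and its skeleton is sound: Hall's condition; passing to a minimum-cardinality violator, which correctly forces $|N(S)|=|S|-1$ and minimum degree $2$ from $N(S)$ into $S$; the complementation trick reducing to $|S|\le\lceil n/2\rceil$ (note that your size bound $|S'|\le\lceil n/2\rceil$ uses the equality $|N(S)|=|S|-1$, so the reduction must be applied to a minimum violator, as you implicitly do); the isolated-vertex computation that pins down the threshold; and the union bound with the crucial factor $\bigl(\binom{s}{2}p^2\bigr)^{s-1}$, which is legitimate because the edges from $S$ to $T$ and from $S$ to $R\setminus T$ are disjoint sets of independent indicators, and the degree events across distinct vertices of $T$ are mutually independent. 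Two bookkeeping points deserve care in a full write-up. First, your claimed uniform termwise bound $t_s \le \bigl(C(\log n)^2/n\bigr)^{s-1}e^{-\omega(1)}$ is what one gets for constant $s$; for growing $s$ the correction $e^{ps(s-1)}$ hidden in $(1-p)^{s(n-s+1)}$ degrades $e^{-psn}=n^{-s}e^{-s\cdot\omega(1)}$ to roughly $n^{-s/2}$ once $s$ is a constant fraction of $n$, so the estimate should be stated piecewise (sharply for $s\le 4$, and via $t_s \le n\bigl(C(\log n)^2/\sqrt{n}\bigr)^{s}$ for $5\le s\le \lceil n/2\rceil$), after which the total is still $o(1)$. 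Second, since containing a perfect matching is a monotone increasing property, you may assume without loss of generality that the $\omega(1)$ term grows slowly (say, is at most $\log\log n$); this removes any tension between the $(\log n+\omega(1))^{2s-2}$ growth in your $p^{2(s-1)}$ factor and the $e^{-s\cdot\omega(1)}$ decay when the $\omega(1)$ term grows quickly. With these routine repairs, which you largely anticipate, your argument is a complete and correct self-contained proof of the cited result.
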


We prove an extension to Lemma \ref{lem:Erdos-Renyi} where the probability of edges depends on the right vertices.
\begin{theorem}\label{thm:Extension to Erdos Renyi}
    Let $G \sim \mathcal{G} (n, n, \{p_j\}_{j \in [n]})$ be a random bipartite graph, say $G = (\{l_i\}_{i \in [n]}, \{r_j\}_{j \in [n]}, E)$. Each edge $(l_i, r_j) \in E$ independently with probability $p_j$. Define $p_{\min} := \min_{j \in [n]} p_j$. If $p_{\min} \geq \nicefrac{(\log{n} + \omega(1))}{n},$ then with high probability $G$ contains a perfect matching.
\end{theorem}
\begin{proof}
    Given $G$, define a new random bipartite graph $G' = (\{l_i\}_{i \in [n]}, \{r_i\}_{i \in [n]}, E')$ as follows. If $(l_i, r_j) \in E$, then keep the edge $(l_i, r_j)$ in $G'$ with probability $\nicefrac{p_{min}}{p_j}$, and remove it with probability $1 - \nicefrac{p_{min}}{p_j}$. Observe that for each $(l_i, r_j)$, $\Pr[(l_i, r_j) \in E']=$ $\Pr[(l_i, r_j) \in E] \Pr[(l_i, r_j) \in E' \mid (l_i, r_j) \in E]$ $=p_j \cdot \nicefrac{p_{min}}{p_j} = p_{min}.$
    
    Thus, $G'$ is a random bipartite graph where each edge is included independently with constant probability $p_{\min}$. Since $p_{\min} \geq \nicefrac{(\log{n} + \omega(1))}{n}$, from Lemma \ref{lem:Erdos-Renyi}, $G'$ contains a perfect matching with high probability. Further, $E' \subseteq E$, so $G$ also has a perfect matching with high probability.
\end{proof}

Next, we state the following known result, which proves some sufficient conditions that random graphs must have to satisfy Hall's condition for perfect matchings.
\begin{lemma}\label{lem:Hall's-condition-satisfying-case}
    \cite{manurangsi2020closing}. Let $G = (A,B,E)$ be a graph sampled from the Erdős-Rényi random bipartite graph distribution $\mathcal{G}(n,q,p)$, where $p \geq 0.5$ and $0.9n \leq q \leq n$. Then with high probability, for every $S \subseteq A$ of size at most $0.6n$, we have $|Z_G(S)| \geq |S|$, where $Z_G(S)$ is the set of vertices adjacent to $S$ in graph $G$.
\end{lemma}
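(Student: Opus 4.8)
The plan is to recognize this as the standard ``small sets expand'' guarantee behind Hall's condition, and to prove it by a first-moment (union bound) argument over all potential violating sets. For a fixed size $s$ with $1 \le s \le 0.6n$, I would call a set $S \subseteq A$ with $|S| = s$ \emph{bad} if $|Z_G(S)| \le s-1$, bound the probability that any bad set of size $s$ exists, and then sum these bounds over $s$ to show the total is $o(1)$. First I would observe that if $|Z_G(S)| \le s-1$, then there is a set $T \subseteq B$ with $|T| = s-1$ and $Z_G(S) \subseteq T$ (possible since $q \ge 0.9n > s-1$), so that no vertex of $S$ sends any edge into $B \setminus T$. Since $|B \setminus T| = q-s+1$ and all edges are present independently with probability $p$, the probability that a single fixed vertex of $S$ avoids $B\setminus T$ is $(1-p)^{q-s+1}$, and by independence across the $s$ vertices of $S$ the joint avoidance event has probability $(1-p)^{s(q-s+1)}$.

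Next I would assemble the union bound over the $\binom{n}{s}$ choices of $S$ and the $\binom{q}{s-1}$ choices of $T$:
\[
\Pr\bigl[\exists\,S,\ |S|=s,\ |Z_G(S)|\le s-1\bigr] \le \binom{n}{s}\binom{q}{s-1}(1-p)^{s(q-s+1)}.
\]
Then I would substitute the hypotheses. Because $p \ge 0.5$ we have $1-p \le \nicefrac{1}{2}$, and because $q \ge 0.9n$ and $s \le 0.6n$ the ``escape set'' remains a constant fraction of $n$, namely $q-s+1 \ge 0.3n$. Bounding the product of the two binomials crudely by $(en)^{2s}$, each term is at most $(en)^{2s}\,2^{-0.3ns} = \exp\bigl(s\,(2\ln(en) - 0.3\,n\ln 2)\bigr)$. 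For all large $n$ the $\Theta(n)$ term dominates the $O(\log n)$ term, so the bracket is negative of order $-\Theta(n)$ and each term is at most $e^{-\Omega(ns)}$. Summing the resulting geometric-type series over $1 \le s \le 0.6n$ gives $\sum_{s\ge 1} e^{-\Omega(ns)} \to 0$, so with high probability no bad $S$ of any size $\le 0.6n$ exists, which is exactly the claim.

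The main obstacle, and the only genuinely delicate point, is balancing the combinatorial entropy of selecting $S$ and $T$ (the $(en)^{2s}$ factor) against the probabilistic gain from $S$ avoiding the complement $B\setminus T$. The whole argument hinges on keeping $|B \setminus T| = q-s+1$ a constant fraction of $n$; this is precisely where the size threshold $s \le 0.6n$ and the lower bound $q \ge 0.9n$ enter, since together they force $q-s+1 \ge 0.3n$. If $s$ were allowed to approach $q$, the escape set would shrink, the factor $(1-p)^{q-s+1}$ would no longer dominate the binomial entropy, and the bound would break down, which is consistent with the fact that Hall's condition for large sets needs a separate treatment. Beyond this, the remaining work is routine bookkeeping: verifying $2\ln(en) - 0.3\,n\ln 2 < 0$ for all sufficiently large $n$ and checking that the sum over $s$ vanishes.
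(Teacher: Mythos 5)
Your argument is correct. Note that the paper itself does not prove this lemma; it is imported verbatim from \cite{manurangsi2020closing}, so there is no in-paper proof to compare against. Your first-moment computation is the standard one and it closes: the witness pair $(S,T)$ with $|T|=|S|-1$ correctly captures the failure event, the avoidance probability $(1-p)^{s(q-s+1)}$ follows from edge independence, and the key numerical point --- that $q-s+1\ge 0.3n$ remains a constant fraction of $n$ precisely because $q\ge 0.9n$ and $s\le 0.6n$ --- makes the exponential gain $2^{-0.3ns}$ swamp the $(en)^{2s}$ entropy term, so the geometric sum over $s$ vanishes. This matches the role the constants $0.9$ and $0.6$ play in the statement, and is in the same spirit as the proof in the cited reference.
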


\noindent 
Finally, we talk about right saturated $r$-matchings. 
\begin{definition}
    An $r$-matching of a bipartite graph $G = (L, R, E)$ is a subset of edges $E'\subseteq E$ such that the subgraph $G' = (L, R, E')$ has degree at most $r$ for vertices in $L$ and degree at most $1$ for vertices in $R$.
\end{definition}
\begin{definition}
    A \textbf{right-saturated} $r$-matching of a bipartite graph $G = (L, R, E)$ is a subset of edges $E'\subseteq E$ such that the subgraph $G' = (L, R, E')$ has degree at most $r$ for vertices in $L$ and degree \textbf{exactly} $1$ for vertices in $R$. It is said to be \textbf{perfect} if the subgraph $G'$ has degree \textbf{exactly} $r$ for vertices in $L$ and degree \textbf{exactly} $1$ for vertices in $R$
\end{definition}

\begin{lemma}\label{lem:Perfect-r-Matching}
    Let $G \sim \mathcal{G} (n, m, p)$ such that $p = \nicefrac{(\log{n} + \omega(1))}{n}$ and $m = rn$. Then, with high probability, $G$ contains a perfect $r$-matching.
\end{lemma}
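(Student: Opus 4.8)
The plan is to reduce the existence of a perfect $r$-matching to the existence of $r$ ordinary perfect matchings, each living in an independent Erd\H{o}s--R\'enyi bipartite graph, so that Lemma~\ref{lem:Erdos-Renyi} can be applied directly. Concretely, before exposing any edges I would fix a partition of the $m = rn$ right vertices into $r$ blocks $R_1, \dots, R_r$, each of size exactly $n$; for instance $R_k = \{r_{(k-1)n+1}, \dots, r_{kn}\}$. For each $k$, consider the induced bipartite graph $G_k = (L, R_k, E_k)$, where $L$ is the full set of $n$ left vertices and $E_k$ is the restriction of $E$ to $L \times R_k$.

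The key observation is that, since every potential edge of $G$ is present independently with probability $p$, each $G_k$ is distributed exactly as $\mathcal{G}(n, n, p)$, and the graphs $G_1, \dots, G_r$ are mutually independent because they involve pairwise disjoint sets of vertex pairs. Because $p = \nicefrac{(\log{n} + \omega(1))}{n}$, Lemma~\ref{lem:Erdos-Renyi} guarantees that each $G_k$ contains a perfect matching $M_k$ with high probability. I would then argue that the union $M := M_1 \cup \cdots \cup M_r$ is a perfect $r$-matching of $G$: each $M_k$ saturates every left vertex exactly once and every vertex of $R_k$ exactly once, so since the blocks $R_k$ are disjoint, the union gives each left vertex degree exactly $r$ (one edge into each block) and each right vertex degree exactly $1$ (it lies in a unique block and is matched there). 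This is precisely the definition of a perfect $r$-matching.

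To pass from ``each $G_k$ has a perfect matching with high probability'' to ``all of them do simultaneously,'' I would apply a union bound: $\Pr[\text{some } G_k \text{ lacks a perfect matching}] \le \sum_{k=1}^r \Pr[G_k \text{ lacks a perfect matching}] = r \cdot o(1)$, which tends to $0$ since $r$ is a constant. The only point requiring care is this union bound; it is immediate for constant $r$, but if one wished to let $r$ grow with $n$, one would need the $\omega(1)$ slack in $p$ to outpace $\log r$ so that the per-block failure probability, which is of order $O(e^{-\omega(1)})$, survives multiplication by $r$. I do not anticipate any other obstacle, as the reduction is purely structural and the degree bookkeeping is exact.
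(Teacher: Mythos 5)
Your proof is correct, but it takes a genuinely different route from the paper's. The paper reduces to a \emph{single} application of Lemma~\ref{lem:Erdos-Renyi}: it makes $r$ copies of each left vertex (together with their incident edges), asserts that the resulting graph is distributed as $\mathcal{G}(m,m,p)$, notes $p = \nicefrac{(\log n + \omega(1))}{n} \geq \nicefrac{(\log m + \omega(1))}{m}$, and reads the perfect $r$-matching of $G$ off a single perfect matching of the blown-up graph. You instead partition the $rn$ right vertices into $r$ blocks of size $n$, observe that the $r$ induced subgraphs are honestly independent samples of $\mathcal{G}(n,n,p)$, apply the lemma to each, and take the union of the $r$ matchings with a union bound. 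The trade-off is worth noting: your decomposition keeps the probabilistic model exactly right --- the blocks involve disjoint vertex pairs, so independence and the $\mathcal{G}(n,n,p)$ distribution hold verbatim --- whereas in the paper's blow-up the $r$ copies of a left vertex have identical (perfectly correlated) neighborhoods, so the duplicated graph is not literally $\mathcal{G}(m,m,p)$ and the lemma does not apply off the shelf without an additional argument. The price you pay is the extra union bound over $r$ blocks, which is free for constant $r$ (the only regime in which the paper invokes the lemma with a fixed $r$) and which you correctly flag as needing the $\omega(1)$ slack to dominate $\log r$ if $r$ were allowed to grow. Your degree bookkeeping for the union of the block matchings is exact and matches the paper's definition of a perfect $r$-matching.
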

\begin{proof}
    Given $G$, define a new random bipartite graph $G'$ by making $r$ copies of each left vertex as well as the edges incident on the vertices. Clearly $G' \sim \mathcal{G} (m, m, p)$, and by Lemma \ref{lem:Erdos-Renyi}, it contains a perfect matching ($p = \frac{\log{n} + \omega(1)}{n} \geq \frac{\log{m} + \omega(1)}{m}$). This perfect matching in $G'$ corresponds to a perfect $r$-matching in $G$.
\end{proof}

A similar reduction, by adding $n-m$ dummy right vertices and edges to these independently with probability $p,$ helps to prove the following. 
\begin{lemma}\label{lem:Right-Saturated-Matching}
    Let $G \sim \mathcal{G} (n, m, p)$ such that $p = \nicefrac{(\log{n} + \omega(1))}{n}$ and $n > m$. Then, with high probability, $G$ contains a right saturated matching.
\end{lemma}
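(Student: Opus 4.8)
The plan is to mirror the reduction used for Lemma~\ref{lem:Perfect-r-Matching}, but replacing the duplication of left vertices with the addition of dummy right vertices so as to rebalance the bipartition. Concretely, I would start from $G \sim \mathcal{G}(n,m,p)$ with $n > m$ and construct an augmented graph $\tilde{G}$ by introducing $n-m$ fresh right vertices $r_{m+1}, \dots, r_n$ and, for every left vertex $l_i$ and every dummy $r_j$, placing the edge $(l_i, r_j)$ independently with probability $p$. Since the original edges of $G$ are already included independently with probability $p$ and the new edges are drawn independently with the same probability, $\tilde{G}$ is distributed exactly as $\mathcal{G}(n,n,p)$.

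Next I would invoke Lemma~\ref{lem:Erdos-Renyi} on $\tilde{G}$. The point to check is that the edge-probability threshold still applies: because the augmented graph now has $n$ right vertices, the hypothesis $p = \nicefrac{(\log{n} + \omega(1))}{n}$ is exactly the threshold required by Lemma~\ref{lem:Erdos-Renyi} with parameter $n$. Hence $\tilde{G}$ contains a perfect matching $M$ with high probability.

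Finally, I would extract the desired structure by discarding the dummy vertices. In the perfect matching $M$, every right vertex of $\tilde{G}$ — in particular each original vertex $r_1, \dots, r_m$ — is matched to a distinct left vertex, and each left vertex is matched at most once. Restricting $M$ to the edges incident to $\{r_1, \dots, r_m\}$ therefore yields a subset of edges of $G$ in which every original right vertex has degree \emph{exactly} $1$ and every left vertex has degree \emph{at most} $1$, which is precisely a right-saturated matching of $G$.

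I do not anticipate a serious obstacle, since the argument is a direct analogue of Lemma~\ref{lem:Perfect-r-Matching}; the only point that warrants care is verifying that the augmentation preserves both the edge-independence structure and the precise probability threshold demanded by Lemma~\ref{lem:Erdos-Renyi}, so that the reduction to a balanced Erd{\H{o}}s--R\'enyi graph is legitimate. With that in place, the restriction step is immediate.
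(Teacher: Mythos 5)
Your proposal is correct and matches the paper's intended argument exactly: the paper itself only sketches this lemma in one sentence ("a similar reduction, by adding $n-m$ dummy right vertices and edges to these independently with probability $p$"), and your write-up fills in precisely that reduction — augment to $\mathcal{G}(n,n,p)$, apply Lemma~\ref{lem:Erdos-Renyi}, and restrict the perfect matching to the original right vertices. No gaps.
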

\section{Envy-freeness for Large number of Goods}\label{sec:ef-large}
Before we prove our envy-freeness result, we need the following result on the expected conditional utilities of goods.

\begin{lemma}\label{lem:Utility-Gap}
    Let $\mathcal{U}$ be a \textbf{non-atomic}, \textbf{continuous} distribution and $u_1, \dots, u_n$ are $n$ i.i.d. draws from $\mathcal{U}$, then for any $i, i' \in [n], i \neq i'$

    \begin{equation*}
        \mathbb{E}[u_i \mid \arg \max_{j \in [n]} u_j = i] > \mathbb{E}[u_i \mid \arg \max_{j \in [n]} u_j = i']
    \end{equation*}
\end{lemma}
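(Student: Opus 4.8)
The plan is to exploit the exchangeability of the i.i.d.\ draws and reduce both conditional expectations to symmetric functions of the order statistics. Write $u_{(1)} \le u_{(2)} \le \cdots \le u_{(n)}$ for the sorted sample. Since $\mathcal{U}$ is non-atomic, the $n$ values are distinct almost surely, so $\arg\max_{j} u_j$ is a.s.\ unique, and by the i.i.d.\ symmetry of the draws $\Pr[\arg\max_{j} u_j = j_0] = \nicefrac{1}{n}$ for every $j_0 \in [n]$. I would also note that, again by symmetry, the sorted vector $(u_{(1)},\dots,u_{(n)})$ is independent of the identity of the maximizer, a fact used below.

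First I would handle the left-hand side. Conditioned on $\{\arg\max_j u_j = i\}$, the value $u_i$ is by construction the largest of the $n$ draws, so its conditional law is exactly that of $u_{(n)}$; hence $\mathbb{E}[u_i \mid \arg\max_j u_j = i] = \mathbb{E}[u_{(n)}]$. Next I would handle the right-hand side. Fix $i' \ne i$ and condition on $\{\arg\max_j u_j = i'\}$, so that $u_{i'} = u_{(n)}$ and $u_i$ is one of the remaining $n-1$ values. The key observation is that any permutation of the index set $[n]\setminus\{i'\}$ leaves both the i.i.d.\ law and the event $\{\arg\max_j u_j = i'\}$ invariant, so conditioned on this event the vector $(u_j)_{j \ne i'}$ is exchangeable. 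Therefore $u_i$ is equally likely to occupy any of the ranks $1,\dots,n-1$, which gives
\[
\mathbb{E}\!\left[u_i \mid \arg\max_j u_j = i'\right] = \frac{1}{n-1}\sum_{k=1}^{n-1} \mathbb{E}[u_{(k)}],
\]
where I use the independence of the sorted vector from the maximizer's identity to drop the conditioning inside the order-statistic expectations.

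Finally, I would compare the two expressions. Since the draws are a.s.\ distinct by non-atomicity, $u_{(n)} > u_{(k)}$ holds almost surely for every $k \le n-1$, and as all variables lie in $[0,1]$ the expectations are finite, so $\mathbb{E}[u_{(n)}] > \mathbb{E}[u_{(k)}]$ for each such $k$. Averaging over $k = 1,\dots,n-1$ yields $\mathbb{E}[u_{(n)}] > \frac{1}{n-1}\sum_{k=1}^{n-1}\mathbb{E}[u_{(k)}]$, which is precisely the claimed strict inequality.

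The step I expect to be the main obstacle is making the exchangeability argument on the right-hand side fully rigorous, namely that conditioning on a specific other agent attaining the maximum leaves the non-maximal coordinates exchangeable and independent of the maximizer's label. A clean alternative that avoids explicit order-statistic bookkeeping is a law-of-total-expectation reduction: by symmetry $\mathbb{E}[u_i \mid \arg\max_j u_j = i']$ equals a single constant $B$ for all $i' \ne i$, so $\mathbb{E}[u_i] = \frac{1}{n}\,\mathbb{E}[u_i \mid \arg\max_j u_j = i] + \frac{n-1}{n}\,B$; rearranging shows that the desired inequality is equivalent to the intuitively simpler statement $\mathbb{E}[u_i] > \mathbb{E}[u_i \mid \arg\max_j u_j = i']$, i.e.\ that forcing some other coordinate to dominate strictly lowers the conditional mean of $u_i$. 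Either route closes the argument; I would present the order-statistic version as primary since it directly exhibits the gap as a difference between the top and the averaged lower order statistics.
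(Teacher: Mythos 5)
Your proposal is correct. Your primary route, however, is genuinely different from the paper's: you identify both conditional expectations explicitly in terms of order statistics, showing $\mathbb{E}[u_i \mid \arg\max_j u_j = i] = \mathbb{E}[u_{(n)}]$ and, via exchangeability of the non-maximizing coordinates and the standard independence of the order-statistic vector from the ranking permutation, $\mathbb{E}[u_i \mid \arg\max_j u_j = i'] = \frac{1}{n-1}\sum_{k=1}^{n-1}\mathbb{E}[u_{(k)}]$; the strict gap then comes from $u_{(n)} > u_{(k)}$ almost surely (no ties, by non-atomicity). The paper instead never touches the lower order statistics: it writes $\mathbb{E}[u_i] = \frac{1}{n}\mathbb{E}[u_i \mid E_i] + \frac{n-1}{n}\mathbb{E}[u_i \mid E_{i'}]$ using symmetry across the $n-1$ events $E_{i'}$, invokes the single fact $\mathbb{E}[\max_j u_j] > \mathbb{E}[u_i]$ for a non-degenerate distribution, and rearranges --- which is exactly the ``clean alternative'' you sketch in your last paragraph. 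The paper's route is shorter and needs only one inequality about the maximum; your order-statistic route does more bookkeeping (and the exchangeability-after-conditioning step is the one place requiring care, as you note) but gives a more informative conclusion, exhibiting the envy gap quantitatively as $\mathbb{E}[u_{(n)}] - \frac{1}{n-1}\sum_{k<n}\mathbb{E}[u_{(k)}]$ rather than merely asserting it is positive. Both arguments are sound and rest on the same two pillars: almost-sure distinctness of the draws and exchangeability.
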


\begin{proof}
    For any $i \in [n]$ define the event
    \[
    E_i = \Bigl\{\arg\max_{j \in [n]} u_j = i\Bigr\}
    \]

    Since $\mathcal{U}$ is continuous, there are no ties among $u_1, \dots, u_n$, so exactly one of $E_1, \dots, E_n$ occurs and \(\Pr(E_i) = \frac{1}{n}, \forall i \in [n]\).

    Since $E_1, \dots, E_n$ are mutually exclusive and exhaustive events, we have by the law of total probability,
    \[
    \mathbb{E}[u_i] = P[E_1]\,\mathbb{E}[u_i \mid E_1] + \dots + P[E_n]\,\mathbb{E}[u_i \mid E_n]
    \]
    i.e.,
    \[
    \mathbb{E}[u_i] = \frac{1}{n}(\mathbb{E}[u_i \mid E_1] + \dots + \mathbb{E}[u_i \mid E_n])
    \]

    Now again, since there are no ties among $u_1, \dots, u_n$, we can say that for any $i', i'' \in [n]$ such that $i'\neq i$ and $i''\neq i$, $\mathbb{E}[u_i \mid E_{i'}] = \mathbb{E}[u_i \mid E_{i''}]$. Thus,
    \begin{equation}\label{eq:Ei,i'}
        \mathbb{E}[u_i] = \frac{n-1}{n}\mathbb{E}[u_i \mid E_{i'}] + \frac{1}{n}\mathbb{E}[u_i\mid E_i]
    \end{equation}
    where $i' \neq i$.
    
    On the event $E_i$, we have
    \[
    u_i = \max\{u_1, \dots, u_n\}.
    \]
    
    Denote $M_n = \max\{u_1, \dots, u_n\}$.  Then
    \[
    \mathbb{E}\bigl[u_i \mid E_i\bigr]
    =
    \mathbb{E}\bigl[M_n\bigr].
    \]
    
    Because $\mathcal{U}$ is non‐degenerate (i.e.\ $\mathrm{Var}(\mathcal{U}) > 0$, $\mathcal{U}$ is not a point-mass), it is standard that
    \[
    \mathbb{E}[\,M_n\,] \;>\; \mathbb{E}[\,u_i\,].
    \]
    
    Hence
    \begin{equation}\label{eq:Ei}
    \mathbb{E}\bigl[u_i \mid E_i\bigr] 
    =
    \mathbb{E}[\,M_n\,] 
    > 
    \mathbb{E}[\,u_i\,].
    \end{equation}
    
    Combining \eqref{eq:Ei,i'} and \eqref{eq:Ei} gives
    \[
    \mathbb{E}[u_i \mid E_i]
    >
    \frac{n-1}{n}\mathbb{E}[u_i \mid E_{i'}] + \frac{1}{n}\mathbb{E}[u_i\mid E_i]
    \]
    i.e.,
    \[
    \mathbb{E}[u_i \mid E_i]
    >
    \mathbb{E}[u_i \mid E_{i'}]
    \]
    which is precisely
    \[
    \mathbb{E}[u_i \mid \arg \max_{j \in [n]} u_j = i] 
    > 
    \mathbb{E}[u_i \mid \arg \max_{j \in [n]} u_j = i']
    \]
\end{proof}

\eflargegoods*


\begin{proof}
    Construct an allocation by assigning each good \( j \in M \) to the agent who values it most: \( \arg \max_{k \in N} u_k(j) \). The allocation has maximum social welfare by construction. We now show that this allocation is EF with high probability.

    For every good \( j \in M \), each agent \( i \in N \) draws their utility \( u_i(j) \sim \mathcal{U}_j \), where \( \mathcal{U}_j \) is a \textit{non-atomic}, \textit{continuous} distribution with support in \( [0, 1] \). By Lemma \ref{lem:Utility-Gap} for each \( j \in M \), there exist constants \( \mu_j \) and \( \mu_j^* \) such that, for any \( i, i' \in N \):

    \begin{align*}
    & 0 < \mathbb{E} \left[u_i(j) \, \middle| \, \arg \max_{k \in N} u_k(j) = \{i'\} \right] \leq \mu_j \\
    & < \mu_j^* \leq \mathbb{E} \left[u_i(j) \, \middle| \, \arg \max_{k \in N} u_k(j) = \{i\} \right]
    \end{align*}
    
    Define for fixed \( i \in N \), \( j \in M \):
    
    \[
    X_i^j =
    \begin{cases}
    u_i(j), & \text{if } \arg \max_{k \in N} u_k(j) = \{i\} \\
    0, & \text{otherwise}
    \end{cases}
    \]
    
    Then \( u_i(A_i) = \sum_{j \in M} X_i^j \). Moreover,
    
    \begin{align*}
    \mathbb{E}[X_i^j] & = \Pr\left[\arg \max_{k \in N} u_k(j) = \{i\}\right] \cdot \\ 
    & \mathbb{E}\left[ u_i(j) \,\middle|\, \arg \max_{k \in N} u_k(j) = \{i\} \right] \\
    & = \frac{1}{n} \cdot \mathbb{E}\left[ u_i(j) \,\middle|\, \arg \max_{k \in N} u_k(j) = \{i\} \right] \geq \frac{\mu_j^*}{n}
    \end{align*}
    
    Using linearity of expectation,
    
    \begin{align*}
    & \mathbb{E}[u_i(A_i)] = \sum_{j \in M} \mathbb{E}[X_i^j] \geq \mu_a^* \cdot \frac{m}{n}, \quad \\ 
    & \text{where } \mu_a^* = \frac{1}{m} \sum_{j \in M} \mu_j^*
    \end{align*}
    
    Now, define for \( i, i' \in N \), \( i \neq i' \), and \( j \in M \):
    
    \[
    Y_{ii'}^j =
    \begin{cases}
    u_i(j), & \text{if } \arg \max_{k \in N} u_k(j) = \{i'\} \\
    0, & \text{otherwise}
    \end{cases}
    \]
    
    Then \( u_i(A_{i'}) = \sum_{j \in M} Y_{ii'}^j \). Furthermore,
    
    \begin{align*}
    \mathbb{E}[Y_{ii'}^j] 
    &= \frac{1}{n} \cdot \mathbb{E}\left[ u_i(j) \,\middle|\, \arg \max_{k \in N} u_k(j) = \{i'\} \right] \leq \frac{\mu_j}{n}
    \end{align*}
    
    Let \( Z_{ii'}^j \in [0,1] \) be such that \( \mathbb{E}[Z_{ii'}^j] = \frac{\mu_j}{n} \), and \( Z_{ii'}^j \) stochastically dominates \( Y_{ii'}^j \). Then:
    
    \[
    \sum_{j \in M} \mathbb{E}[Z_{ii'}^j] = \mu_a \cdot \frac{m}{n}, \quad \text{where } \mu_a = \frac{1}{m} \sum_{j \in M} \mu_j
    \]
    
    and for all \( x \in \mathbb{R}^+ \):
    
    \[
    \Pr\left[\sum_{j \in M} Z_{ii'}^j \geq x\right] \geq \Pr\left[\sum_{j \in M} Y_{ii'}^j \geq x\right]
    \]
    
    Let \( E_{ii'} \) be the event that agent \( i \) envies agent \( i' \), i.e., \( \sum_{j \in M} Y_{ii'}^j > \sum_{j \in M} X_i^j \). This occurs only if:
    
    \begin{align*}
    \sum_{j \in M} X_i^j 
    & \leq \mu_a^* \frac{m}{n} - \frac{\mu_a^* - \mu_a}{2} \frac{m}{n} \\ 
    & = \left(1 - \frac{\mu_a^* - \mu_a}{2\mu_a^*}\right) \mu_a^* \frac{m}{n} \\ 
    & \leq \left(1 - \frac{\mu_a^* - \mu_a}{2\mu_a^*}\right) \cdot \mathbb{E}\left[\sum_{j \in M} X_i^j\right]
    \end{align*}
    
    or
    
    \begin{align*}
    \sum_{j \in M} Y_{ii'}^j 
    & \geq \mu_a \frac{m}{n} + \frac{\mu_a^* - \mu_a}{2} \frac{m}{n} \\ 
    & = \left(1 + \frac{\mu_a^* - \mu_a}{2\mu_a}\right) \mu_a \frac{m}{n} \\
    & = \left(1 + \frac{\mu_a^* - \mu_a}{2\mu_a}\right) \cdot \mathbb{E}\left[\sum_{j \in M} Z_{ii'}^j\right]
    \end{align*}
    
    Let \( \epsilon = \min\left\{1, \frac{\mu_a^* - \mu_a}{2\mu_a^*} \right\} \), so \( \epsilon < \frac{\mu_a^* - \mu_a}{2\mu_a} \) also holds. $\epsilon > 0$ also holds as $\mu_j < \mu_j^*$ for all $j$ implies $\mu_a < \mu_a^*$.
    
    Since \( X_i^j \) and \( Z_{ii'}^j \) are independent across \( j \), using Chernoff bounds:
    
    \begin{align*}
    & \Pr\left[ \sum_{j \in M} X_i^j \leq (1 - \epsilon) \cdot \mathbb{E}\left[\sum_{j \in M} X_i^j\right] \right] \\
    & \leq \exp\left(- \frac{\epsilon^2}{2} \mu_a^* \cdot \frac{m}{n} \right)
    \end{align*}
    
    and
    
    \begin{align*}
    & \Pr\left[ \sum_{j \in M} Y_{ii'}^j \geq (1 + \epsilon) \cdot \mathbb{E}\left[\sum_{j \in M} Z_{ii'}^j\right] \right] \\ 
    & \leq \exp\left(- \frac{\epsilon^2}{3} \mu_a \cdot \frac{m}{n} \right)
    \end{align*}
    
    Set \( n \leq \frac{\epsilon^2 \mu_a}{3} \cdot \frac{m}{\ln(2m^3)} \). By the union bound:
    
    \begin{align*}
    \Pr[E_{ii'}] 
    &\leq \exp\left(- \frac{\epsilon^2}{2} \mu_a^* \cdot \frac{m}{n} \right) + \exp\left(- \frac{\epsilon^2}{3} \mu_a \cdot \frac{m}{n} \right) \\
    &\leq 2 \cdot \frac{1}{2m^3} = \frac{1}{m^3}
    \end{align*}
    
    Allocation \( A \) is EF if and only if no event \( E_{ii'} \) occurs. The probability that \( A \) is not EF is at most:
    
    \[
    \Pr\left[\bigvee_{\substack{i, i' \in N \\ i \neq i'}} E_{ii'}\right] \leq \sum_{\substack{i, i' \in N \\ i \neq i'}} \Pr[E_{ii'}] \leq \binom{n}{2} \cdot \frac{1}{m^3} \leq \frac{1}{m}
    \]
    
    Thus, the probability that \( A \) is not EF vanishes as \( m \to \infty \) (\( n \to \infty \) and \(m = \Omega (n \log{n})\)).
\end{proof}

\section{Envy-freeness for Small Number of Goods}\label{sec:ef-small-goods}

\efsmallgoods*


\begin{proof}
    Let $m = xn$ for some $x \in \mathbb{Z}_{\geq 0}$. Let $\alpha_{min} = \min_{j \in M} \alpha_j$ and $\beta_{max} = \max_{j \in M} \beta_j$.

    Construct a bipartite graph $G = (N, M, E)$ where edges are given by:
    \[
    E = \left\{ (i, j) \in N \times M : u_i(j) \geq \tau_j \right\}
    \]
    where $\tau_j := 1 - \frac{1.1 \log{n}}{\alpha_j n}$ and $\tau := \min_{j \in M} \tau_j = 1 - \frac{1.1 \log{n}}{\alpha_{min} n}$.

    Let $\tau' := 1 - \frac{1}{n^{4/x + \varepsilon}}$ for some $\varepsilon > 0$ (say $\epsilon = 0.1$) and $x \geq 5$. The graph $G$ is an Erdős-Rényi random graph with edge probabilities at least $\frac{\log{n} + \omega(1)}{n}$, so by Lemma \ref{lem:Perfect-r-Matching} with high probability, there exists a perfect $x$-matching corresponding to an allocation $\{A_1, \dots, A_n\}$ such that:
    \[
    u_i(A_i) \geq x\tau
    \]

    Now fix any pair $i, i' \in N$. The probability that more than $\frac{x}{2}$ goods in $A_i$ have $u_{i'}(j) > \tau'$ is bounded by:
    \begin{align*}
    \Pr\left( |\{j \in A_i : u_{i'}(j) > \tau'\}| > \frac{x}{2} \right) 
    & \leq \left(\frac{\beta_{max}}{n^{4/x + \varepsilon}} \right)^{x/2} \\ 
    & = o(n^{-2})
    \end{align*}

    By union bound over all $i, i' \in N$:
    \begin{align*}
    & \Pr \left( \exists i,i' \in N: |\{j \in A_i : u_{i'}(j) > \tau'\}| > \frac{x}{2} \right) \\ 
    & = n^2 \cdot o(n^{-2}) = o(1)
    \end{align*}

    Hence, with high probability:
    \[
    u_i(A_{i'}) \leq \frac{x}{2} \tau' + \frac{x}{2}
    \Rightarrow u_i(A_i) - u_i(A_{i'}) \geq x \tau - \left( \frac{x}{2} \tau' + \frac{x}{2} \right)
    \]
    \[
    = \frac{x}{2} \left( \frac{1}{n^{4/x + \varepsilon}} - \frac{2.2 \log{n}}{\alpha_{min} n} \right)
    \]

    We want this to be at least $0$. Thus,
    \[
    \frac{x}{2} \left( \frac{1}{n^{4/x + \varepsilon}} - \frac{2.2 \log{n}}{\alpha_{min} n} \right) \geq 0
    \]
    
    For sufficiently large $n$, and $x$ strictly greater than $4$, $\frac{1}{n^{4/x + \varepsilon}} \geq \frac{2.2 \log{n}}{\alpha_{min} n}$.
    



    The allocation $(A_1, \dots, A_n)$ is thus envy-free.
\end{proof}

\section{Sampling with Large Number of Goods}\label{sec:ef-sample}

The model for this result assumes a discrete distribution with a finite support for the utilities of goods. For each $j \in M$, there exists a distribution $\mathcal{U}_j$ with finite support (say $\Omega_j$) on $[0,1]$, such that $u_i(j) \sim \mathcal{U}_j$ independently for each $i \in N$. Each $\mathcal{U}_j$ is $(\alpha_j, \beta_j)$-PDF bounded ($\alpha_j , \beta_j$ may be distinct for each $j \in M$) and for all $j \in M$, $1 \in \Omega_j$ (the value $1$ is always in the support of the distribution). Let  $\mu_j = \mathbb{E}[\mathcal{U}_j]$.

\samplingdiscrete*


As a direct corollary, we also have

\begin{corollary}\label{cor:sampling-discrete}
    Let \( m = \Omega \left(n \log{n} \right) \) and \( m = \text{poly} \left( n \right)\), and the number of sampled agents for each good \( s = \Theta \left( \log{n} \right) \). Then, with high probability, a maximum social welfare and envy-free online allocation of goods exists as \( n \to \infty \).
\end{corollary}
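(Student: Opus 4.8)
The plan is to derive this directly from Theorem~\ref{thm:sampling-discrete} by substituting the polynomial bound on $m$ into its sample-size expression. First I would pin down the implicit constants. Since $m = \text{poly}(n)$, there is a constant $c \geq 1$ with $m \leq n^{c}$ for all large $n$, while the hypothesis $m = \Omega(n\log n)$ forces $m \geq n$; hence $\log n \leq \log m \leq c\log n$ for large $n$, giving $\log m = \Theta(\log n)$.

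Next I would translate this into the sample size demanded by Theorem~\ref{thm:sampling-discrete}. That theorem succeeds with $s = \frac{2\log m}{\alpha_{min}}$ samples per good, and $\alpha_{min}$ is a fixed positive constant determined by the distributions, not by $n$. Combining with $\log m = \Theta(\log n)$ yields $\frac{2\log m}{\alpha_{min}} = \Theta(\log n)$. Thus the sample size the theorem already requires is $\Theta(\log n)$, exactly the quantity named in the corollary; choosing the hidden constant in $s = \Theta(\log n)$ large enough guarantees $s \geq \frac{2\log m}{\alpha_{min}}$ for all sufficiently large $n$.

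Finally I would invoke the theorem. The only point needing a one-line justification is monotonicity in $s$: the mechanism assigns each good to the highest-valued agent among the $s$ sampled, and the single failure event driving the analysis---no sampled agent valuing the good at $1$---has probability at most $(1-\alpha_{min})^{s}$, which is nonincreasing in $s$. Hence enlarging the sample from $\frac{2\log m}{\alpha_{min}}$ to any larger $\Theta(\log n)$ value only decreases the per-good failure probability, so the union bound over the $m$ goods and the subsequent envy analysis in the proof of Theorem~\ref{thm:sampling-discrete} carry over unchanged. This produces, with high probability, a maximum social welfare and envy-free online allocation as $n \to \infty$.

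I do not expect a genuine obstacle here, as the corollary is essentially a substitution of $m = \text{poly}(n)$ into the theorem. The only care required is to confirm that the guarantee is monotone (and in fact improving) in the sample size, so that the corollary's ``$\Theta(\log n)$'' may safely be read as ``at least the $\frac{2\log m}{\alpha_{min}}$ demanded by Theorem~\ref{thm:sampling-discrete}.''
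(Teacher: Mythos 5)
Your proposal is correct and matches the paper's reasoning: the paper treats this as a ``direct corollary'' of Theorem~\ref{thm:sampling-discrete}, precisely because $m=\text{poly}(n)$ together with $m=\Omega(n\log n)$ gives $\log m=\Theta(\log n)$ and hence $s=\frac{2\log m}{\alpha_{min}}=\Theta(\log n)$. Your added remark on monotonicity in $s$ is a sensible (and valid) extra precaution that the paper leaves implicit.
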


\begin{proof}
    First, let us consider the maximum value for any good $j \in M$ from the sampled set $S_j$ of agents. For any good $j$

    \begin{equation*}
        \Pr \left( \forall i \in S_j, u_i(j) \neq 1 \right) \leq (1 - \alpha_{j})^{s}
    \end{equation*}

    By union bound over all goods,

    \begin{align*}
        \Pr \left( \exists j \text{ s.t. } \forall i \in S_j, u_i(j) \neq 1 \right) 
        & = \sum_{j \in M} (1 - \alpha_{j})^{s} \\ 
        & \leq m (1 - \alpha_{min})^{s} \\
        & \leq m \cdot e^{-\alpha_{min}s} \\
        & \leq \frac{1}{m}
    \end{align*}

    where the last inequality is due to the chosen value of $s$.

    Thus, with high probability, for all goods, there is always at least one agent in the sampled set of agents that has value equal to $1$. Hence, the allocation has maximum social welfare by construction. 
    
    If good $j$ is allocated to agent $i$ (that is $j \in A_i$), we also know that with high probability $u_i(j) = 1$. This in turn implies that, with high probability, $\mathbb{E} \left[ u_i(j) \middle| j \in A_i \right] = 1$

    Consider any distribution $\mathcal{U}_j$. Let the second largest element in $\Omega_j$ (after $1$) be $1 - \delta_j$ for some $\delta_j > 0$. Now,

    \begin{align*}
        \mu_j 
        & \leq \beta \cdot 1 + \left( 1 - \beta \right) \cdot \left( 1 - \delta_j \right) \\
        & = 1 - \delta + \beta \cdot \delta \\
        & < 1
    \end{align*}

    where the last inequality follows from the fact that $\beta < 1 \implies \beta \cdot \delta < \delta$.

    We also have

    \begin{align*}
        \mu_j 
        & = \mathbb{E} \left[ u_i(j) \middle| j \in A_i \right] \cdot \Pr \left( j \in A_i \right) + \\ 
        & \mathbb{E} \left[ u_i(j) \middle| j \notin A_i \right] \cdot \Pr \left( j \notin A_i \right) \\
        & = 1 \cdot \frac{1}{n} + \mathbb{E} \left[ u_i(j) \middle| j \notin A_i \right] \cdot \frac{n - 1}{n}
    \end{align*}

    ($\Pr \left( j \in A_i \right) = \frac{1}{n}$ as each agent is equally likely to be in $S_j$ and have $u_i(j) = 1$).

    \begin{align*}
        \mathbb{E} \left[ u_i(j) \middle| j \notin A_i \right]
        & = \frac{n \mu_j - 1}{n - 1} \\
        & = \frac{n \mu_j - \mu_j}{n - 1} + \frac{\mu_j - 1}{n - 1} \\
        & = \mu_j - \frac{1 - \mu_j}{n - 1} \\
        & < \mu_j
    \end{align*}

    where the last inequality follows from $\mu_j < 1$.

    We now show that the allocation is EF with high probability.
    
    Define for fixed \( i \in N \), \( j \in M \):
    
    \[
    X_i^j =
    \begin{cases}
    u_i(j), & \text{if } j \in A_i \\
    0, & \text{otherwise}
    \end{cases}
    \]
    
    Then \( u_i(A_i) = \sum_{j \in M} X_i^j \). Moreover,
    
    \begin{align*}
    \mathbb{E}[X_i^j] & = \Pr\left[ j \in A_i \right] \cdot  \mathbb{E}\left[ u_i(j) \,\middle|\, j \in A_i \right] \\
    & = \frac{1}{n} \cdot 1 = \frac{1}{n}
    \end{align*}
    
    Using linearity of expectation,
    
    \begin{align*}
    \mathbb{E}[u_i(A_i)] = \sum_{j \in M} \mathbb{E}[X_i^j] = \frac{m}{n}
    \end{align*}
    
    Now, define for \( i, i' \in N \), \( i \neq i' \), and \( j \in M \):
    
    \[
    Y_{ii'}^j =
    \begin{cases}
    u_i(j), & \text{if } j \in A_{i'} \\
    0, & \text{otherwise}
    \end{cases}
    \]
    
    Then \( u_i(A_{i'}) = \sum_{j \in M} Y_{ii'}^j \). Furthermore,
    
    \begin{align*}
    \mathbb{E}[Y_{ii'}^j] 
    &= \frac{1}{n} \cdot \mathbb{E}\left[ u_i(j) \,\middle|\, j \in A_{i'} \right] \\
    &= \frac{1}{n} \cdot \mathbb{E}\left[ u_i(j) \,\middle|\, j \notin A_i \right] \leq \frac{\mu_j}{n}
    \end{align*}
    
    Let \( Z_{ii'}^j \in [0,1] \) be such that \( \mathbb{E}[Z_{ii'}^j] = \frac{\mu_j}{n} \), and \( Z_{ii'}^j \) stochastically dominates \( Y_{ii'}^j \). Then:
    
    \[
    \sum_{j \in M} \mathbb{E}[Z_{ii'}^j] = \mu_a \cdot \frac{m}{n}, \quad \text{where } \mu_a = \frac{1}{m} \sum_{j \in M} \mu_j
    \]
    
    and for all \( x \in \mathbb{R}^+ \):
    
    \[
    \Pr\left[\sum_{j \in M} Z_{ii'}^j \geq x\right] \geq \Pr\left[\sum_{j \in M} Y_{ii'}^j \geq x\right]
    \]
    
    Let \( E_{ii'} \) be the event that agent \( i \) envies agent \( i' \), i.e., \( \sum_{j \in M} Y_{ii'}^j > \sum_{j \in M} X_i^j \). This occurs only if:
    
    \begin{align*}
    \sum_{j \in M} X_i^j 
    & \leq \frac{m}{n} - \frac{1 - \mu_a}{2} \frac{m}{n} \\ 
    & = \left(1 - \frac{1 - \mu_a}{2}\right) \frac{m}{n} \\ 
    & \leq \left(1 - \frac{1 - \mu_a}{2}\right) \cdot \mathbb{E}\left[\sum_{j \in M} X_i^j\right]
    \end{align*}
    
    or
    
    \begin{align*}
    \sum_{j \in M} Y_{ii'}^j 
    & \geq \mu_a \frac{m}{n} + \frac{1 - \mu_a}{2} \frac{m}{n} \\ 
    & = \left(1 + \frac{1 - \mu_a}{2\mu_a}\right) \mu_a \frac{m}{n} \\
    & = \left(1 + \frac{1 - \mu_a}{2\mu_a}\right) \cdot \mathbb{E}\left[\sum_{j \in M} Z_{ii'}^j\right]
    \end{align*}
    
    Let \( \epsilon = \min\left\{1, \frac{1 - \mu_a}{2} \right\} \), so \( \epsilon < \frac{1 - \mu_a}{2\mu_a} \) also holds. $\epsilon > 0$ also holds as $\mu_j < 1$ for all $j$ implies $\mu_a < 1$.
    
    Since \( X_i^j \) and \( Z_{ii'}^j \) are independent across \( j \), using Chernoff bounds:
    
    \begin{align*}
    & \Pr\left[ \sum_{j \in M} X_i^j \leq (1 - \epsilon) \cdot \mathbb{E}\left[\sum_{j \in M} X_i^j\right] \right] \\
    & \leq \exp\left(- \frac{\epsilon^2}{2} \cdot \frac{m}{n} \right)
    \end{align*}
    
    and
    
    \begin{align*}
    & \Pr\left[ \sum_{j \in M} Y_{ii'}^j \geq (1 + \epsilon) \cdot \mathbb{E}\left[\sum_{j \in M} Z_{ii'}^j\right] \right] \\ 
    & \leq \exp\left(- \frac{\epsilon^2}{3} \mu_a \cdot \frac{m}{n} \right)
    \end{align*}
    
    Set \( n \leq \frac{\epsilon^2 \mu_a}{3} \cdot \frac{m}{\ln(2m^3)} \). By the union bound:
    
    \begin{align*}
    \Pr[E_{ii'}] 
    &\leq \exp\left(- \frac{\epsilon^2}{2}  \cdot \frac{m}{n} \right) + \exp\left(- \frac{\epsilon^2}{3} \mu_a \cdot \frac{m}{n} \right) \\
    &\leq 2 \cdot \frac{1}{2m^3} = \frac{1}{m^3}
    \end{align*}
    
    Allocation \( A \) is EF if and only if no event \( E_{ii'} \) occurs. The probability that \( A \) is not EF is at most:
    
    \[
    \Pr\left[\bigvee_{\substack{i, i' \in N \\ i \neq i'}} E_{ii'}\right] \leq \sum_{\substack{i, i' \in N \\ i \neq i'}} \Pr[E_{ii'}] \leq \binom{n}{2} \cdot \frac{1}{m^3} \leq \frac{1}{m}
    \]
    
    Thus, the probability that \( A \) is not EF vanishes as \( m \to \infty \) (\( n \to \infty \) and \(m = \Omega (n \log{n})\)).
\end{proof}

We present similar, but weaker results for continuous distributions. For each $j \in M$, there exists a distribution $\mathcal{U}_j$ supported on $[0,1]$, such that $u_i(j) \sim \mathcal{U}_j$ independently for each $i \in N$. Each $\mathcal{U}_j$ is \textit{non-atomic}, \textit{continuous} and $(\alpha_j, \beta_j)$-PDF bounded ($\alpha_j , \beta_j$ may be distinct for each $j \in M$).

\samplingcontinuousconstant*


As a direct corollary, we also have

\begin{corollary}\label{cor:sampling-continuous-constant}
    Let \( m = \Omega \left(n \log{n} \right) \) and \( m = \text{poly} \left( n \right)\), and the number of sampled agents for each good \( s = \Theta \left( \log{n} \right) \). Then, with high probability, a $0.9$-approx. maximum social welfare and $0.8$-approx. envy-free online allocation of goods exists as \( n \to \infty \).
\end{corollary}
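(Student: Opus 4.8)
The plan is to reduce the corollary directly to Theorem~\ref{thm:sampling-continuous-constant} by exploiting the extra hypothesis $m = \mathrm{poly}(n)$, which is exactly what is needed to convert the $\log m$ appearing in the theorem's sample size into $\log n$. The theorem already supplies the full probabilistic machinery (the best-of-$s$ sampling rule, the threshold argument yielding $0.9$-approximate welfare, and the union-bounded envy analysis yielding $0.8$-approximate envy-freeness), so no new probabilistic work is required; the only task is to verify that the prescribed sample size $s = \Theta(\log n)$ matches the theorem's requirement $s = \frac{20 \log m}{\alpha_{min}}$.

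First I would establish that $\log m = \Theta(\log n)$. The lower bound is immediate: because $m = \Omega(n \log n)$ we have $m \ge n$ for all sufficiently large $n$, so $\log m \ge \log n$. For the upper bound I use $m = \mathrm{poly}(n)$: there is a fixed constant $d \ge 1$ with $m \le n^{d}$ for large $n$, whence $\log m \le d \log n$. Combining, $\log n \le \log m \le d \log n$, so $\log m = \Theta(\log n)$.

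Next, since every $\alpha_j$ is a fixed positive constant, $\alpha_{min} = \min_{j \in M} \alpha_j > 0$ is itself a positive constant, and therefore
\[
s = \frac{20 \log m}{\alpha_{min}} = \Theta(\log n).
\]
In other words, the $\Theta(\log n)$ sample size demanded by the corollary can be taken to equal the specific value $\frac{20 \log m}{\alpha_{min}}$ used in Theorem~\ref{thm:sampling-continuous-constant}, so that all of the theorem's hypotheses ($m = \Omega(n \log n)$ together with this $s$) hold verbatim. Its conclusion, namely that with high probability a $0.9$-approximate maximum social welfare and $0.8$-approximate envy-free online allocation exists as $n \to \infty$, then yields exactly the corollary.

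The only subtlety, and hence the ``hard'' part, is definitional rather than technical: one must confirm that $m = \mathrm{poly}(n)$ delivers the upper bound $\log m = O(\log n)$ (the lower bound being free from $m \ge n$), so that the constant-factor slack hidden inside $\Theta(\cdot)$ absorbs the fixed factor $\frac{20}{\alpha_{min}}$. Because this rests only on the polynomial growth assumption and the constancy of $\alpha_{min}$, the reduction is immediate, and no modification of either the sampling mechanism or the envy bound from the theorem is needed.
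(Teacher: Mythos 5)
Your proposal is correct and matches the paper's intent: the paper states this as a ``direct corollary'' of Theorem~\ref{thm:sampling-continuous-constant} with no separate argument, and the implicit reduction is precisely yours --- $m=\Omega(n\log n)$ gives $\log m \geq \log n$, $m=\mathrm{poly}(n)$ gives $\log m \leq d\log n$, and since $\alpha_{min}$ is a positive constant, $s = \frac{20\log m}{\alpha_{min}} = \Theta(\log n)$, so the theorem applies verbatim.
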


\begin{proof}
    First, let us consider the maximum value for any good $j \in M$ from the sampled set $S_j$ of agents. For any good $j$

    \begin{equation*}
        \Pr \left( \forall i \in S_j, u_i(j) < 0.9 \right) \leq (1 - 0.1 \alpha_{j})^{s}
    \end{equation*}

    By union bound over all goods,

    \begin{align*}
        \Pr \left( \exists j \text{ s.t. } \forall i \in S_j, u_i(j) < 0.9 \right) 
        & = \sum_{j \in M} (1 - 0.1 \alpha_{j})^{s} \\ 
        & \leq m (1 - 0.1 \alpha_{min})^{s} \\
        & \leq m \cdot e^{- 0.1 \alpha_{min}s} \\
        & \leq \frac{1}{m}
    \end{align*}

    where the last inequality is due to the chosen value of $s$.

    Thus, with high probability, for all goods, there is always at least one agent in the sampled set of agents that has value greater than or equal to $0.9$. Hence, the allocation is $0.9$-approximate maximum social welfare, with high probability, by construction. 
    
    If good $j$ is allocated to agent $i$ (that is $j \in A_i$), we also know that with high probability $u_i(j) \geq 0.9$. This in turn implies that, with high probability, $\mathbb{E} \left[ u_i(j) \middle| j \in A_i \right] \geq 0.9$.

    We now show that the allocation is $0.8$ EF with high probability.
    
    Define for fixed \( i \in N \), \( j \in M \):
    
    \[
    X_i^j =
    \begin{cases}
    u_i(j), & \text{if } j \in A_i \\
    0, & \text{otherwise}
    \end{cases}
    \]
    
    Then \( u_i(A_i) = \sum_{j \in M} X_i^j \). Moreover,
    
    \begin{align*}
    \mathbb{E}[X_i^j] & = \Pr\left[ j \in A_i \right] \cdot  \mathbb{E}\left[ u_i(j) \,\middle|\, j \in A_i \right] \\
    & \geq \frac{1}{n} \cdot 0.9 = \frac{0.9}{n}
    \end{align*}
    
    Using linearity of expectation,
    
    \begin{align*}
    \mathbb{E}[u_i(A_i)] = \sum_{j \in M} \mathbb{E}[X_i^j] \geq \frac{0.9 m}{n}
    \end{align*}
    
    Now, define for \( i, i' \in N \), \( i \neq i' \), and \( j \in M \):
    
    \[
    Y_{ii'}^j =
    \begin{cases}
    u_i(j), & \text{if } j \in A_{i'} \\
    0, & \text{otherwise}
    \end{cases}
    \]
    
    Then \( u_i(A_{i'}) = \sum_{j \in M} Y_{ii'}^j \). Furthermore,
    
    \begin{align*}
    \mathbb{E}[Y_{ii'}^j] 
    &= \frac{1}{n} \cdot \mathbb{E}\left[ u_i(j) \,\middle|\, j \in A_{i'} \right] \\
    &= \frac{1}{n} \cdot \mathbb{E}\left[ u_i(j) \,\middle|\, j \notin A_i \right] \leq \frac{1}{n}
    \end{align*}

    where the last inequality holds as all utilities are less than or equal to $1$.
    
    Let \( Z_{ii'}^j \in [0,1] \) be such that \( \mathbb{E}[Z_{ii'}^j] = \frac{1}{n} \), and \( Z_{ii'}^j \) stochastically dominates \( Y_{ii'}^j \). Then:
    
    \[
    \sum_{j \in M} \mathbb{E}[Z_{ii'}^j] = \frac{m}{n}
    \]
    
    and for all \( x \in \mathbb{R}^+ \):
    
    \[
    \Pr\left[\sum_{j \in M} Z_{ii'}^j \geq x\right] \geq \Pr\left[\sum_{j \in M} Y_{ii'}^j \geq x\right]
    \]
    
    Let \( E_{ii'} \) be the event that \( \frac{u_i \left( A_i \right) }{u_i \left( A_{i'} \right) } < 0.8 \), i.e., \( 0.8 \sum_{j \in M} Y_{ii'}^j > \sum_{j \in M} X_i^j \). This occurs only if:
    
    \begin{align*}
    \sum_{j \in M} X_i^j 
    & \leq 0.85 \frac{m}{n} \\ 
    & = \left(1 - \frac{0.05}{0.9}\right) \frac{0.9 m}{n} \\ 
    & \leq \left(1 - \frac{0.05}{0.9}\right) \cdot \mathbb{E}\left[\sum_{j \in M} X_i^j\right]
    \end{align*}
    
    or
    
    \begin{align*}
    \sum_{j \in M} Y_{ii'}^j 
    & \geq \frac{0.85}{0.8} \cdot \frac{m}{n} \\ 
    & = \left(1 + \frac{0.05}{0.8}\right) \frac{m}{n} \\
    & = \left(1 + \frac{0.05}{0.8}\right) \cdot \mathbb{E}\left[\sum_{j \in M} Z_{ii'}^j\right]
    \end{align*}
    
    Let \( \epsilon = \frac{0.05}{0.9} \), so \( \epsilon < \frac{0.05}{0.8} \) also holds.
    
    Since \( X_i^j \) and \( Z_{ii'}^j \) are independent across \( j \), using Chernoff bounds:
    
    \begin{align*}
    & \Pr\left[ \sum_{j \in M} X_i^j \leq (1 - \epsilon) \cdot \mathbb{E}\left[\sum_{j \in M} X_i^j\right] \right] \\
    & \leq \exp\left(- \frac{\epsilon^2}{2} \cdot \frac{0.9 m}{n} \right)
    \end{align*}
    
    and
    
    \begin{align*}
    & \Pr\left[ \sum_{j \in M} Y_{ii'}^j \geq (1 + \epsilon) \cdot \mathbb{E}\left[\sum_{j \in M} Z_{ii'}^j\right] \right] \\ 
    & \leq \exp\left(- \frac{\epsilon^2}{3} \cdot \frac{m}{n} \right)
    \end{align*}
    
    Set \( n \leq \frac{\epsilon^2}{3} \cdot \frac{m}{\ln(2m^3)} \). By the union bound:
    
    \begin{align*}
    \Pr[E_{ii'}] 
    &\leq \exp\left(- \frac{\epsilon^2}{2}  \cdot \frac{0.9m}{n} \right) + \exp\left(- \frac{\epsilon^2}{3} \cdot \frac{m}{n} \right) \\
    &\leq 2 \cdot \frac{1}{2m^3} = \frac{1}{m^3}
    \end{align*}
    
    Allocation \( A \) is $0.8$-approximate EF if and only if no event \( E_{ii'} \) occurs. The probability that \( A \) is not $0.8$-approximate EF is at most:
    
    \[
    \Pr\left[\bigvee_{\substack{i, i' \in N \\ i \neq i'}} E_{ii'}\right] \leq \sum_{\substack{i, i' \in N \\ i \neq i'}} \Pr[E_{ii'}] \leq \binom{n}{2} \cdot \frac{1}{m^3} \leq \frac{1}{m}
    \]
    
    Thus, the probability that \( A \) is not $0.8$-approximate EF vanishes as \( m \to \infty \) (\( n \to \infty \) and \(m = \Omega (n \log{n})\)).
\end{proof}


For our next result, we will assume the online and sampling model as follows. We assume that each good arrives one-by-one (online), and must be allocated immediately and irrevocably. For each good, we sample $s$ agents at random (uniformly). Let $S_j \subset N$ be the $s$ agents sampled for good $j \in M$. Out of these $s$ agents, we assign the good to agent with maximum value for the good. We will show that this simple algorithm gives strong envy-free and maximum social welfare guarantees. Our model is as discussed in Section \ref{sec:intro-model}, with the following additional mild constraint. As long as we are guaranteed a constant upper bound (say $\mu < 1$) on the means of the distributions of the goods, that is, $\mu_j < 1$ for all goods $j$, we have the following result.

\samplingcontinuous*

As a direct corollary, we also have,

\begin{corollary}\label{cor:sampling-continuous}
    Let \( m = \Omega (n \log{n}) \) and \( m = \text{poly} ( n )\), and the number of sampled agents for each good \( s = \Theta (( \log{n})^2) \). If, for all goods $j$, $\mathbb{E}[ \mathcal{U}_j] \leq \mu$ for some constant $\mu < 1$, then, with high probability, a $( 1 - \frac{1}{\log {n}})$-approx. maximum social welfare and envy-free online allocation of goods exists as \( n \to \infty \).
\end{corollary}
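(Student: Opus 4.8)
The plan is to analyze the stated online mechanism, which for each arriving good $j$ samples a uniform set $S_j$ of $s = \frac{2(\log m)^2}{\alpha_{min}}$ agents and assigns $j$ to $\arg\max_{k \in S_j} u_k(j)$, and to establish the welfare and envy guarantees on two separate high-probability events. For the $(1 - \frac{1}{\log n})$-approximate maximum social welfare, I would fix the threshold $t = 1 - \frac{1}{\log n}$ and show that with high probability every good is assigned to an agent valuing it at least $t$. Since $\mathcal{U}_j$ is $(\alpha_j,\beta_j)$-PDF bounded, a single draw falls below $t$ with probability at most $1 - \alpha_j(1-t) = 1 - \frac{\alpha_j}{\log n}$, so all $s$ sampled draws fall below $t$ with probability at most $\exp(-\frac{\alpha_j s}{\log n}) \leq \exp(-\frac{2(\log m)^2}{\log n})$. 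Because $m = \Omega(n\log n)$ forces $\log m = \Theta(\log n)$ and in particular $\log m \geq \log n$, a union bound over the $m$ goods leaves total failure probability $m\exp(-\frac{2(\log m)^2}{\log n}) \to 0$; this is precisely where the poly-logarithmic (squared) sample size is needed. On this event the realized welfare is at least $mt = m(1 - \frac{1}{\log n})$, while the optimal welfare is at most $m$, giving the claimed ratio.

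For envy-freeness I would reuse the $X_i^j, Y_{ii'}^j, Z_{ii'}^j$ framework of the earlier proofs. By symmetry $\Pr[j \in A_i] = \frac{1}{n}$, and conditioned on $i$ winning good $j$ its value equals the maximum $M_s^{(j)}$ of $s$ i.i.d.\ draws from $\mathcal{U}_j$, so $\mathbb{E}[X_i^j] = \frac{1}{n}\mathbb{E}[M_s^{(j)}]$. Writing $\mu_j = \frac{1}{n}\mathbb{E}[M_s^{(j)}] + \frac{n-1}{n}\mathbb{E}[u_i(j)\mid j\notin A_i]$ by the law of total expectation gives $\mathbb{E}[u_i(j)\mid j\notin A_i] = \mu_j - \frac{\mathbb{E}[M_s^{(j)}]-\mu_j}{n-1} < \mu_j \leq \mu$, so setting $\mu_a^* := \frac{1}{m}\sum_j \mathbb{E}[M_s^{(j)}]$ and $\mu_a := \frac{1}{m}\sum_j\mu_j$ yields $\mathbb{E}[u_i(A_i)] \geq \mu_a^*\frac{m}{n}$ and, after dominating each $Y_{ii'}^j$ by a $Z_{ii'}^j \in [0,1]$ of mean $\frac{\mu_j}{n}$, $\mathbb{E}[\sum_j Z_{ii'}^j] = \mu_a\frac{m}{n}$.

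The crux, and the step I expect to be the main obstacle, is showing that $\mu_a^*$ exceeds $\mu_a$ by a fixed positive constant rather than a vanishing amount; this is exactly what upgrades the guarantee from approximate to exact envy-freeness, and it is where the mean bound $\mu < 1$ is essential (the constant-threshold theorem, lacking such a bound, could only achieve approximate envy-freeness). Reusing the threshold computation from the welfare part, $\mathbb{E}[M_s^{(j)}] \geq t\cdot\Pr[M_s^{(j)} \geq t] \geq (1-\tfrac{1}{\log n})(1 - \exp(-\tfrac{2(\log m)^2}{\log n}))$, so $\mathbb{E}[M_s^{(j)}] \to 1$ uniformly in $j$. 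Hence for all large $n$ we have $\mu_a^* \geq \frac{1+\mu}{2}$ while $\mu_a \leq \mu$, giving the constant gap $\mu_a^* - \mu_a \geq \frac{1-\mu}{2} > 0$. Relating the sample maximum to the mean upper bound in this uniform way is the one genuinely new ingredient relative to the prior proofs.

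With a constant gap secured, the remainder mirrors the proof of Theorem~\ref{thm:Goods-Large-m} essentially verbatim: set $\epsilon = \min\{1, \frac{\mu_a^* - \mu_a}{2\mu_a^*}\} > 0$, observe that the envy event $E_{ii'}$ forces either $\sum_j X_i^j \leq (1-\epsilon)\,\mathbb{E}[\sum_j X_i^j]$ or $\sum_j Y_{ii'}^j \geq (1+\epsilon)\,\mathbb{E}[\sum_j Z_{ii'}^j]$, and apply Lemma~\ref{lem:Chernoff} (valid since the summands are independent across $j$). Because $m = \Omega(n\log n)$ gives $\frac{m}{n} = \Omega(\log n)$, each tail is at most $\frac{1}{2m^3}$ for a suitably large hidden constant, so $\Pr[E_{ii'}] \leq \frac{1}{m^3}$, and a union bound over all $\binom{n}{2}$ ordered pairs yields $\Pr[\text{not EF}] \leq \binom{n}{2}\frac{1}{m^3} \to 0$. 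Intersecting this event with the welfare event produces, with high probability, an allocation that is simultaneously $(1 - \frac{1}{\log n})$-approximate maximum social welfare and exactly envy-free.
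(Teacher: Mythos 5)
Your proposal is correct and follows essentially the same route as the paper's proof: the identical threshold argument with $t = 1 - \frac{1}{\log n}$ and a union bound over goods for the welfare guarantee, and the same $X_i^j, Y_{ii'}^j, Z_{ii'}^j$ decomposition with a constant gap of order $\frac{1-\mu}{2}$ between the winner's and losers' expected values, closed by Chernoff and union bounds. The only cosmetic difference is that you lower-bound the unconditional expectation $\mathbb{E}[M_s^{(j)}] \geq t\cdot\Pr[M_s^{(j)} \geq t]$ directly, whereas the paper conditions on the high-probability event that the winner's value exceeds $\frac{\mu+1}{2}$; both yield the same constant gap.
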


\begin{proof}
    For any good $j$,
    \begin{equation*}
        \Pr \left( \forall i \in S_j, u_i(j) < 1 - \frac{1}{\log{n}} \right) \leq \left(1 - \frac{\alpha_{j}}{\log{n}} \right)^{s}
    \end{equation*}
    By union bound over all goods,
    \begin{align*}
        \Pr & \left( \exists j \text{ s.t. } \forall i \in S_j, u_i(j) < 1 - \frac{1}{\log{n}} \right) \\
        & \leq \sum_{j \in M} \left(1 - \frac{\alpha_{j}}{\log{n}} \right)^{s} \leq m \left(1 - \frac{\alpha_{min}}{\log{n}} \right)^{s} \\
        & \leq m \cdot e^{- \frac{\alpha_{min}}{\log{n}} \cdot s}
        \leq \frac{1}{m}
    \end{align*}
    where the last inequality is due to the chosen value of $s$.

    Thus, with high probability, for all goods, there is always at least one agent in the sampled set of agents that has value greater than or equal to $1 - \frac{1}{\log{n}} \geq \frac{\mu + 1}{2}$ (for sufficiently large $n$). Hence, the allocation is $(1 - \frac{1}{\log {n}})$- maximum social welfare by construction. 
    
    If good $j$ is allocated to agent $i$ (that is $j \in A_i$), we also know that with high probability $u_i(j) \geq \frac{\mu + 1}{2}$. This in turn implies that, with high probability, $\mathbb{E} \left[ u_i(j) \middle| j \in A_i \right] \geq \frac{\mu + 1}{2}$.

    We now show that the allocation is EF with high probability.
    
    Define for fixed \( i \in N \), \( j \in M \):
    \[
    X_i^j =
    \begin{cases}
    u_i(j), & \text{if } j \in A_i \\
    0, & \text{otherwise}
    \end{cases}
    \]
    Then \( u_i(A_i) = \sum_{j \in M} X_i^j \). Moreover,
    \begin{align*}
    \mathbb{E}[X_i^j] & = \Pr\left[ j \in A_i \right] \cdot  \mathbb{E}\left[ u_i(j) \,\middle|\, j \in A_i \right] \geq \frac{1}{n} \cdot \frac{\mu + 1}{2}
    \end{align*}
    Using linearity of expectation,
    \begin{align*}
    \mathbb{E}[u_i(A_i)] = \sum_{j \in M} \mathbb{E}[X_i^j] \geq \frac{\mu + 1}{2} \cdot \frac{m}{n}
    \end{align*}
    Now, define for \( i, i' \in N \), \( i \neq i' \), and \( j \in M \):
    \[
    Y_{ii'}^j =
    \begin{cases}
    u_i(j), & \text{if } j \in A_{i'} \\
    0, & \text{otherwise}
    \end{cases}
    \]
    Then \( u_i(A_{i'}) = \sum_{j \in M} Y_{ii'}^j \). Furthermore,
    \begin{align*}
    \mathbb{E}[Y_{ii'}^j] 
    &= \frac{1}{n} \cdot \mathbb{E}\left[ u_i(j) \,\middle|\, j \in A_{i'} \right] \\
    &= \frac{1}{n} \cdot \mathbb{E}\left[ u_i(j) \,\middle|\, j \notin A_i \right] \leq \frac{\mu}{n}
    \end{align*}
    (since $\mathbb{E}\left[ u_i(j) \,\middle|\, j \notin A_i \right] \leq \mu_j \leq \mu$, where the first inequality holds from the proof of Theorem \ref{thm:sampling-discrete}, shown in appendix \ref{sec:ef-sample})
    
    Let \( Z_{ii'}^j \in [0,1] \) be such that \( \mathbb{E}[Z_{ii'}^j] = \frac{\mu}{n} \), and \( Z_{ii'}^j \) stochastically dominates \( Y_{ii'}^j \). Then, $\sum_{j \in M} \mathbb{E}[Z_{ii'}^j] = \mu \cdot \frac{m}{n}$.
    
    
    
    Let \( E_{ii'} \) be the event that agent \( i \) envies agent \( i' \), i.e., \( \sum_{j \in M} Y_{ii'}^j > \sum_{j \in M} X_i^j \). This occurs only if either of the two conditions below occur:
    \begin{align*}
    \sum_{j \in M} X_i^j 
    & \leq \frac{3 \mu + 1}{4} \cdot \frac{m}{n} = \left( 1 - \frac{1 - \mu}{ 2 \left( \mu + 1 \right)} \right) \frac{\mu + 1}{2} \cdot \frac{m}{n} \\ 
    & \leq \left( 1 - \frac{1 - \mu}{ 2 \left( \mu + 1 \right)} \right) \cdot \mathbb{E}\left[\sum_{j \in M} X_i^j\right] \\
    \sum_{j \in M} Y_{ii'}^j 
    & \geq \frac{3 \mu + 1}{4} \cdot \frac{m}{n} = \left(1 + \frac{1 - \mu}{4 \mu} \right) \mu \cdot \frac{m}{n} \\
    & = \left(1 + \frac{1 - \mu}{4 \mu} \right) \cdot \mathbb{E}\left[\sum_{j \in M} Z_{ii'}^j\right]
    \end{align*}
    Let \( \epsilon = \frac{1 - \mu}{2 \left( \mu + 1 \right)} < 1 \), so \( \epsilon < \frac{1 - \mu}{4 \mu} \) also holds.
    
    Since \( X_i^j \) and \( Z_{ii'}^j \) are independent across \( j \), using Chernoff bounds:
    \begin{align*}
    & \Pr\left[ \sum_{j \in M} X_i^j \leq (1 - \epsilon) \cdot \mathbb{E}\left[\sum_{j \in M} X_i^j\right] \right] \\
    & \leq \exp\left(- \frac{\epsilon^2}{2} \cdot \frac{\mu + 1}{2} \cdot \frac{m}{n} \right) \\
    & \Pr\left[ \sum_{j \in M} Y_{ii'}^j \geq (1 + \epsilon) \cdot \mathbb{E}\left[\sum_{j \in M} Z_{ii'}^j\right] \right] \\ 
    & \leq \exp\left(- \frac{\epsilon^2}{3} \cdot \mu \cdot \frac{m}{n} \right)
    \end{align*}
    Choose $m$ large enough such that \( n \leq \frac{\epsilon^2}{3} \cdot \mu \cdot \frac{m}{\ln(2m^3)} \) (since \( m = \Omega(n \log{n})\)). By the union bound:
    \begin{align*}
    \Pr[E_{ii'}] 
    &\leq \exp\left(- \frac{\epsilon^2}{2} \cdot \frac{\mu + 1}{2} \cdot \frac{m}{n} \right) + \exp\left(- \frac{\epsilon^2}{3} \cdot \mu \cdot \frac{m}{n} \right) \\
    &\leq 2 \cdot \frac{1}{2m^3} = \frac{1}{m^3}
    \end{align*}
    Allocation \( A \) is EF if and only if no event \( E_{ii'} \) occurs. The probability that \( A \) is not EF is at most:
    \[
    \Pr\left[\bigvee_{\substack{i, i' \in N \\ i \neq i'}} E_{ii'}\right] \leq \sum_{\substack{i, i' \in N \\ i \neq i'}} \Pr[E_{ii'}] \leq \binom{n}{2} \cdot \frac{1}{m^3} \leq \frac{1}{m}
    \]
    Thus, the probability that \( A \) is not EF vanishes as \( m \to \infty \) (\( n \to \infty \) and \(m = \Omega (n \log{n})\)).
\end{proof}

\section{Proportionality for Small Number of Goods}\label{sec:prop-small}



\begin{algorithm}[t]
\caption{Proportional Algorithm for $n \leq m \leq 2n$}
\label{alg:Two-step algorithm}
\begin{algorithmic}[1]
\State \textbf{Input:} $N, M, \{u_i\}_{i \in N}$
\State $M^0\coloneqq$ first $n$ items in $M$, $M^1 \coloneqq M \setminus M^0$.
\State $\tau_j \gets 1-\frac{1.1\log n}{\alpha_j n},\quad $ $\forall j\in M$
\State $E_{\geq\tau} \gets \{(i, j) \in N \times M^0 \mid u_i(j) \geq \tau_j\}$
\If{$G_{\geq \tau}(N, M^0, E_{\geq\tau})$ has perfect matching}
    \State $(M_1^0, \ldots, M_n^0) \gets$ any perfect matching of $G_{\geq\tau}$
    \State $N^{\text{vio}} \gets \{ i \in N \mid u_i(M^0_i) < \frac{u_i(M)}{n} \}$
    \State $E^{\text{fix}} \gets \{ (i, j) \mathsmaller{\in N^{\text{vio}} \times M^1 \mid u_i(j) \geq \frac{u_i(M)}{n} - u_i(M^0_i)} \}$
    \If{$G^{\text{fix}}(\mathsmaller{N^{\text{vio}}, M^1, E^{\text{fix}}})$ has $|N^{\text{vio}}|$ size matching}
        \State $(M^1_1, \dots, M^1_n) \gets$ any such matching
        \State \Return $(M^0_1 \cup M^1_1, \dots, M^0_n \cup M^1_n)$
    \EndIf
\EndIf
\end{algorithmic}
\end{algorithm}

\propsmallgoods*


The assumption that each distribution \(\mathcal{U}_j\) has a mean at most \(1/2\) is crucial to ensure the existence of a proportional allocation for all \(n \leq m \leq 2n\) with high probability. Suppose instead \(\mathcal{U}_j\) has mean \(1/2 + \varepsilon_j\) for some constant \(\varepsilon_j > 0\), and set \(m = 2n - 1\). Then, the expected utilities satisfy
$
\mathbb{E}[u_i(M)] = n \left(1 + \frac{1}{n}\sum_j \varepsilon_j - o(1)\right)
$.
By standard Chernoff and union bounds, with high probability, all agents simultaneously have \(u_i(M) > n\). In such cases, no proportional allocation exists since at least one agent must receive at most one item (by the pigeonhole principle), but each item has value at most 1, contradicting proportionality.


\begin{proof}
    Consider the graph $G_{\geq\tau} = \{N, M^0, E_{\geq\tau}\}$ defined in the first stage of Algorithm~\ref{alg:Two-step algorithm} with $\tau_j = 1 - \frac{1.1 \log n}{\alpha_j n}$. Since Theorem~\ref{thm:Extension to Erdos Renyi} immediately implies that a perfect matching exists in $G_{\geq\tau}$ with high probability, we will assume that this is the case from now on. Also, using Lemma~\ref{lem:Hoeffding's inequality}, we can show that
    \begin{equation}\label{eq:u_i(M) bound}
    u_i(M) \leq \frac{m}{2} + 10 \sqrt{m \log n}
    \end{equation}
    holds with high probability. Hence, we will also assume this from now on. Now, let $q = m-n$ (so $0 \leq q \leq n$). We make two cases based on the value of $q$.

    \paragraph{Case $1$: $q < 0.9 n.$} For sufficiently large $n$, we have from equation~\eqref{eq:u_i(M) bound}, $\frac{u_i(M)}{n} \leq 0.996$ and for sufficiently large $n$, $\tau_j > 0.996$. Hence, $\tau_j > \frac{u_i(M)}{n}, \forall j \in M$ for all $i \in N$. Hence, running only the first part of the algorithm gives a proportional allocation, i.e., the partial allocation $(M^0_1, \dots, M^0_n)$ is already proportional.

    \paragraph{Case $2$: $q \geq 0.9 n.$} Define $G^1_{\geq \frac{0.5}{\beta}} = (N, M^1, E^1_{\geq \frac{0.5}{\beta}})$ such that $(i, j) \in E^1_{\geq \frac{0.5}{\beta}}$ if and only if $u_i(j) \geq \frac{0.5}{\beta_j}$. Let $Z_G(S)$ denote the set of vertices adjacent to the vertices of the set $S$ in graph $G$. We define two events $E1$ and $E2$ as follows. $E1$ occurs when $|N^{\text{violated}}| \leq 0.6 n,$ and $E2$ occurs when $\forall S \subseteq N$ and $|S| \leq 0.6 n$, $\left|Z_{G^1_{\geq \frac{0.5}{\beta}}}(S) \right| \geq |S|.$

    We first prove that if $E1$ and $E2$ hold with high probability, then the algorithm outputs a proportional allocation. For sufficiently large $n$, $\frac{u_i(M)}{n} = 1 + o(1)$ (using equation~\eqref{eq:u_i(M) bound}) and $1+o(1) < \frac{0.5}{\beta_j} + \tau_j$. Now, if $E1$ and $E2$ hold, due to Hall's Marriage Theorem, there exists a matching from $N^{\text{violated}}$ to $M^1$ that uses all vertices in $N^{\text{violated}}$. And since $\frac{u_i(M)}{n} < \frac{0.5}{\beta_j} + \tau_j$, $\forall i \in N$ and $\forall j \in M$, the matching in $G^1_{\geq \frac{0.5}{\beta}}$ from $N^{\text{violated}}$ to $M^1$ also is a matching in $G^{\text{fix}}$. The algorithm thus outputs a proportional allocation. Thus, proving that $E1$ and $E2$ hold with high probability suffices. We now proceed to show this.

    Let $f_j$ denote the probability distribution function of $\mathcal{U}_j$. Since $f_j(x) \leq \beta_j, \forall x \in [0,1]$, $\Pr\bigl[u_i(j) \geq \nicefrac{0.5}{\beta_j} \bigr] \geq 0.5$. Hence, for $G^1_{\geq \frac{0.5}{\beta}}$, Lemma~\ref{lem:Hall's-condition-satisfying-case}, directly implies that $E2$ holds with high probability if $E1$ holds.

     For showing $E1,$ let $X_j = u_i(j) - \mu_j$. Thus $\mathbb{E}[X_j] = 0$ and $\mathbb{E}[X_j^2] = \sigma_j^2$, where $\sigma_j$ is the standard deviation of the distribution $\mathcal{U}_j$. Also define $\rho_j = \mathbb{E}\bigl[|X_j|^3\bigr]$. Thus, $\mathbb{E}\bigl[X_j^2\bigr] = \int_{-\mu_j}^{1-\mu_j} x^2 f_j(x) dx.$ Since $\alpha_j \leq f_j(x) \leq \beta_j$ for every $x \in [-\mu_j, 1-\mu_j]$,
    \[
    \alpha_j\int_{-\mu_j}^{1-\mu_j} x^2 dx \;\leq\; \mathbb{E}\bigl[X_j^2\bigr] \;\leq\; \beta_j\int_{-\mu_j}^{1-\mu_j} x^2 dx
    \]
    \[
    \alpha_j \frac{1 - 3\mu_j + 3\mu_j^2}{3} \;\leq\; \mathbb{E}\bigl[X_j^2\bigr] \;\leq\; \beta_j\frac{1 - 3\mu_j + 3\mu_j^2}{3}
    \]
    Now, since $\mu_j \in [0, 1/2]$, we have $1/4 \leq 1 - 3\mu_j + 3\mu_j^2 \leq 1$, which means
    \begin{equation}\label{eq:E[X^2] bound}
    \frac{\alpha_j}{12} \;\leq\; \mathbb{E}\bigl[X_j^2\bigr] \;\leq\; \frac{\beta_j}{3}
    \end{equation}
    Now, consider the quantity $\frac{\sum_j \rho_j}{\left(\sum_j \sigma_j^2\right)^2}$. By Lyapunov’s moment inequality (see~\cite{karrNotes}, Corollary 4.110, p.182), we have
    $
    \mathbb{E}[|X_j|^3] \le \mathbb{E}[X_j^2]^{3/2} = \sqrt{\mathbb{E}[X_j^2]}\,\mathbb{E}[X_j^2]
    $, so
    \begin{align*}
        \frac{\sum_j \rho_j}{\left(\sum_j \sigma_j^2\right)^2} &\leq \frac{\sum_j \sigma_j^3}{\left(\sum_j \sigma_j^2\right)^2} \leq \frac{\sigma_{max}\sum_j \sigma_j^2}{\left(\sum_j \sigma_j^2\right)^2}\\
        &= \frac{\max\sqrt{\mathbb{E}\bigl[X_j^2\bigr]}}{\left(\sum_{j=1}^m \mathbb{E}\bigl[X_j^2\bigr]\right)} \leq \frac{\max \sqrt{(\beta_j / 3)}}{\sum_{j=1}^m (\alpha_j / 12)}
    \end{align*}
    where the first inequality follows from Lyapunov's inequality and the third inequality follows from equation~\eqref{eq:E[X^2] bound}. Thus
    \begin{equation}\label{eq:rho-sigma bound}
        \frac{\sum_{j\in M} \rho_j}{\left(\sum_{j \in M} \sigma_j^2\right)^2} \leq \frac{4\sqrt{3\beta_{max}}}{m \,\alpha_{min}}
    \end{equation}
    where $\alpha_{min}$ and $\beta_{max}$ represent $\min_{j\in M}(\alpha_j)$ and $\max_{j\in M}(\beta_j)$ respectively. Now, we know that
    $
        \Pr [u_i(M) \leq n\min(\tau_j)] \geq \Pr [\sum_{j=1}^m u_i(j) \leq \frac{m}{2}\min(\tau_j)]
    $
    because $m \leq 2n$. Rearranging the terms gives us
    $
        \Pr [\frac{u_i(M)}{n} \leq \min(\tau_j)]
        \geq \Pr [\sum_{j=1}^m X_j\leq \frac{m}{2}\min(\tau_j) - \sum_{j=1}^m \mu_j].
    $ Now $\sum \mu_j \leq m/2$, as every distribution $\mathcal{U}_j$ has a mean at most $1/2$, thus,
    \begin{align*}
        \Pr &[u_i(M)/n \leq \min(\tau_j)]\\
        &\geq \Pr [X_1 + \dots + X_m\leq (m/2)\min(\tau_j) - m/2]\\
        &= \Pr \left[X_1 + \dots + X_m\leq -\frac{1.1 m\log n}{2 n \, \alpha_{min}}\right]
    \end{align*}
    Now, let's define $S_m = \frac{X_1 + \dots + X_m}{\sqrt{\sigma_1^2 + \dots + \sigma_m^2}}$ and let $F_S$ denote the cumulative distribution function of $S_m$. Let $t_n = -\frac{1.1 m\log n}{2 n \, \alpha_{min}}\frac{1}{\sqrt{\sigma_1^2 + \dots + \sigma_m^2}}$. Then
    \begin{equation}\label{eq:pr-fs relation}
    \Pr \left[\frac{u_i(M)}{n} \leq \min(\tau_j)\right] \geq F_S(t_n)
    \end{equation}
    and using equation~\eqref{eq:E[X^2] bound}, $t_n \geq -\frac{1.1 m\log n}{2 n \, \alpha_{min}} \frac{1}{\sqrt{m \alpha_{min} / 12}}$ and $t \leq 0$. Note also that $t_n$ tends to $0$ as $n$ gets sufficiently large.

    Now, by \cite{esseen1942liapunoff}, for independent (not necessarily identically distributed) summands, we can say that,
    $\sup_x |F_S(x) - \Phi(x)| \leq C_0 \cdot \frac{\sum_j \rho_j}{(\sum_j \sigma_j^2)^{3/2}},$ where $\Phi$ represents the PDF of the standard normal distribution. Hence, $F_S(t_n) \geq \Phi(t_n) - C_0 \cdot \frac{\sum_j \rho_j}{(\sum_j \sigma_j^2)^{3/2}}.$ Using equation~\eqref{eq:rho-sigma bound},  $F_S(t_n) \geq \Phi(t_n) - C_0 \cdot \frac{4\sqrt{3\beta_{max}}}{m \,\alpha_{min}}.$ Since, $\alpha_{min}$ and $\beta_{max}$ are finite quantities, for sufficiently large $n$ (and thus also $m$), and as $t_n$ tends to $0$, the above equation becomes $F_S(t_n) \geq \Phi(t_n) - 0.04 \geq 0.49 - 0.04$. Thus, using equation~\eqref{eq:pr-fs relation}, we get $\Pr \left[\frac{u_i(M)}{n} \leq \min(\tau_j)\right] \geq 0.45$ for all agents $i \in N$, so each agent has at most $0.55$ probability of being present in $N^{\text{violated}}$. Hence, using standard concentration inequalities, $|N^{\text{violated}}| \leq 0.6 n$ with high probability.

    Thus, the algorithm outputs a proportional allocation with high probability in both cases. This concludes our proof.
\end{proof}

\section{Empirical Results}
\label{sec:empirical-results}


\begin{figure}[t]
    \centering
    \includegraphics[width=\picturewidth]{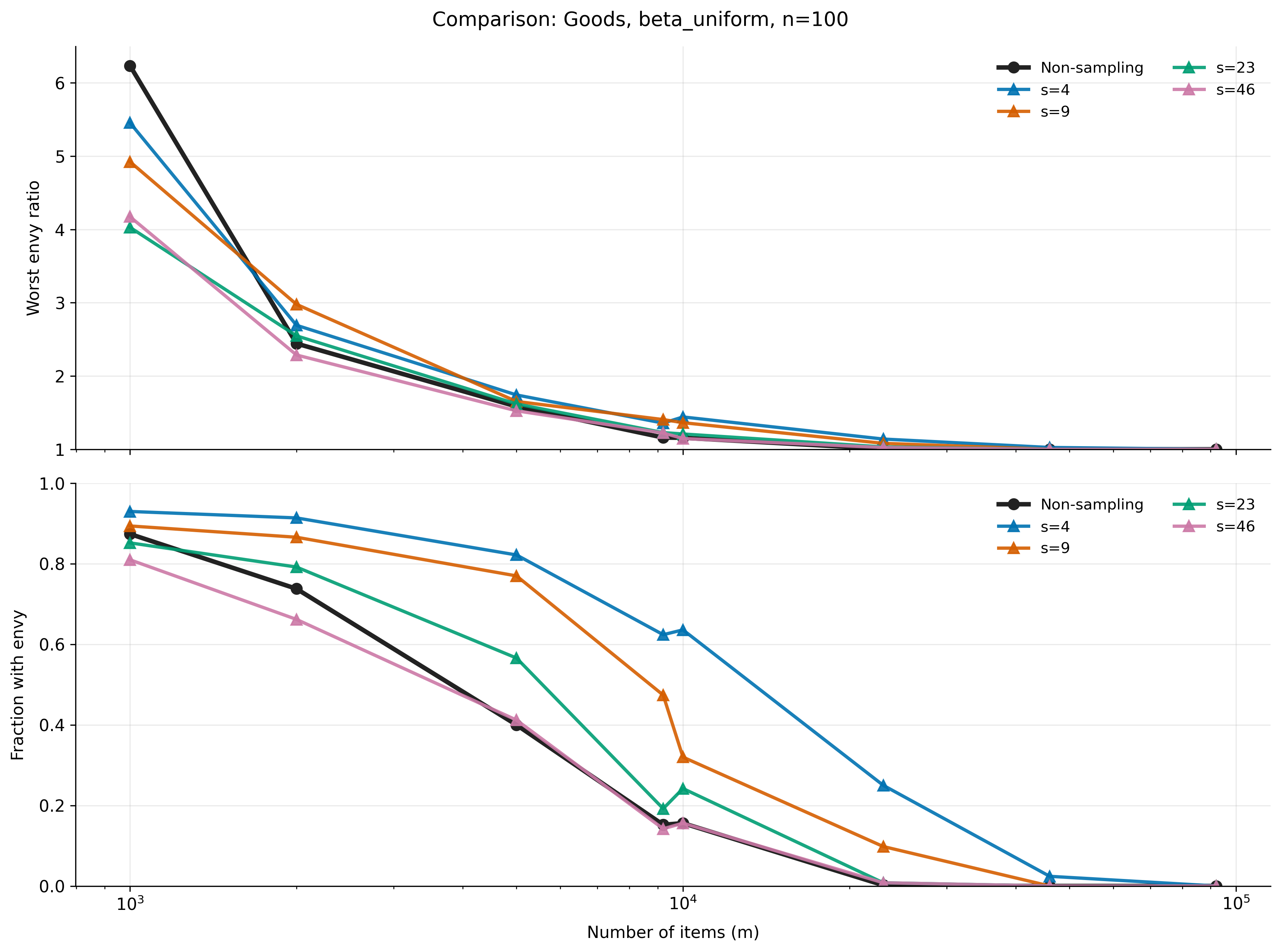}
    \caption{Goods: \texttt{beta\_uniform} utilities, comparison of sampling and non-sampling algorithms.}
    \label{fig:goods-beta}
\end{figure}

\begin{figure}[t]
    \centering
    \includegraphics[width=\picturewidth]{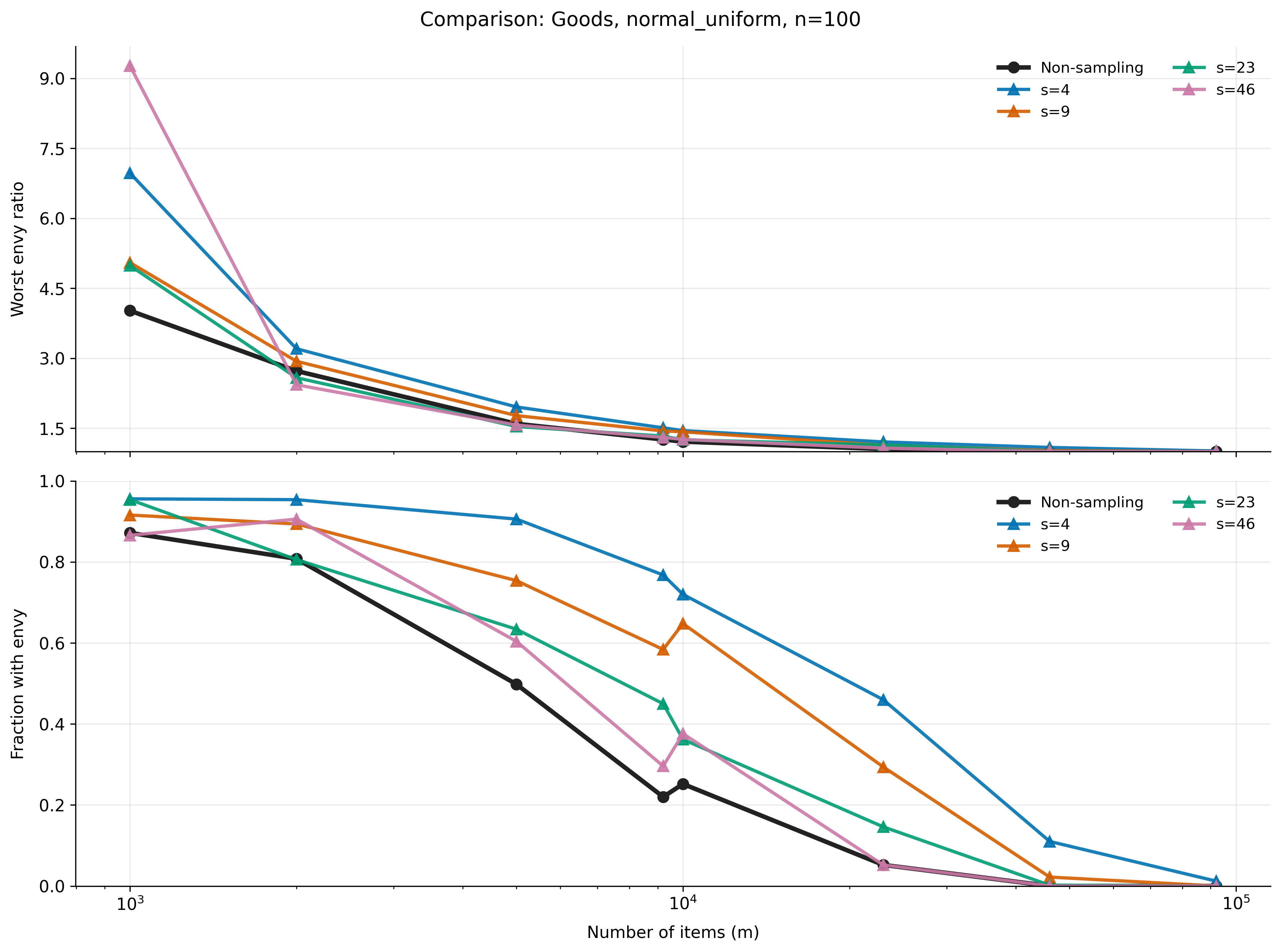}
    \caption{Goods: \texttt{normal\_uniform} utilities, comparison of sampling and non-sampling algorithms.}
    \label{fig:goods-normal}
\end{figure}

We empirically evaluate both the non-sampling and sampling algorithms on finite instances to understand how quickly the asymptotic behaviour manifests.\footnote{Appendix~\ref{sec:details-experiments} contains all details on the experiments.} For each experiment, we fix $n=100$ agents and vary the number of items $m$ from $10^3$ to $10^5$. Each experimental point represents the average over multiple independent trials. Each agent draws utilities independently from a \textbf{distinct} distribution: in the \texttt{beta\_uniform} setting, each agent uses either a Beta$(\alpha,\beta)$ distribution (with parameters $\alpha, \beta$ sampled per item) or a Uniform$[a,b]$ distribution with parameters $a, b$ sampled per item); in the \texttt{normal\_uniform} setting, agents draw from either a truncated Normal$(\mu, \sigma^2)$ (with parameters $\mu, \sigma$ sampled per item) or a Uniform distribution (as described above).

We record three quantities: the \textbf{worst envy ratio} (the maximum over agents of each agent's envy relative to their own value), the \textbf{fraction of agents with any envy}, and the \textbf{welfare ratio} comparing sampled and non-sampled allocations.

\noindent
Envy behaviour. Across all distributions, envy drops rapidly as $m$ increases \mbox{(Figures~\ref{fig:goods-beta} and \ref{fig:goods-normal})}. Worst-envy ratios begin between roughly $4$-$9$ at $m = 10^3$ but approach $1$ by about $m = 5 \times 10^4$. The fraction of agents with any envy similarly falls to zero as $m$ grows. Sampling introduces the expected trade-off: small $s$ leads to higher envy at moderate $m$, but as soon as $s \geq 23 \approx 5 \ln n$, it becomes nearly indistinguishable from the non-sampling algorithm.

\noindent
Welfare under sampling. The welfare ratio between sampled and non-sampled allocations remains high throughout (Figures \mbox{\ref{fig:welfare-beta}} and \mbox{\ref{fig:welfare-normal})}. Even with $s=4$, the ratio stays above $0.85$, and for $s=23$ or $46$ it consistently exceeds $0.95$ across all values of $m$. The ratio is essentially flat in $m$, indicating that the efficiency impact depends primarily on $s$.

\begin{figure}[t]
    \centering
    \includegraphics[width=\picturewidth]{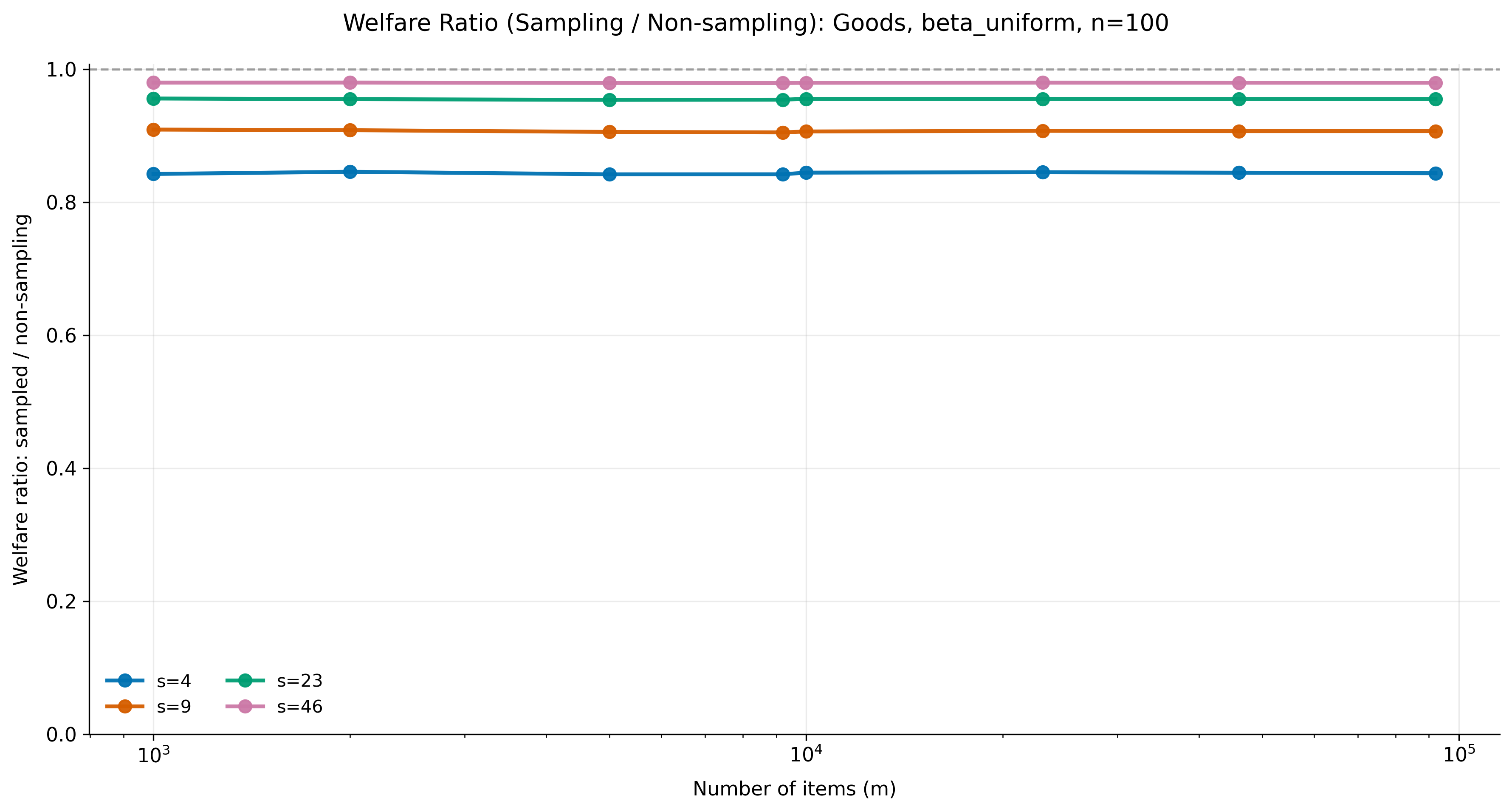}
    \caption{Welfare ratio (sampling / non-sampling): Goods, \texttt{beta\_uniform}.}
    \label{fig:welfare-beta}
\end{figure}

\begin{figure}[t]
    \centering
    \includegraphics[width=\picturewidth]{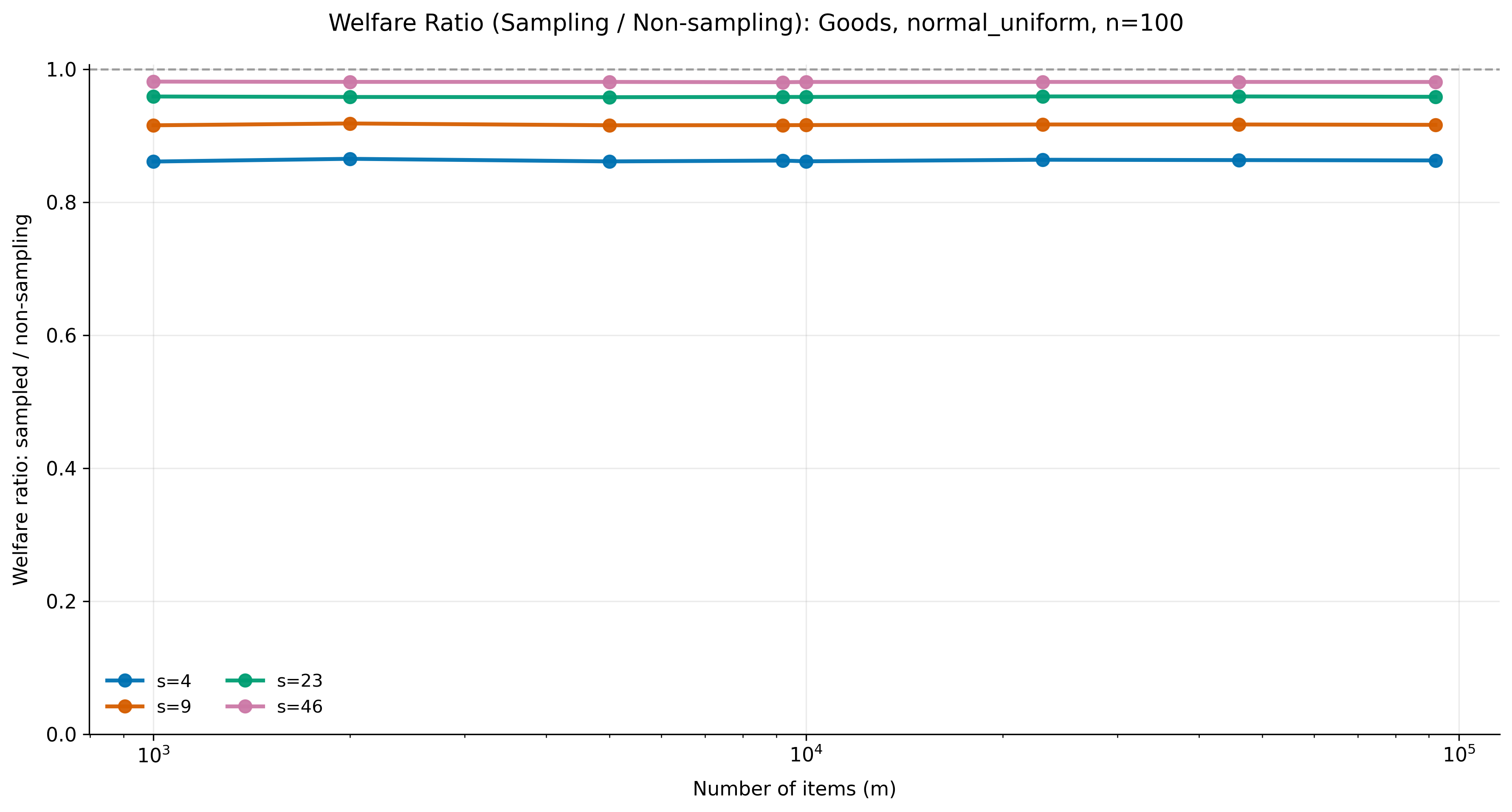}
    \caption{Welfare ratio (sampling / non-sampling): Goods, \texttt{normal\_uniform}.}
    \label{fig:welfare-normal}
\end{figure}

\noindent
\textit{Summary}.
Overall, the empirical results match the asymptotic predictions: large item counts rapidly eliminate envy, and sampling with moderate $s$ preserves both fairness and efficiency. The sampling algorithm thus offers a practical and scalable alternative with only minimal welfare loss.

\section{Discussion and Future Work}
\label{sec:Discussion}

This paper studies the problem of existence and efficient computation of envy-free or proportional fair, and maximum social welfare efficient allocations, in the setting of random instances. We extend known results by studying non-identical items. We also analyze efficient computation, and show that sampling a few utility values per item suffices to compute these allocations when they exist. Our results are in the asymptotic setting, that is, hold in the limit as the numbers of agents and items increase indefinitely. We complement these results by simulating our algorithms, and show that similar guarantees are obtained in practice for sufficiently large but finitely many agents and items.


\bibliography{aaai2026}  

\clearpage

\appendix

\section{Preliminaries}
In this section, we formally present our model for allocating chores. This is similar to the model for allocating goods, with slight notational modification for representing disutilities instead of utilities, and natural changes in the definitions of fairness and efficiency. We then state two probabilistic inequalities used in the proof of Theorem \ref{thm:prop-small-m}, and end with a summary of results proved in the rest of the Appendix, along with a description of its organization.

\paragraph{Model and Distribution(s) Over Disutilities (for Chores).} Let $N = \{1, \dots, n\}$ be the set of agents and $M = \{1, \dots, m\}$ be the set of chores. For each $(i, j) \in (N \times M)$, $d_i(j) \in [0, 1]$ denotes the \textit{disutility} of chore $j$ for agent $i$. The disutilities are assumed to be \textit{additive}, that is, for all $M' \subseteq M$, $d_i(M') = \sum_{j \in M'} d_i(j)$.

For each chore $j \in M$, there exists a distribution $\mathcal{D}_j$ supported on $[0,1]$, such that $d_i(j) \sim \mathcal{D}_j$ independently for each $i \in N$. The distributions have the following properties. Each $\mathcal{D}_j$ is \textit{non-atomic}, i.e., $\Pr_{X \sim \mathcal{D}_j}[X = x] = 0$ for all $x \in [0, 1]$, and also $(\alpha_j, \beta_j)$-PDF bounded. Here the $\alpha_j , \beta_j > 0$ may be distinct for each $j \in M$.


An allocation $A$ is \textit{envy-free} if for all pair of agents $i, i' \in N$, it holds that $d_i(A_i) \leq d_i(A_{i'})$. The envy of any agent $i \in N$ towards another agent $i' \in N$ is $\max \{0, d_i(A_{i}) - d_i(A_{i'})\}$. An allocation $A$ is \textit{proportional} if for all agents $i \in N$, it holds that $d_i(A_i) \leq \nicefrac{d_i(M)}{n}$.  Finally, an allocation $A$ has \textit{maximum social welfare} if it minimizes the sum of agent disutilities for their own shares. Formally, if $\mathcal{A}$ is the set of all allocations of $M,$ then $A\in argmin_{\mathcal{A}} \sum_i d_i(A_i)$.




\paragraph{Organization.} In the Appendix, we prove all the results stated in Section \ref{sec:intro-contribution} that have not been proven in the main section. In Section \ref{sec:ef-large-chores}, we prove a similar result for chores. Section \ref{sec:ef-small-chores} proves the existence of envy-free allocations for a small number of and chores. Finally, in Section \ref{sec:prop-large-m}, we prove the existence of proportional allocations for the intermediate case when the number of goods is linearly larger than the number of agents, but it is known that envy-free allocations do not exist for this case. All our results are also summarized in Tables \ref{tab:results-summary-non-sampling} and \ref{tab:results-summary-sampling}.


\begin{table}[t]
\centering
\renewcommand{\arraystretch}{1.15}
\begin{tabular}{|
    >{\centering\arraybackslash}p{1.5cm}|
    >{\centering\arraybackslash}p{2.5cm}|
    >{\centering\arraybackslash}p{2cm}|
}
\hline
& \makebox{\textbf{Goods}} & \makebox{\textbf{Chores}} \\
\hline
\textbf{Large \# of items} & EF + MSW  (Thm \ref{thm:Goods-Large-m})& EF + MSW (Thm \ref{thm:Chores-Large-m})\\ \hline
\textbf{Small \# of items} & if $n\mid m:$ EF (Thm \ref{thm:Goods-Small-m}) else PROP (Thms \ref{thm:prop-large-m}, \ref{thm:prop-small-m}) & EF (Thm \ref{thm:Chores-Small-m}) \\ \hline
\end{tabular}
\caption{Summary of results without sampling. The large number of items setting corresponds to when $n=O(m/\log m),$ and the small number of items is the case when $m=O(n \log n).$ EF denotes envy-freeness, PROP denotes proportionality, and MSW denotes maximum social welfare.}
\label{tab:results-summary-non-sampling}
\end{table}

\begin{table}[t]
\centering
\renewcommand{\arraystretch}{1.15}
\begin{tabular}{|
    >{\centering\arraybackslash}p{2cm}|
    >{\centering\arraybackslash}p{3cm}|
    >{\centering\arraybackslash}p{2cm}|
}
\hline
\textbf{Distribution Model} & \makebox{\textbf{Guarantees}} & \makebox{\textbf{\# of Samples}} \\
\hline
Discrete $\mathcal{U}_j$ & \mline{EF+MSW (Thm \ref{thm:sampling-discrete})} & $s = \frac{2 \log{m}}{\alpha_{min}}$ \\ \hline
Continuous $\mathcal{U}_j$ & \mline{$0.8$-EF+$0.9$-MSW(Thm\ref{thm:sampling-continuous})} & $s = \frac{20 \log{m}}{\alpha_{min}}$ \\ \hline
Continuous $\mathcal{U}_j$ + Bounded Means & \mline{EF+$\left( 1 - \nicefrac{1}{\log{n}}\right)$-MSW(Thm\ref{thm:sampling-continuous-constant})}& $s = \frac{2 (\log{m})^2}{\alpha_{min}}$ \\ \hline
\end{tabular}
\caption{Summary of results with sampling for allocating goods when $m=O(n\log n)$. EF denotes envy-freeness and MSW denotes maximum social welfare.}
\label{tab:results-summary-sampling}
\end{table}

\section{Envy-freeness for Large number of Chores}\label{sec:ef-large-chores}

In a fashion very similar to the proof of Lemma \ref{lem:Utility-Gap}, we can also prove the following Lemma (proof omitted).

\begin{lemma}\label{lem:Disutility-Gap}
    Let $\mathcal{D}$ be a \textbf{non-atomic}, \textbf{continuous} distribution and $d_1, \dots, d_n$ are $n$ i.i.d. draws from $\mathcal{D}$, then for any $i, i' \in [n], i \neq i'$

    \begin{equation*}
        \mathbb{E}[d_i \mid \arg \min_{j \in [n]} d_j = i] < \mathbb{E}[d_i \mid \arg \min_{j \in [n]} d_j = i']
    \end{equation*}
\end{lemma}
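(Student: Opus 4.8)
The plan is to mirror the proof of Lemma~\ref{lem:Utility-Gap} essentially verbatim, replacing the maximum with the minimum and reversing every inequality, since the only distributional facts used there (non-atomicity, to rule out ties, and a strict gap between an extreme order statistic and the mean) have exact order-reversed analogs. First I would fix the agent $i$ and, for each $k \in [n]$, define the event $E_k = \{\arg\min_{j \in [n]} d_j = k\}$. Because $\mathcal{D}$ is non-atomic, there are almost surely no ties among $d_1, \dots, d_n$, so exactly one $E_k$ occurs and $\Pr(E_k) = \nicefrac{1}{n}$ for every $k$.

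Next I would apply the law of total expectation to $\mathbb{E}[d_i]$, conditioning on the partition $\{E_1, \dots, E_n\}$, and then exploit the exchangeability of the i.i.d.\ draws: for any $i', i'' \neq i$ the conditional expectations $\mathbb{E}[d_i \mid E_{i'}]$ and $\mathbb{E}[d_i \mid E_{i''}]$ coincide, since permuting the indices other than $i$ is a measure-preserving symmetry of the joint law. This collapses the $n$-term sum to the two-term identity
\[
\mathbb{E}[d_i] = \frac{n-1}{n}\,\mathbb{E}[d_i \mid E_{i'}] + \frac{1}{n}\,\mathbb{E}[d_i \mid E_i],
\]
for any fixed $i' \neq i$, exactly paralleling equation~\eqref{eq:Ei,i'}.

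The crux is the conditional value on $E_i$. On this event $d_i = \min\{d_1, \dots, d_n\} =: L_n$, so $\mathbb{E}[d_i \mid E_i] = \mathbb{E}[L_n]$, and I would invoke the strict inequality $\mathbb{E}[L_n] < \mathbb{E}[d_i]$, which holds because $\mathcal{D}$ is non-degenerate ($\mathrm{Var}(\mathcal{D}) > 0$): pointwise $L_n \le d_i$, and the inequality is strict on the positive-probability event that some other draw is smaller than $d_i$. Substituting $\mathbb{E}[d_i \mid E_i] = \mathbb{E}[L_n] < \mathbb{E}[d_i]$ into the identity above and cancelling the common $\tfrac{1}{n}\mathbb{E}[d_i \mid E_i]$ term gives $\tfrac{n-1}{n}\mathbb{E}[d_i \mid E_i] < \tfrac{n-1}{n}\mathbb{E}[d_i \mid E_{i'}]$, and dividing by the positive factor $\tfrac{n-1}{n}$ (for $n \ge 2$) yields the claim $\mathbb{E}[d_i \mid E_i] < \mathbb{E}[d_i \mid E_{i'}]$.

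The only nonroutine ingredient is the strict gap $\mathbb{E}[L_n] < \mathbb{E}[d_i]$, and I expect this to be the main (though still mild) obstacle, since one must justify \emph{strictness} rather than a mere $\le$ — this is precisely where non-degeneracy of $\mathcal{D}$ enters, mirroring the role of $\mathbb{E}[M_n] > \mathbb{E}[u_i]$ in Lemma~\ref{lem:Utility-Gap}. Everything else is bookkeeping, so the write-up can simply refer back to the goods proof with the signs flipped.
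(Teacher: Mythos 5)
Your proposal is correct and takes essentially the same route the paper intends: the paper omits this proof entirely, stating only that it follows "in a fashion very similar to the proof of Lemma~\ref{lem:Utility-Gap}," and your order-reversed mirroring (events $E_k$, exchangeability collapsing the total-expectation sum, and the strict gap $\mathbb{E}[L_n] < \mathbb{E}[d_i]$ from non-degeneracy) is exactly that argument. No gaps.
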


An equivalent statement as theorem \ref{thm:Goods-Large-m} can be made for chores.

\begin{theorem}\label{thm:Chores-Large-m}
    When the number of chores is large, that is, $n = O(m/\log m),$ then, with probability $(1-\nicefrac{1}{m})$, an envy-free and maximum social welfare allocation exists as $m\rightarrow \infty.$
\end{theorem}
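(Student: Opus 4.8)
The plan is to mirror the proof of Theorem~\ref{thm:Goods-Large-m}, replacing ``assign to the highest bidder'' with ``assign to the lowest bidder.'' Concretely, I would allocate each chore $j$ to $\arg\min_{k\in N} d_k(j)$; since every chore lands with an agent who incurs the least possible disutility for it, the total disutility is minimized and the allocation has maximum social welfare by construction. It then remains to establish envy-freeness with high probability, for which Lemma~\ref{lem:Disutility-Gap} plays exactly the role that Lemma~\ref{lem:Utility-Gap} played for goods: it supplies a strict gap between the winner's expected disutility for a chore and another agent's expected disutility for that same chore when it is won by someone else.

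Next I would set up the analogous decomposition. Fix $i$ and for each chore $j$ let $X_i^j = d_i(j)$ if $i$ is the minimizer for $j$ and $0$ otherwise, so $d_i(A_i)=\sum_j X_i^j$; and for $i'\neq i$ let $Y_{ii'}^j = d_i(j)$ if $i'$ wins $j$ and $0$ otherwise, so $d_i(A_{i'})=\sum_j Y_{ii'}^j$. By Lemma~\ref{lem:Disutility-Gap} there are constants $\nu_j^* < \nu_j$ with $\mathbb{E}[X_i^j]\le \nu_j^*/n$ and $\mathbb{E}[Y_{ii'}^j]\ge \nu_j/n$, whence $\mathbb{E}[d_i(A_i)]\le \nu_a^*\,m/n < \nu_a\,m/n \le \mathbb{E}[d_i(A_{i'})]$ for the averages $\nu_a^*=\tfrac1m\sum_j\nu_j^*$ and $\nu_a=\tfrac1m\sum_j\nu_j$. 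The crucial sign flip relative to the goods argument is that $i$ envies $i'$ precisely when $d_i(A_i)>d_i(A_{i'})$, so the envy event lies inside the union of an \emph{upper}-tail deviation of $\sum_j X_i^j$ above the midpoint and a \emph{lower}-tail deviation of $\sum_j Y_{ii'}^j$ below it. For the latter I would introduce $Z_{ii'}^j\in[0,1]$ with $\mathbb{E}[Z_{ii'}^j]=\nu_j/n$ that is stochastically dominated by $Y_{ii'}^j$, so that a Chernoff lower-tail bound on $\sum_j Z_{ii'}^j$ (with a constant $\epsilon<1$ and mean $\Theta(m/n)$) controls the lower tail of $\sum_j Y_{ii'}^j$. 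Bounding each of the two events by $1/(2m^3)$ and taking a union over the $\binom n2$ ordered pairs then yields failure probability at most $1/m$, as claimed.

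The step I expect to be the main obstacle is the upper-tail control of the own-bundle disutility $\sum_j X_i^j$. For goods both conditional expectations are bounded away from $0$, so each relevant sum has mean $\Theta(m/n)$ and the constant-$\epsilon$ form of Lemma~\ref{lem:Chernoff} suffices; for chores this symmetry breaks, because the winner's expected disutility $\nu_j^*$ is the expected minimum of $n$ i.i.d.\ draws, which is $\Theta(1/n)$ under PDF-boundedness. Hence $\mathbb{E}[\sum_j X_i^j]=\Theta(m/n^2)=o(m/n)$ while $\mathbb{E}[\sum_j Y_{ii'}^j]=\Theta(m/n)$, so pushing $\sum_j X_i^j$ up to the midpoint is a \emph{large} multiplicative deviation whose factor grows with $n$, and the restricted $\epsilon\le 1$ bound is too weak. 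I would instead invoke the general multiplicative Chernoff bound $\Pr[\sum_j X_i^j\ge t]\le (e\,\mathbb{E}[\sum_j X_i^j]/t)^{t}$ for $t>\mathbb{E}[\sum_j X_i^j]$, which gives an exponent of order $(m/n)\log n=\Omega(\log^2 n)$ and is comfortably below $1/m^3$. Equivalently, and more transparently, one can avoid the midpoint split: this bound shows that with high probability $d_i(A_i)=O\!\big(\tfrac{m\log n}{n^2}\big)=o(m/n)$ for every $i$, while a single Chernoff lower-tail bound gives $d_i(A_{i'})=\Omega(m/n)$ for every ordered pair, so once $m=\Omega(n\log n)$ no agent can envy another.
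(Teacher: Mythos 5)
Your proposal is correct, and its skeleton is the same as the paper's proof of Theorem \ref{thm:Chores-Large-m}: allocate each chore to its disutility minimizer, invoke Lemma \ref{lem:Disutility-Gap} for the conditional gap, decompose $d_i(A_i)=\sum_j X_i^j$ and $d_i(A_{i'})=\sum_j Y_{ii'}^j$, split the envy event at a midpoint between the two expectations, and union bound over pairs. The one place you genuinely diverge is the upper tail of $\sum_j X_i^j$, and your diagnosis there is sound: when the support of $\mathcal{D}_j$ reaches down to $0$, the winner's conditional expectation is $\mathbb{E}[\min_k d_k(j)]=\Theta(1/n)$, so the own-bundle sum has mean $\Theta(m/n^2)$ and reaching the $\Theta(m/n)$ midpoint is a multiplicative deviation of order $n$, outside the $\epsilon\le 1$ range of Lemma \ref{lem:Chernoff}. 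Your fix --- the unrestricted bound $\Pr[\sum_j X_i^j\ge t]\le (e\,\mathbb{E}[\sum_j X_i^j]/t)^{t}$ with $t=\Theta(m/n)$, giving exponent $\Theta((m/n)\log n)$ --- is valid and quantitatively stronger than needed. The paper instead sidesteps the issue the same way it does for goods in Theorem \ref{thm:Goods-Large-m}: it takes the constants $\mu_j$ as deliberate over-estimates of the winner's conditional expectation (any constant strictly below $\mu_j^*$ works, e.g.\ a fixed fraction of $\mathbb{E}[\mathcal{D}_j]$ once $n$ is large), introduces $Z_i^j$ with mean exactly $\mu_j/n$ stochastically dominating $X_i^j$, and applies the restricted Chernoff bound to $\sum_j Z_i^j$, whose mean is then a genuine $\Theta(m/n)$. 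Both routes close the gap; yours makes explicit a point the paper leaves implicit in the phrase ``there exist constants.'' One small caution on your closing ``equivalently'' remark: the bound $(e\mu/t)^{t}$ degenerates for $t=o(1)$, so at failure level $1/m^3$ it directly certifies only $d_i(A_i)=O(1)$, not $O(m\log n/n^2)$, when $m\ll n^2$; the conclusion $d_i(A_i)=o(m/n)$ still follows because $m/n=\Omega(\log m)$, but the stated intermediate rate would need a separate argument (e.g.\ bounding the number of chores an agent wins and the largest realized minimum separately).
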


\begin{proof}
    Construct an allocation by giving each chore \( j \in M \) to the agent that has the least disutility from it: \( \arg \min_{k \in N} d_k(j) \). The allocation has maximum social welfare by construction. We can prove that this allocation is EF with high probability.

    For every chore \( j \in M \), each agent \( i \in N \) draws their disutility \( d_i(j) \sim \mathcal{D}_j \), where \( \mathcal{D}_j \) is a \textit{non-atomic}, \textit{continuous} distribution with support in \( [0, 1] \). By Lemma \ref{lem:Disutility-Gap} for each \( j \in M \), there exist constants \( \mu_j \) and \( \mu_j^* \) such that, for any \( i, i' \in N \):

    \begin{align*}
    & 0 < \mathbb{E} \left[d_i(j) \, \middle| \, \arg \min_{k \in N} d_k(j) = \{i\} \right] \leq \mu_j \\
    & < \mu_j^* \leq \mathbb{E} \left[d_i(j) \, \middle| \, \arg \min_{k \in N} d_k(j) = \{i'\} \right]
    \end{align*}

    Define for fixed \( i \in N, j \in M \)):

    \[
    X_i^j =
    \begin{cases}
    d_i(j), & \text{if } \arg \min_{k \in N} d_k(j) = \{i\} \\
    0, & \text{otherwise}
    \end{cases}
    \]

    Then \( d_i(A_i) = \sum_{j \in M} X_i^j \). Moreover,

    \begin{align*}
    \mathbb{E}[X_i^j] 
    & = \Pr[\arg \min_{k \in N} d_k(j) = \{i\}] \cdot \\ 
    & \mathbb{E}[d_i(j) \mid \arg \min_{k \in N} d_k(j) = \{i\}] \\
    & = \frac{1}{n} \cdot \mathbb{E}[d_i(j) \mid \arg \min_{k \in N} d_k(j) = \{i\}] \leq \frac{\mu_j}{n}
    \end{align*}

    Now, define for \( i, i' \in N \), \( i \neq i' \)), and \( j \in M \):

    \[
    Y_{ii'}^j =
    \begin{cases}
    d_i(j), & \text{if } \arg \min_{k \in N} d_k(j) = \{i'\} \\
    0, & \text{otherwise}
    \end{cases}
    \]

    Then \( d_i(A_{i'}) = \sum_{j \in M} Y_{ii'}^j \). Furthermore,

    \begin{align*}
    \mathbb{E}[Y_{ii'}^j]
    = \frac{1}{n} \cdot \mathbb{E}[d_i(j) \mid \arg \min_{k \in N} d_k(j) = \{i'\} ] \geq \frac{\mu_j^*}{n}
    \end{align*}

    Using the linearity of expectation:

    \begin{align*}
    & \mathbb{E}[d_i(A_{i'})] = \sum_{j \in M} \mathbb{E}[Y_{ii'}^j] \geq \mu_a^* \cdot \frac{m}{n}, \quad \\
    & \text{where } \mu_a^* = \frac{1}{m} \sum_{j \in M} \mu_j^*
    \end{align*}

    Let \( Z_i^j \in [0, 1] \) be such that \( \mathbb{E}[Z_i^j] = \frac{\mu_j}{n} \) and \( Z_i^j \) stochastically dominates \( X_i^j \). Then:

    \[
    \sum_{j \in M} \mathbb{E}[Z_i^j] = \mu_a \cdot \frac{m}{n}, \quad \text{where } \mu_a = \frac{1}{m} \sum_{j \in M} \mu_j
    \]

    and for all \( x \in \mathbb{R}^+ \):

    \[
    \Pr\left[\sum_{j \in M} Z_i^j \geq x\right] \geq \Pr\left[\sum_{j \in M} X_i^j \geq x\right]
    \]

    Let \( E_{ii'} \) be the event that agent \( i \) envies agent \( i' \), i.e., \( \sum_{j \in M} Y_{ii'}^j < \sum_{j \in M} X_i^j \). This happens only if

    \begin{align*}
    \sum_{j \in M} X_i^j 
    & \geq \mu_a \cdot \frac{m}{n} + \frac{\mu_a^* - \mu_a}{2} \cdot \frac{m}{n} \\
    & = \left(1 + \frac{\mu_a^* - \mu_a}{2\mu_a}\right) \mu_a \cdot \frac{m}{n} \\
    & = \left(1 + \frac{\mu_a^* - \mu_a}{2\mu_a}\right) \mathbb{E}\left[\sum_{j \in M} Z_i^j\right]
    \end{align*}

    or

    \begin{align*}
    \sum_{j \in M} Y_{ii'}^j 
    & \leq \mu_a^* \cdot \frac{m}{n} - \frac{\mu_a^* - \mu_a}{2} \cdot \frac{m}{n} \\
    & = \left(1 - \frac{\mu_a^* - \mu_a}{2\mu_a^*}\right) \mu_a^* \cdot \frac{m}{n} \\
    & \leq \left(1 - \frac{\mu_a^* - \mu_a}{2\mu_a^*}\right) \mathbb{E}\left[\sum_{j \in M} Y_{ii'}^j\right]
    \end{align*}

    Let \( \epsilon = \min\left\{1, \frac{\mu_a^* - \mu_a}{2\mu_a^*}\right\} \), so \( \epsilon < \frac{\mu_a^* - \mu_a}{2\mu_a} \) also holds. $\epsilon > 0$ also holds as $\mu_j < \mu_j^*$ for all $j$ implies $\mu_a < \mu_a^*$.

    Since \( Z_i^j \) and \( Y_{ii'}^j \) are independent across $j$, using Chernoff bounds:

    \begin{align*}
    & \Pr\left[ \sum_{j \in M} Y_{ii'}^j \leq (1 - \epsilon) \cdot \mathbb{E}\left[ \sum_{j \in M} Y_{ii'}^j \right] \right]\\ 
    & \leq \exp\left( - \frac{\epsilon^2}{2} \mu_a^* \cdot \frac{m}{n} \right)
    \end{align*}

    and

    \begin{align*}
    & \Pr\left[ \sum_{j \in M} X_i^j \geq (1 + \epsilon) \cdot \mathbb{E}\left[ \sum_{j \in M} Z_i^j \right] \right] \\
    & \leq \exp\left( - \frac{\epsilon^2}{3} \mu_a \cdot \frac{m}{n} \right)
    \end{align*}

    Set \( n \leq \frac{\epsilon^2 \mu_a}{3} \cdot \frac{m}{\ln(2m^3)} \). By the union bound:

    \begin{align*}
    \Pr[E_{ii'}]
    &\leq \exp\left( - \frac{\epsilon^2}{2} \mu_a^* \cdot \frac{m}{n} \right) + \exp\left( - \frac{\epsilon^2}{3} \mu_a \cdot \frac{m}{n} \right) \\
    &\leq 2 \cdot \frac{1}{2m^3} = \frac{1}{m^3}
    \end{align*}

    Allocation \( A \) is EF if and only if no event \( E_{ii'} \) occurs. The probability that \( A \) is not EF is at most:

    \[
    \Pr\left[ \bigvee_{\substack{i, i' \in N \\ i \neq i'}} E_{ii'} \right]
    \leq \sum_{\substack{i, i' \in N \\ i \neq i'}} \Pr[E_{ii'}]
    \leq \binom{n}{2} \cdot \frac{1}{m^3} \leq \frac{1}{m}
    \]

    Thus, the probability that \( A \) is not EF vanishes as \( m \to \infty \) (\( n \to \infty\) and \(m = \Omega(n \log{n})\)).
\end{proof}

\section{Envy-freeness for Small Number of Chores}\label{sec:ef-small-chores}

A stronger statement than \ref{thm:Goods-Small-m} can be made for chores. For identical chores, \cite{manurangsi2025asymptoticfairdivisionchores} show that an envy-free solution exists with high probability when $m \ge 2n.$ We generalize this to the non-identical case.

\begin{restatable}[Appendix \ref{sec:ef-small-chores}]{theorem}{efsmallchores}
\label{thm:Chores-Small-m}
For allocating chores, if $m \ge 5n,$ then with high probability, an envy-free allocation exists as $n\rightarrow \infty$.
\end{restatable}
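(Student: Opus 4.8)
The plan is to mirror the proof of Theorem~\ref{thm:Goods-Small-m}, replacing ``high value'' by ``low disutility'' throughout, and to exploit the fact that for chores the envied bundle only needs a \emph{lower} bound on its disutility, which is what makes chores easier. First I would build the threshold bipartite graph $G=(N,M,E)$ with $(i,j)\in E$ iff $d_i(j)\le \tau_j$, where now $\tau_j:=\tfrac{1.1\log n}{\alpha_j n}$ is close to $0$. Since $\mathcal{D}_j$ is $(\alpha_j,\beta_j)$-PDF bounded, $\Pr[(i,j)\in E]\ge \alpha_j\tau_j=\tfrac{1.1\log n}{n}\ge\tfrac{\log n+\omega(1)}{n}$, so every edge probability clears the Erd\H{o}s--R\'enyi threshold, and the non-identical $p_j$ cause no trouble via the coupling of Theorem~\ref{thm:Extension to Erdos Renyi} (it suffices to work with $p_{\min}$).

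Next I would produce a \emph{balanced} allocation in which every agent receives only chores it values at most $\tau_j$. Writing $r:=\lfloor m/n\rfloor\ge 5$, I reserve $rn$ of the chores and obtain a perfect $r$-matching w.h.p.\ by Lemma~\ref{lem:Perfect-r-Matching}, then distribute the remaining $m-rn<n$ chores by a right-saturated matching via Lemma~\ref{lem:Right-Saturated-Matching}. Together these assign every chore and give each agent between $r$ and $r+1$ chores, so each bundle has size $\ge r\ge 5$ and $d_i(A_i)\le (r+1)\tau_{\max}=O(\tfrac{\log n}{n})$, which is negligible.

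The envy step is where chores diverge from goods. Since $i$ envies $i'$ iff $d_i(A_i)>d_i(A_{i'})$ and $d_i(A_i)$ is already $O(\log n/n)$, it suffices to lower-bound $d_i(A_{i'})$. Fix $\tau':=n^{-(4/r+\varepsilon)}$ for small $\varepsilon>0$, and I would show that w.h.p., for every ordered pair $(i,i')$, \emph{strictly fewer than half} the chores in $A_{i'}$ satisfy $d_i(j)<\tau'$. Treating the relevant draws as fresh, $\Pr[d_i(j)<\tau']\le\beta_{\max}\tau'$, so the chance that more than $r/2$ of the $\ge r$ chores in $A_{i'}$ fall below $\tau'$ is at most $(\beta_{\max}\tau')^{r/2}=O(n^{-(2+r\varepsilon/2)})=o(n^{-2})$, surviving a union bound over the $\le n^2$ pairs. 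On this event at least $r/2$ chores of $A_{i'}$ contribute $\ge\tau'$ each, giving $d_i(A_{i'})\ge \tfrac{r}{2}\tau'\ge (r+1)\tau_{\max}\ge d_i(A_i)$, where the middle inequality holds exactly when $4/r+\varepsilon<1$, i.e.\ $r\ge 5$. This is precisely where the hypothesis $m\ge 5n$ enters, and it also explains why no divisibility is needed: unequal bundle sizes only help, since a larger envied bundle can only raise $d_i(A_{i'})$.

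The main obstacle is the ``treating the draws as fresh'' step, since the realized bundle $A_{i'}$ is a function of the entire edge set, so the values $\{d_i(j):j\in A_{i'}\}$ are correlated with which chores got routed to $i'$. I would make this rigorous by conditioning on the edge set (which fixes $A_{i'}$ deterministically) and splitting the chores of $A_{i'}$ with $d_i(j)<\tau'$ into those that also carry an $i$-edge --- the \emph{contested} chores, for which $d_i(j)\le\tau_j<\tau'$ automatically --- and the rest, which are conditionally independent with $\Pr[d_i(j)<\tau'\mid d_i(j)>\tau_j]\le \beta_{\max}\tau'/(1-o(1))$. The count of contested chores per pair is governed by a second-moment estimate $\sum_j p_j^2=O((\log n)^2/n)$, so having a constant fraction of $A_{i'}$ contested is far rarer than $n^{-2}$; combining this with the conditionally independent tail bound recovers the $o(n^{-2})$ per-pair estimate, and the union bound then closes the argument exactly as above.
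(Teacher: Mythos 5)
Your proposal follows the paper's proof essentially step for step: the same threshold $\tau_j = \frac{1.1\log n}{\alpha_j n}$, the same perfect $r$-matching on $rn$ reserved chores plus a right-saturated matching for the $m-rn$ leftover chores, the same second threshold $\tau' = n^{-(4/r+\varepsilon)}$, and the same counting argument that fewer than half the chores of the envied bundle fall below $\tau'$, with the constraint $4/r+\varepsilon<1$ forcing $r\ge 5$ exactly as in the paper. The one place you go beyond the paper is your final paragraph: the paper bounds $\Pr\bigl[\,|\{j\in A_i^0: d_{i'}(j)<\tau'\}|>x/2\,\bigr]$ by $(\beta_{\max}\tau')^{x/2}$ as if the draws were fresh, silently ignoring that the matching is a function of the whole edge set and hence correlated with the cross-valuations. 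Your fix --- condition on the edge set, separate the contested chores (those carrying an $(i,j)$-edge, whose count per pair is controlled by $\sum_j p_j^2=O((\log n)^2/n)$ when $m=O(n)$, and more generally by $\bigl(O((\log n)^2/n)\bigr)^{k}$ for $k$ of them) from the conditionally independent rest --- does close the gap: summing over the ways $k_1$ contested plus $k_2$ non-contested bad chores can exceed $r/2$, each term is at most $\bigl(O((\log n)^2/n)\bigr)^{k_1}\cdot O\bigl((\tau')^{k_2}\bigr)$ with $k_1+k_2\ge 3$, and every case is $o(n^{-2})$, so the union bound over pairs survives. On this point your write-up is more careful than the paper's own proof.
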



\begin{proof}
    Let $m = xn + y$ for some $x \in \mathbb{Z}_{\geq 0}$ and $0 \leq y < n$. Let $\alpha_{min} = \min_{j \in M} \alpha_j$, $\alpha_{max} = \max_{j \in M} \alpha_j$ and $\beta_{max} = \max_{j \in M} \beta_j$.

    Construct a bipartite graph $G = (N, M_{\leq nx}, E)$ where edges are given by:
    \[
    E = \left\{ (i, j) \in N \times M_{\leq nx} : d_i(j) \leq \tau_j \right\}
    \]
    where $M_{\leq nx} := \{j \in  M: j \leq nx\}$ (the first $nx$ goods in M), $\tau_j := \frac{1.1 \log{n}}{\alpha_j n}$ and $\tau := \max_{j \in M} \tau_j = \frac{1.1 \log{n}}{\alpha_{min} n}$.

    Let $\tau' := \frac{1}{n^{4/x + \varepsilon}}$ for some $\varepsilon > 0$ (say $\epsilon = 0.1$) and $x \geq 5$. The graph $G$ is an Erdős-Rényi random graph with edge probabilities at least $\frac{\log{n} + \omega(1)}{n}$, so by Lemma \ref{lem:Perfect-r-Matching} with high probability, there exists a perfect $x$-matching corresponding to an allocation $\{A_1^0, \dots, A_n^0\}$ such that:
    \[
    d_i(A_i^0) \leq x\tau
    \]

    Now fix any pair $i, i' \in N$. The probability that more than $\frac{x}{2}$ chores in $A_i^0$ have $d_{i'}(j) < \tau'$ is bounded by:
    \begin{align*}
    \Pr\left( |\{j \in A_i^0 :d_{i'}(j) < \tau'\}| > \frac{x}{2} \right) 
    & \leq \left(\frac{\beta_{max}}{n^{4/x + \varepsilon}} \right)^{x/2} \\ 
    & = o(n^{-2})
    \end{align*}

    By union bound over all $i, i' \in N$:
    \begin{align*}
    & \Pr \left( \exists i,i' \in N: |\{j \in A_i^0 : d_{i'}(j) < \tau'\}| > \frac{x}{2} \right) \\ 
    & = n^2 \cdot o(n^{-2}) = o(1)
    \end{align*}

    Hence, with high probability:
    \[
    d_i(A_{i'}^0) \geq \frac{x}{2} \tau'
    \Rightarrow d_i(A_i^0) - d_i(A_{i'}^0) \leq x \tau - \frac{x}{2} \tau'
    \]
    \[
    = \frac{x}{2} (2 \tau - \tau') = \frac{x}{2} \left(\frac{2.2 \log{n}}{\alpha_{min} n} - \frac{1}{n^{4/x + \varepsilon}} \right)
    \]

    We want this to be at most $- \frac{1.1 \log{n}}{\alpha_{min} n}$. Thus,
    \[
    \frac{x}{2} \left( \frac{1}{n^{4/x + \varepsilon}} - \frac{2.2 \log{n}}{\alpha_{min} n} \right) \geq \frac{1.1 \log{n}}{\alpha_{min} n}
    \]
    
    For sufficiently large $n$, $\frac{1}{2 n^{4/x + \varepsilon}} \geq \frac{2.2 \log{n}}{\alpha_{min} n}$. Hence, it is sufficient that
    \[
    \frac{x}{4} \cdot \frac{1}{n^{4/x + \varepsilon}} \geq \frac{1.1 \log{n}}{\alpha_{min} n}
    \Rightarrow x \geq \frac{4.4 \log{n}}{\alpha_{min} n^{1 - 4/x - \varepsilon}}
    \]

    Therefore, it suffices to choose $x \geq 5$ (inequality holds for sufficiently large $n$: $\frac{\log{n}}{n^{1 - \frac{4}{x} - \epsilon}}$ gets increasingly smaller with $n$ as $1 - \frac{4}{x} - \epsilon > 0$)

    Now construct a second graph $G' = (N, M \setminus M_{\leq nx}, E')$ where:
    \[
    E' = \left\{ (i, j) \in N \times \left( M \setminus M_{\leq nx} \right) : d_i(j) \leq \frac{1.1 \log{n}}{\alpha_{min} n} \right\}
    \]
    Then $G'$ is again an Erdős-Rényi graph with edge probabilities at least $\frac{\log{n} + \omega(1)}{n}$, so by Lemma \ref{lem:Right-Saturated-Matching} with high probability, there exists a right-saturated matching corresponding to allocation $\{A_1^1, \dots, A_n^1\}$.

    The final allocation $(A_1^0 \cup A_1^1, \dots, A_n^0 \cup A_n^1)$ is envy-free:

    For any $i, i' \in N$,
    \begin{align*}
        d_i(A_i) 
        & = d_i(A_i^0) + d_i(A_i^1) \\
        & \leq d_i(A_{i'}^0) - \frac{1.1 \log{n}}{\alpha_{min} n} + d_i(A_i^1) \\
        & \leq d_i(A_{i'}^0) - \frac{1.1 \log{n}}{\alpha_{min} n} + \frac{1.1 \log{n}}{\alpha_{min} n} \\
        & \leq d_i(A_{i'}^0) \leq d_i(A_{i'}^0) + d_i(A_{i'}^1) = d_i(A_{i'})
    \end{align*}
\end{proof}

\section{Proportionality for Linearly Large Number of Goods}\label{sec:prop-large-m}


\proplargegoods*



\begin{proof}
    For this proof, we write $m$ as $m = xn+y$ for some $x \in \mathbb{Z}_{\geq 0}$ and $0 \leq y < n$. Let $\alpha_{min} = \min_{j \in M} \alpha_j$. Now, we define the bipartite graph $G = (N, M, E)$, where edges are determined as follows:
    \begin{equation*}
    E = \{ (i, j) \in N \times M : u_i(j) \geq \tau_j \}
    \end{equation*}
    where $\tau_j := 1 - \frac{1.1 \log n}{\alpha_j n}$, and define:
    \begin{equation*}
    \tau := \min_{j \in M} \tau_j = 1 - \frac{1.1 \log n}{\alpha_{min} n}
    \end{equation*}

    With high probability, there exists a left-saturated $x$-matching (each agent in $N$ is matched to exactly $x$ goods in $M$) in $G$ (since $G$ is an Erd\H{o}s-R\'enyi random graph with $p_{\min} = \frac{\log n + \omega(1)}{n}$), which corresponds to the allocation ${A_1^0, \dots, A_n^0}$ such that:
    \begin{equation*}
    u_i(A_i^0) \geq x\tau = x \left( 1 - \frac{1.1 \log n}{\alpha_{min} n} \right)
    \end{equation*}
    
    Let $\tau' := \frac{1 + c}{2}$. Using Markov's inequality:
    \begin{equation*}
    \Pr[u_i(j) > \tau'] \leq \frac{c}{\tau'} = \frac{2c}{1 + c} < 1
    \end{equation*}
    
    Thus, for any agent $i$, the probability of more than $\frac{m}{2}$ values exceeding $\tau'$ is exponentially small:
    \begin{align*}
    \Pr & \left( |\{j \in M: u_i(j) > \tau'\}| > \frac{m}{2} \right) 
    \leq \left( \frac{2c}{1 + c} \right)^{\frac{m}{2}} \\ 
    & = \left( 1 - \frac{1 - c}{1 + c}\right)^{\frac{m}{2}} \\
    & \leq \exp \left( - \frac{1 - c}{1 + c} \cdot \frac{m}{2} \right) = \exp \left( - \Theta(m) \right)
    \end{align*}

    By a union bound over all agents,
    \begin{align*}
    \Pr & \left( \exists i \in N \text{ s.t. } |\{j \in M: u_i(j) > \tau'\}| > \frac{m}{2} \right) \\
    & = n \exp \left( - \Theta(m) \right) \\
    & \leq m \exp \left( - \Theta(m) \right) = o(1)
    \end{align*}
    
    Let us bound $u_i(M)$. Each agent receives at most $\frac{m}{2}$ items with value at most $\tau'$, and the rest (at most $\frac{m}{2}$) are upper bounded by $1$. Hence:
    \begin{equation*}
    u_i(M) \leq \frac{m}{2} \cdot \tau' + \frac{m}{2} \cdot 1 = \frac{m(1 + \tau')}{2} = \frac{m(3 + c)}{4}
    \end{equation*}
    
    Therefore:
    \begin{equation*}
    \frac{u_i(M)}{n} \leq \frac{(x + 1)(3 + c)}{4} = x\cdot \frac{3 + c}{4} + \frac{3 + c}{4}
    \end{equation*}
    
    We now compute:
    \begin{align*}
    x \tau - \frac{u_i(M)}{n} 
    & \geq x \left(1 - \frac{1.1 \log n}{\alpha_{min} n}\right) - x\cdot \frac{3 + c}{4} - \frac{3 + c}{4} \\ 
    & = x \left( \frac{1 - c}{4} - \frac{1.1 \log n}{\alpha_{min} n} \right) - \frac{3 + c}{4}
    \end{align*}

    \begin{equation*}
    x \left( \frac{1 - c}{4} - \frac{1.1 \log n}{\alpha_{min} n} \right) - \frac{3 + c}{4} \geq 0 \Rightarrow  x \tau - \frac{u_i(M)}{n} \geq 0
    \end{equation*}
    
    Hence, we want:
    \begin{align*}
    & x \left( \frac{1 - c}{4} - \frac{1.1 \log n}{\alpha_{min} n} \right) \geq \frac{3 + c}{4} \\  
    & \Leftrightarrow x \geq \frac{3 + c}{(1 - c) - \frac{4.4 \log n}{\alpha_{min} n}}
    \end{align*}
    
    For sufficiently large $n$ such that $\frac{4.4 \log n}{\alpha_{min} n} \leq \frac{1 - c}{2}$, we have
    \begin{equation*}
    x \geq 2 \cdot \frac{3 + c}{1 - c}
    \end{equation*}
    
    Hence, it suffices to set:
    \begin{align*}
    & r := \left\lceil 2 \cdot \frac{3 + c}{1 - c} \right\rceil \quad \text{and} \quad x \geq r
    \end{align*}

    Let $(A_1^1, \dots, A_n^1)$ be any allocation of the remaining $y$ goods. The allocation $(A_1^0 \cup A_1^1, \dots, A_n^0 \cup A_n^1)$ is proportional with high probability:
    \begin{equation*}
    u_i(A_i^0 \cup A_i^1) \geq u_i(A_i^0) \geq \frac{u_i(M)}{n}, \quad \forall i \in N
    \end{equation*}
\end{proof}

\section{Details on Empirical Results}\label{sec:details-experiments}

This section provides the full experimental setup corresponding to the empirical results in Section~\ref{sec:empirical-results}. All code used for generating instances, running allocations, and producing figures is available at 

\begin{center}
    \textcolor{blue}{~\url{https://github.com/AprupKale/Aysmptotic-Fair-Division-Simulations}}.
\end{center}

\subsection{Algorithms Evaluated}
We compare two allocation procedures implemented in the repository:
\begin{itemize}
    \item \textbf{Non-sampling algorithm}: the baseline algorithm that uses each item's full utility vector over all $n$ agents.
    \item \textbf{Sampling algorithm}: for each item $j$, we randomly sample $s$ agents (without replacement) and allocate to one of these agents (to one with the highest utility / lowest disutility).
\end{itemize}

\subsection{Utility Distribution}
Each item's utility distribution is distinct and independently sampled for each agent. Several heterogeneous families of distributions are considered. The two families, whose results are presented are:

\paragraph{\texttt{beta\_uniform} mixture.}
For each item $j$, and for all agents $i$, $u_{i}(j)$ is drawn from:
\begin{itemize}
    \item a Beta$(\alpha,\beta)$ distribution, with $\alpha,\beta$ \textbf{resampled independently for every item}, or
    \item a Uniform$[a,b]$ distribution, with $a,b$ likewise resampled per item.
\end{itemize}

\paragraph{\texttt{normal\_uniform} mixture.}
For each item $j$, and for all agents $i$, $u_{i}(j)$ is drawn from:
\begin{itemize}
    \item a truncated (values are truncated to $[0,1]$ and then normalized so that $\Pr([0, 1]) = 1$ for the distribution) Normal$(\mu,\sigma^2)$ value, with $(\mu,\sigma)$ resampled independently per item, or
    \item a Uniform$[a,b]$ value as above.
\end{itemize}

All the distributions are implemented in \texttt{dists.py}.

\subsection{Multiple Trials and Averaging}
For each $(n, m, s)$ triple, we repeat the experiment across multiple seeded trials. Seeds are explicitly controlled in the repository to guarantee reproducibility across algorithms. Each plotted point in Section~\ref{sec:empirical-results} represents the \textbf{mean} over trials; the code also computes standard deviations, though we omit error bars for figure clarity.

\subsection{Metrics Computed}
Each trial produces the following metrics:

\begin{itemize}
    \item \textbf{Worst envy ratio}: for each agent $i$, compute the maximum envy $E_i = \max_{k \neq i} u_i(A_k) - u_i(A_i)$ when positive, normalized by $u_i(A_i)$; the reported value is $\max_i (1 + \nicefrac{E_i}{u_i(A_i)})$.
    \item \textbf{Fraction of agents with envy}: the fraction of agents for whom $E_i > 0$.
    \item \textbf{Welfare ratio}: the total welfare of the sampled allocation divided by the welfare of the non-sampled allocation for the same trial.
\end{itemize}

All metrics are computed using the same utility matrix to ensure direct comparability across algorithms.

\subsection{Figure Generation}
All plots are generated using the scripts in \texttt{visualize\_results.py}, which load cached experiment results if they already exist. To avoid recomputation, each trial is stored as a \texttt{json} file indexed by $(n, m, s)$ and distribution type. 


\end{document}